%% file: NDSS_Full_version.tex
\documentclass[conference]{IEEEtran}

\usepackage{cite}

\ifCLASSINFOpdf
\else
\fi
\ifCLASSOPTIONcompsoc
 \usepackage[caption=false,font=normalsize,labelfont=sf,textfont=sf]{subfig}
\else
 \usepackage[caption=false,font=footnotesize]{subfig}
\fi

\usepackage{algpseudocode}
\usepackage[ruled,linesnumbered,boxed,commentsnumbered]{algorithm2e}
\usepackage{multirow} 
\usepackage{amsfonts}
\usepackage{graphicx}
\usepackage{makecell}
\usepackage{threeparttable}
\usepackage{tikz}
\usepackage{amsmath}
\usepackage{amsthm}
\usepackage{float}
\usepackage{diagbox}
\usepackage{soul}

\usepackage{tabularray}

\usepackage[table]{xcolor}
\usepackage{colortbl}

\usetikzlibrary{shapes.geometric, shapes.arrows, positioning, fit, calc, backgrounds, shadows}

\usepackage{pifont}

\usepackage{hyperref}

\usepackage{marvosym}

\newtheorem{definition}{Definition}
\newtheorem{theorem}{Theorem}
\newtheorem{lemma}{Lemma}

\begin{document}
%
\title{\textit{Aspire}: Practical Batch Queries on Dynamic Encrypted Database with Full Leakage Suppression
}



\author{\IEEEauthorblockN{Hao Nie$^{\dagger}$, Wei Wang$^{\dagger}$\textsuperscript{(\Letter)}, Peng Xu$^{\dagger}$\textsuperscript{(\Letter)}, Haoming Yan$^{\dagger}$, Laurence T. Yang$^{\ddagger}$, and Kaitai Liang$^{\S}$}
	\IEEEauthorblockA{$^{\dagger}$Huazhong University of Science and Technology, $^{\ddagger}$Zhengzhou University, \\$^{\ddagger}$St. Francis Xavier University, $^{\S}$University of Turku, and $^{\S}$TU Delft\\
		\{nie, viviawangwei, xupeng, yhm361426852\}@hust.edu.cn, ltyang@gmail.com, kaitai.liang@utu.fi}\thanks{\Letter Wei Wang and Peng Xu are the corresponding authors.}}


%


\IEEEoverridecommandlockouts
\makeatletter\def\@IEEEpubidpullup{6.5\baselineskip}\makeatother
\IEEEpubid{\parbox{\columnwidth}{
		This is the full version of the paper accepted at the Network and Distributed System Security (NDSS) Symposium 2027, 22 - 26 March 2027, Seoul, Republic of Korea.
}
\hspace{\columnsep}\makebox[\columnwidth]{}}

\maketitle

\begin{abstract}
Leakage-abuse attacks against encrypted database systems exploit semantic leakage and pose a severe threat to the confidentiality of plaintext data and queries. To address this problem, we propose a zero-leakage encrypted database that rules out leakage-abuse attacks, which fundamentally rely on semantic leakage. We first distinguish between two kinds of zero-leakage security. \textit{Ideal zero-leakage} forbids the disclosure of any information beyond public parameters, whereas \textit{restricted zero-leakage} additionally permits the leakage of metadata (such as database size and query timing). We show that restricted zero-leakage is sufficient to thwart leakage-abuse attacks and is achievable in practice. We then formally define adaptive restricted zero-leakage security and present \textit{Aspire}, an encrypted database that satisfies this notion. Aspire integrates probabilistic batch code and minimal perfect hashing to support efficient batch keyword queries and dynamic updates while revealing only non-semantic metadata. Experimental results demonstrate the practicality and efficiency of Aspire, showing that Aspire can achieve an amortized query latency of less than 1.38 ms (7.53 s for a batch of 5461 keyword queries), with an amortized communication cost of approximately 2.95 KB (16.1 MB in total), on a $2^{22}$ key-value pairs database.
\end{abstract}


%
\IEEEpeerreviewmaketitle

\section{Introduction}

 
Cloud storage systems such as Amazon Web Services~\cite{aws_edb} and MongoDB~\cite{MongoDB} widely rely on encrypted search primitives, including searchable symmetric encryption (SSE), to enable efficient query processing over encrypted data. 
State-of-the-art SSE schemes inevitably leak access and search patterns, which enable powerful leakage-abuse attacks (LAAs) that recover user queries and even reconstruct the whole database 
\cite{DBLP:conf/ndss/IslamKK12, DBLP:conf/ccs/CashGPR15, DBLP:conf/ndss/BlackstoneKM20, DBLP:conf/ccs/NingHPYL0D21}.

Prior defenses attempt to mitigate leakage without sacrificing efficiency. Forward and backward security reduce linkability across updates~\cite{DBLP:conf/ndss/StefanovPS14,DBLP:conf/ccs/Bost16,DBLP:conf/ccs/BostMO17}, yet recent works demonstrate that residual leakage remains sufficient for accurate inference attacks \cite{DBLP:conf/ccs/Nie00CYCL25,DBLP:conf/ccs/XuZXYW23}. 
This reveals a fundamental limitation: \textbf{Efficient encrypted search inherently leaks structural information that adversaries can exploit}. 

A natural countermeasure is to eliminate this leakage entirely by enforcing oblivious access. 
Oblivious RAM (ORAM) hides access patterns through continuous reshuffling and re-encryption \cite{DBLP:journals/jacm/GoldreichO96,DBLP:journals/jacm/StefanovDSCFRYD18}, and has been incorporated into leakage-suppressed SSE constructions \cite{DBLP:conf/crypto/KamaraMO18,DBLP:conf/eurocrypt/GeorgeKM21}. 
Alternatively, private information retrieval (PIR) enables data retrieval without disclosing specific access indices \cite{DBLP:conf/focs/ChorGKS95}. 
These approaches entail substantial overhead: ORAM constructions require at least $\Omega(\log N)$ bandwidth per access, while PIR shifts the burden to server-side computation and struggles to support dynamic or keyword-based queries efficiently \cite{DBLP:conf/uss/MahdaviK22,DBLP:conf/sp/LiuLWR24}. 
This reveals a critical trade-off: \textbf{Strong mitigation against LAA requires heavy cryptographic mechanisms, yet practical systems prioritize efficiency and flexibility}. 
Reconciling these competing objectives remains a formidable open problem.

In this work, we distinguish between two tiers of zero-leakage security: the idealized \textit{ideal zero-leakage} security and the more pragmatic \textit{restricted zero-leakage} security. 
Both notions are formalized to provide comprehensive defense against existing LAAs. 
We then present a concrete cryptographic construction that realizes these security guarantees. 


\noindent\textbf{Zero Leakage Notions: Ideal vs. Restricted.}
Ideal zero-leakage guarantees that no information is revealed beyond the system’s public parameters. Beyond thwarting LAA-related patterns, this notion encompasses the concealment of operational metadata, such as database size and the timing, number, and type of operations. However, such strong security incurs significant efficiency overhead, limiting practicality in large-scale or dynamic deployments. To achieve a better balance with practical efficiency, we introduce restricted zero-leakage, which permits the disclosure of operational metadata while strictly concealing all patterns related to the underlying plaintexts. We summarize the relationships between these security notions, leakage profiles, and attacks in Figure~\ref{fig:summary_of_notion}.

\begin{figure*}
    \centering

\definecolor{bgred}{RGB}{255, 235, 235}     
\definecolor{bggreen}{RGB}{235, 250, 235}   
\definecolor{arrowblue}{RGB}{180, 195, 225} 
\definecolor{mypurple}{RGB}{0, 180, 0}  
\definecolor{myorange}{RGB}{100, 0, 150}    
\definecolor{mycyan}{RGB}{0, 180, 200}      
\def\colTwoX{4.7}
\def\colThreeX{9.5}
\def\ThreesubTwo{3.8}
\begin{tikzpicture}[
    node distance=1cm,
    font=\rmfamily,
    txt/.style={align=center},
    redarrow/.style={
        single arrow, draw=none, fill=red, 
        single arrow tip angle=110,
        minimum width=0.7cm,
        single arrow head extend=0.15cm,
        minimum height=0.6cm, shape border rotate=270
    },
    bluearrow/.style={
        single arrow, draw=none, fill=cyan!80!blue, 
        single arrow tip angle=110,
        minimum width=0.7cm,
        single arrow head extend=0.15cm,
        minimum height=1cm, shape border rotate=270
    },
    schemebox/.style={draw=none, align=center, inner sep=2pt},
]

\node[single arrow, fill=arrowblue!50, draw=black!60, 
      minimum height=14.5cm, minimum width=0.8cm, 
      single arrow head extend=0.1cm, opacity=0.7,
      anchor=west] (headerarrow) at (-2.5, 0) {};
\node[txt,anchor=east,font=\footnotesize] at (-2.45,0) {Practical\\ but Leakage};
\node[txt,anchor=west,font=\footnotesize] at (11.8,0) {Zero-leakage but\\ Impractical};
\node[anchor=center, font=\small] (PL) at (0, 0) {Pattern Leakage};
\node[anchor=center, font=\small] (RZL) at (\colTwoX, 0) { Restricted Zero-Leakage};
\node[anchor=center, font=\small] (GZL) at (\colThreeX, 0) { Ideal Zero-Leakage};

\node[anchor=center] at (0, 0.5) {Vulnerable to LAA};
\node[anchor=center] at (7.2,0.5) {Fully Against LAA};


\node[txt, below=0.6cm of PL.south, anchor=center, text width=5cm,font=\footnotesize] (scheme_left) {
    Schemes under F\&B security\cite{DBLP:conf/ccs/Bost16,DBLP:conf/ccs/BostMO17},\\
    AZL\cite{DBLP:conf/crypto/KamaraMO18}, Pancake\cite{DBLP:conf/uss/GrubbsKLBL0R20},\\
    VLH\&AVLH\cite{DBLP:conf/eurocrypt/KamaraM19}
};

\node[txt,below = 0cm of scheme_left,xshift=-3cm,yshift=0.75cm] (LK) {System};

\node[txt,below=0.63cm of RZL.south,anchor=center,font=\footnotesize] (scheme_mid_top) {
    ORAM\&OMAP\cite{DBLP:journals/jacm/StefanovDSCFRYD18,DBLP:journals/jacm/GoldreichO96,DBLP:journals/pvldb/CaoFLZFWXYR24}\\
    FZL\&Dyn. FZL\cite{DBLP:conf/crypto/KamaraMO18,DBLP:conf/eurocrypt/GeorgeKM21}\\
    {\color{mypurple}\textit{Aspire}} (Ours)
};

\node[txt, below=0.57cm of GZL.south,anchor=center,font=\footnotesize] (scheme_right_top) {
    
    EXH (subliminal mode)\cite{DBLP:journals/iacr/BahiKMN24}\\
    
};
\node[txt, below=0.74cm of GZL.south,anchor=center,font=\footnotesize] (scheme_right_top) {
    
    { }\\
    { }\\
{\color{mypurple}\textit{Aspire}}+Metadata Hiding (Ours)
};

\coordinate (center_right_1) at ($(RZL)!0.5!(GZL)-(0,0.65)$);
\node[
    draw=myorange, 
    thick, 
    dashed,
    minimum width=8.7cm, 
    minimum height=0.65cm, 
    inner sep=0pt,
    anchor=center,
    yshift=-1.3
] (box_oram) at (center_right_1) {};
\node[text=myorange, anchor=west,font=\small] at (box_oram.east) {ORAM Based};

\coordinate (center_right_2) at ($(RZL)!0.5!(GZL)-(-0,1.23)$);
\node[
    draw=mypurple, 
    thick, 
    dashed,
    minimum width=8.7cm, 
    minimum height=0.3cm, 
    inner sep=0pt,
    anchor=center
] (box_PIR) at (center_right_2) {};

\node[text=mypurple, anchor=west, xshift = -1cm,yshift=-0.35cm,font=\small] at (box_PIR.east) {PIR or ORAM Based};

\node[redarrow, below=0cm of scheme_left] (arrow1) {};
\node[redarrow, below=-0.05cm of scheme_mid_top] (arrow2) {};
\node[redarrow, below=-0.05cm of scheme_right_top] (arrow3) {};


\node[txt, below=0cm of arrow1,font=\small] (leak_left_1) {Public Parameters\\Operation Metadata${}^1$};
\node[txt, text=red!80!black, draw=red, dashed, thick, inner sep=1pt, below=0.1cm of leak_left_1,font=\small] (leak_left_2) {AP, SP, QC, VP, FVP, RSP, UL${}^2$};

\node[txt,below = 0cm of leak_left_1,xshift=-3cm,yshift=0.5cm] (LK) {Leakage};
\node[txt,below = 1.3cm of LK] {Attack};
\node[txt, below=0cm of arrow2,font=\small] (leak_mid) {Public Parameters\\Operation Metadata};
\node[
    below=0cm of leak_left_1, 
    inner sep=1pt,           
    draw=none,               
    text opacity=0           
] (leak_left_2) {};        

\node[txt, below=0cm of arrow3,font=\small] (leak_right) {Public Parameters\\}; 
\node[
    below=0cm of leak_left_1, 
    inner sep=1pt,           
    draw=none,               
    text opacity=0           
] (leak_left_2) {};        

\node[draw=mycyan, dashed, thick, fit=(leak_left_1)(leak_mid)(leak_right), inner sep=0pt] (box_leakage) {};


\node[redarrow, below=0.5cm of leak_left_1.south,anchor=north] (arrow4) {};

\coordinate (center_right) at ($(leak_mid)!0.5!(leak_right)$);
\node[bluearrow, below=0.2cm of box_leakage.south -| center_right] (arrow_big_blue) {};

\node[txt, below=0cm of arrow4, font=\footnotesize] {Known-data\cite{DBLP:conf/ccs/CashGPR15,DBLP:conf/ccs/NingHPYL0D21,DBLP:conf/ndss/IslamKK12,DBLP:conf/ndss/BlackstoneKM20,DBLP:conf/uss/EspirituKMP25},\\ Similar-data\cite{DBLP:conf/ccs/Nie00CYCL25,DBLP:conf/uss/OyaK22,DBLP:conf/uss/Nie00ZYL24,DBLP:conf/ccs/PouliotW16,DBLP:conf/uss/Damie0P21},\\ Injection \cite{DBLP:conf/uss/Zhang00YL23,DBLP:conf/ndss/BlackstoneKM20,DBLP:conf/uss/ZhangKP16,DBLP:conf/eurosp/PoddarWLP20}, and\\Database Reconstruction\cite{DBLP:conf/sp/LachariteMP18,DBLP:conf/ccs/KellarisKNO16,DBLP:conf/ccs/FalzonMACRST20,DBLP:conf/ccs/GuiJW19,DBLP:conf/uss/EspirituKMP25}};

\node[txt, below=0.1cm of arrow_big_blue] {No known LAA};

\begin{scope}[on background layer]
    
    
    \fill[bgred!50] (-2.4,0.8) rectangle (2.4,-5.6);
    
    \fill[bggreen!50] (2.4,0.8) rectangle (7.1,-5.6);

    \fill[bggreen!90] (7.1,0.8) rectangle (11.8,-5.6);
    
    \draw[dashed, black] (2.4,0.8) -- (2.4,-5.6);
\end{scope}
    
    
    
\end{tikzpicture}
    \begin{tablenotes}
        \footnotesize
        \item[1] {1 The public parameters include  security parameter, cryptographic setting descriptions, and system-wide configurations that are fixed at setup stage, all independent of database contents and operations. The operation metadata includes database size and the timing, number, and type of operations, which are non-semantic properties/correlation to plaintext stored data.} 
        \item[2] 2 ``AP'', ``SP'', ``QC'', ``VP'', ``FVP'', ``RSP'', ``UL'' represent access pattern, search pattern, query correlation, volume pattern, file volume pattern, response size pattern, and update leakage, respectively. 
        We only list the well-studied patterns and leakage is not limited to the aforementioned. Systems can suffer from a subset of leakages. 
    \end{tablenotes}
    \caption{Summary of LAA security landscape: notions, systems, leakage and attacks.}
    \label{fig:summary_of_notion}
    
\end{figure*}

\emph{Ideal but expensive:} While ideal zero-leakage removes all exploitable information leakage, the computational and communication overhead required to obfuscate operational metadata is often prohibitive for real-world applications. 
For example, masking operation timing requires maintaining a constant execution frequency~\cite{DBLP:journals/iacr/BahiKMN24}, 
which entails the continuous injection of dummy operations regardless of actual demands. 
Furthermore, concealing the database size requires a fixed capacity at system setup, which is fundamentally incompatible with modern databases that require dynamic scaling and elasticity. 


\emph{Fully against LAA but practical:} Restricted zero-leakage avoids the overhead of metadata obfuscation while providing robust defense against LAAs. 
Because existing LAAs primarily exploit semantic leakage, such as access, search, and volume patterns from an encrypted database and queries, to correlate encrypted data with auxiliary information, the primary security is the elimination of all plaintext-dependent semantic leakage. 
We demonstrate that constructions satisfying this security requirement can be realized with high efficiency.  
Several prior security definitions, such as fully zero-leakage in STE~\cite{DBLP:conf/crypto/KamaraMO18}, ORAM~\cite{DBLP:journals/jacm/StefanovDSCFRYD18}, and Oblivious map (OMAP)~\cite{DBLP:journals/pvldb/CaoFLZFWXYR24}, can be interpreted as specialized instances of restricted zero-leakage security.
Thus, we propose restricted zero-leakage as a unifying and definitive security upper bound for defending against LAA.
We provide a formal definition and a rigorous security analysis in Section \ref{sec:zero_leakage_sec}. 

\noindent\textbf{New construction.} 
We formalize the system model of a zero-leakage encrypted key-value database, encompassing $\mathsf{Setup}$ and $\mathsf{Query}$ protocols, where $\mathsf{Query}$ can support batch keyword searches. 
We decouple the update process into $\mathsf{UpdateLocal}$ and $\mathsf{Update}$, ensuring that client-server interaction occurs only during the $\mathsf{Update}$ phase, while $\mathsf{UpdateLocal}$ manages immediate local state changes. We further introduce a formal security model that utilizes leakage functions $\mathcal{L}_\mathsf{Setup}$ and $\mathcal{L}_\mathsf{Update}$, which take only the database size as input, while $\mathcal{L}_\mathsf{Query}$ and $\mathcal{L}_\mathsf{UpdateLocal}$ take $\bot$, satisfying the restricted zero-leakage standard. {Here, $\bot$ denotes empty leakage, i.e., the server learns no information from the corresponding leakage function beyond the occurrence of the protocol invocation.}

Building on the model, we present \textit{Aspire}, an encrypted key-value database system that achieves restricted zero-leakage security and can serve as a foundation for other zero-leakage databases supporting various query functionalities.
Technically, \textit{Aspire} leverages probabilistic batch code (PBC) and minimal perfect hash (MPH) to transform a batch of keyword queries into calls to multiple oblivious index accesses (instantiated with PIR or ORAM). 
For client updates, \textit{Aspire} stashes them locally before pushing to the server via an independent update protocol that minimizes the impact on latency related to updates and search queries. 
\textit{Aspire} also employs a synchronization mechanism compatible with batched queries that reveals the encrypted database size only once every $\alpha$ $\mathsf{Update}$.
Because encrypted key-value stores are fundamental building blocks in both industrial and academic systems (Section~\ref{app:kvsystems}), we demonstrate the versatility of \textit{Aspire} by adapting it to the SSE setting (Section~\ref{sec:discussion}).

Finally, we conduct extensive experiments to evaluate the efficiency of \textit{Aspire}. 
We instantiate \textit{Aspire} on top of the PIR protocol Spiral\cite{DBLP:conf/sp/MenonW22}, the batch PIR protocol PIRANA\cite{DBLP:conf/sp/LiuLWR24}, and PathORAM\cite{DBLP:journals/jacm/StefanovDSCFRYD18}.  
On a database with $2^{22}$ key-value pairs, \textit{Aspire} (based on PIRANA) costs about 5.6 MB of local storage and achieves an amortized query latency of less than 1.38 ms (7.53 s for a batch of 5461 queries), with an amortized communication cost of approximately 2.95 KB (16.1 MB total).
Compared to the state-of-the-art OMAP~\cite{DBLP:journals/pvldb/CaoFLZFWXYR24}, \textit{Aspire} achieves significantly lower communication overhead and total latency for batch searches. 

{Our contributions can be summarized as follows:}
\begin{itemize}
    \item We introduce restricted zero-leakage security, which hides all plaintext-dependent leakage while allowing only non-semantic operation metadata, and we show that it is sufficient to provide robust protection against existing LAAs.

    \item We formalize a dynamic encrypted key-value database model that supports batch keyword queries and updates under restricted zero-leakage. We then propose Aspire, which integrates PBC, MPH, and oblivious index access to achieve high-performance batch keyword queries and dynamic updates.
    
    \item We demonstrate through experiments that Aspire yields superior efficiency over prior state-of-the-art secure baselines.
\end{itemize}


{

\section{Related Work}
\label{app:related_work}

\begin{table}[tp]
    \centering
    \begin{threeparttable}
    \caption{Existing leakages and attacks.}
    \label{tab:summary_of_leakage_and_attacks}
    \begin{tabular}{c@{\hspace{3pt}}c}
    \hline
        Leakage & Attacks \\
        \hline
        AP  & \makecell{Peekaboo\cite{DBLP:conf/ccs/Nie00CYCL25},IHOP\cite{DBLP:conf/uss/OyaK22},Jigsaw\cite{DBLP:conf/uss/Nie00ZYL24},Graphm\cite{DBLP:conf/ccs/PouliotW16},
        \\ 
        Score Attack\cite{DBLP:conf/uss/Damie0P21},Count\cite{DBLP:conf/ccs/CashGPR15},Leap\cite{DBLP:conf/ccs/NingHPYL0D21},Subgraph\cite{DBLP:conf/ndss/BlackstoneKM20}, \\  ZKP\cite{DBLP:conf/uss/ZhangKP16},LMP\cite{DBLP:conf/sp/LachariteMP18},KKNO\cite{DBLP:conf/ccs/KellarisKNO16},FMAC\cite{DBLP:conf/ccs/FalzonMACRST20},\\IKK\cite{DBLP:conf/ndss/IslamKK12},PolySys\cite{DBLP:conf/uss/EspirituKMP25}
        }\\
        
        \hline
        SP & \makecell{Sap\cite{DBLP:conf/uss/OyaK21}, Freq\cite{DBLP:journals/isci/LiuZWT14}, BVMA\cite{DBLP:conf/uss/Zhang00YL23},\\IHOP\cite{DBLP:conf/uss/OyaK22},PolySys\cite{DBLP:conf/uss/EspirituKMP25}}\\
        \hline
         QC & IHOP\cite{DBLP:conf/uss/OyaK22}\\
        \hline
        VP  & \makecell{ Jigsaw\cite{DBLP:conf/uss/Nie00ZYL24},IHOP\cite{DBLP:conf/uss/OyaK22},Sap\cite{DBLP:conf/uss/OyaK21}, BVMA\cite{DBLP:conf/uss/Zhang00YL23},\\PWLP\cite{DBLP:conf/eurosp/PoddarWLP20},Count\cite{DBLP:conf/ccs/CashGPR15}, GLMP\cite{DBLP:conf/ccs/GrubbsLMP18}, KKNO\cite{DBLP:conf/ccs/KellarisKNO16},\\GJW\cite{DBLP:conf/ccs/GuiJW19},PolySys\cite{DBLP:conf/uss/EspirituKMP25}}\\
        \hline
       
        FVP  & Peekaboo\cite{DBLP:conf/ccs/Nie00CYCL25}, FMA\cite{DBLP:conf/ccs/XuZXYW23}\\
        \hline
        RSP& BVA\cite{DBLP:conf/uss/Zhang00YL23}, Decoding\cite{DBLP:conf/ndss/BlackstoneKM20}\\
        \hline
        UL & VIA\cite{DBLP:conf/ccs/XuZXYW23}, UF\&UFID\cite{DBLP:conf/codaspy/HaltiwangerH24}\\
        \hline
        \makecell{Operation\\ Metadata} & No known leakage-abuse attack\\
        \hline
    \end{tabular}

    
    \end{threeparttable}
\end{table}

\subsection{Prior LAAs}
\label{app:review_of_LBA}
This section presents a comprehensive overview of previous LAAs and demonstrates that such attacks are ineffective against systems that satisfy restricted zero-leakage security.

Currently, LAAs primarily comprise known-data attacks\cite{DBLP:conf/ccs/CashGPR15,DBLP:conf/ccs/NingHPYL0D21,DBLP:conf/ndss/IslamKK12,DBLP:conf/ndss/BlackstoneKM20,DBLP:conf/uss/EspirituKMP25}, similar-data attacks\cite{DBLP:conf/ccs/Nie00CYCL25,DBLP:conf/uss/OyaK22,DBLP:conf/uss/Nie00ZYL24,DBLP:conf/ccs/PouliotW16,DBLP:conf/uss/Damie0P21}, injection attacks\cite{DBLP:conf/uss/Zhang00YL23,DBLP:conf/ndss/BlackstoneKM20,DBLP:conf/uss/ZhangKP16,DBLP:conf/eurosp/PoddarWLP20}, and database reconstruction attacks targeting range queries\cite{DBLP:conf/sp/LachariteMP18,DBLP:conf/ccs/KellarisKNO16,DBLP:conf/ccs/FalzonMACRST20,DBLP:conf/ccs/GuiJW19,DBLP:conf/uss/EspirituKMP25}.  
In known-data attacks, the adversary is assumed to possess partial or complete plaintexts of documents stored in the client’s database. The adversary exploits the leakage patterns revealed during search operations to correlate queries with keywords and to match encrypted documents with the corresponding known plaintexts.  
Similar-data attacks allow the adversary to have a set of documents that are statistically similar to those in the client’s database or estimations of the client’s query distribution. These attacks rely on aligning statistical information derived from auxiliary, similar datasets with the leakage patterns produced by query and update operations.  
In injection attacks, the adversary is assumed to be capable of inserting carefully crafted files exhibiting particular patterns into the client’s database. By monitoring the resulting changes in the leakage patterns, the adversary can infer which keywords are being queried.  
Against range queries, database reconstruction attacks exploit pattern leakage to infer or reconstruct the distribution of records in the database. The effectiveness of these attacks can be further enhanced when the adversary has access to additional auxiliary information about the encrypted database.
The patterns and corresponding attacks utilizing them are listed in Table \ref{tab:summary_of_leakage_and_attacks}.

As shown in Figure \ref{fig:attacker}, LAAs correlate observed leakage patterns of $op$ (i.e., search, addition, deletion, query-response, etc.) with known patterns from auxiliary information. This correlation enables the adversary to recover client queries or use leaked patterns to reconstruct the structure of the range-query-supporting encrypted database.
Under the restricted zero-leakage conditions, the adversary can only obtain operation metadata, such as execution timing and the number of operations performed. This information cannot be systematically associated with the specified auxiliary knowledge and, consequently, does not contribute to the database reconstruction or query recovery.

We note, however, that a very strong adversary possessing substantial auxiliary information, such as prior knowledge that the client is likely to query specific restaurants during typical dining hours, may still infer the underlying query with high accuracy. Such inference derives from information external to the cryptographic mechanism itself and is orthogonal to the security guarantees provided by the applied encryption scheme.

\begin{figure}[tp]
    \centering
    \includegraphics[width=0.9\linewidth]{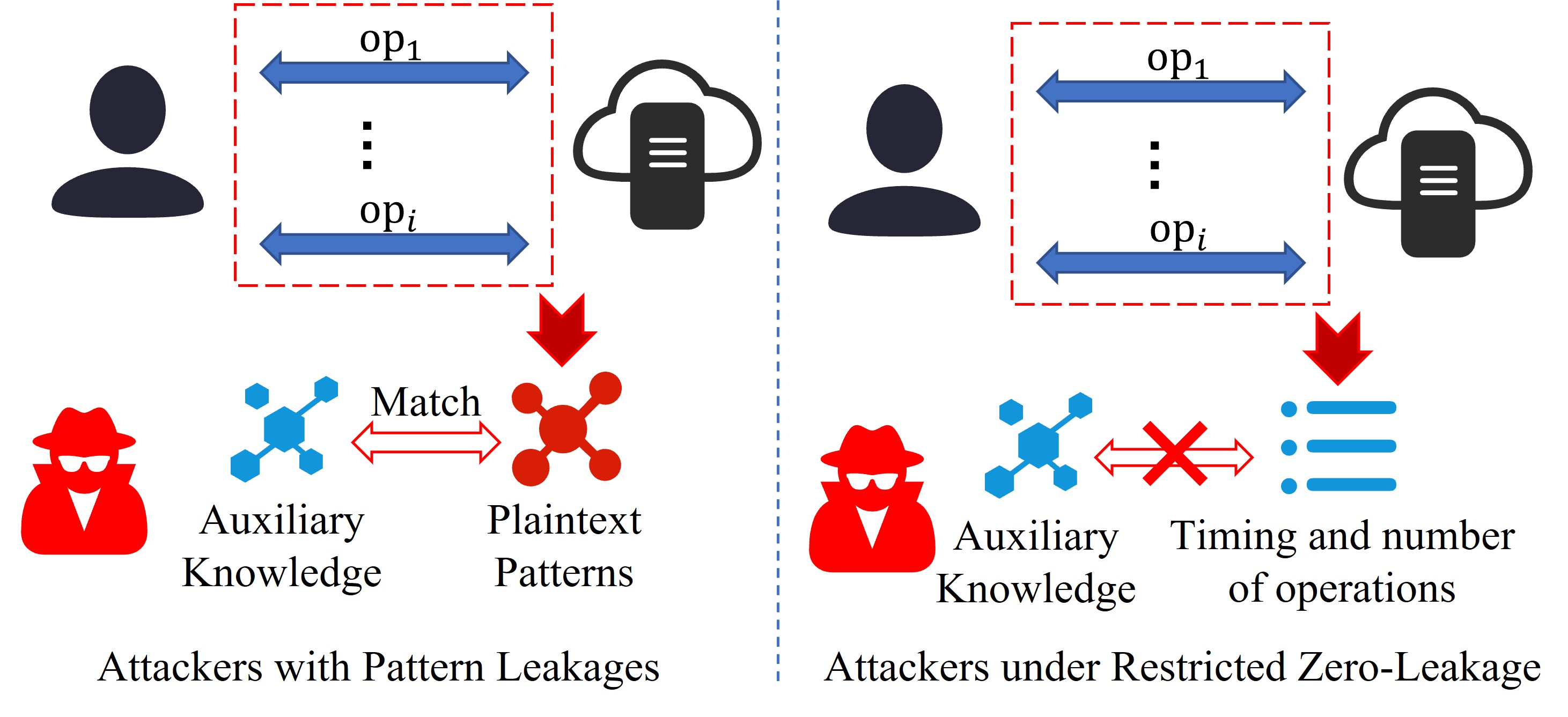}
    \caption{LAA with pattern leakages vs. restricted zero-leakage.}
    \label{fig:attacker}
\end{figure}

\subsection{Review of Key-value Storage Systems}
\label{app:kvsystems}
Key-value storage systems~\cite{cloudfront_kvs,redis,DBLP:conf/sosp/DeCandiaHJKLPSVV07,DBLP:conf/uss/GrubbsKLBL0R20,DBLP:conf/icdcs/ZhangXMJ22,DBLP:journals/tifs/ZhangMWGJ24,DBLP:journals/pvldb/ZhengXWWHQLR24,DBLP:journals/pacmmod/MaiyyaVAAK23} serve as the architectural backbone for large-scale platforms like DynamoDB~\cite{aws_dynamodb}, Bigtable~\cite{google_bigtable}, and RocksDB~\cite{rocksdb}. With security considerations, SSE and encrypted document databases\cite{DBLP:conf/ccs/BostMO17,DBLP:conf/ccs/Bost16,DBLP:journals/tkde/VoYSLNW23,DBLP:journals/popets/GuiPPW24,DBLP:conf/uss/Chamani0KD22,DBLP:conf/ccs/ChamaniPPJ18,DBLP:conf/ndss/WuL23} adopt encrypted key-value databases as the underlying foundational layer. In most prior studies, the access patterns of encrypted KV stores are exposed and can therefore be exploited under LAA.

To hide the access pattern of key-value storage, Pancake\cite{DBLP:conf/uss/GrubbsKLBL0R20} uses frequency smoothing to transform plaintext accesses into uniformly distributed encrypted accesses. Assuming client queries follow a fixed distribution, an adversary cannot distinguish the transcript from a sequence of uniformly distributed accesses to random bit strings. However, when queries are correlated, IHOP can successfully attack Pancake\cite{DBLP:conf/uss/OyaK22}.  
Building on Pancake, Zhang et al.\cite{DBLP:conf/icdcs/ZhangXMJ22,DBLP:journals/tifs/ZhangMWGJ24} proposed $K$-indistinguishable frequency smoothing and key–value packing, which reduce costs while allowing additional frequency leakage.
Jia et al.\cite{DBLP:conf/uss/Jia0K24} further explored a three-way tradeoff between storage, bandwidth, and security on Pancake and proposed variants of Pancake to meet the corresponding trade-offs.
Zheng et al.\cite{DBLP:journals/pvldb/ZhengXWWHQLR24}  proposed SWAT, introducing $\theta$-query decorrelation and a sampling-pool technique to mitigate query-correlated leakage in Pancake.  
Beyond frequency-smoothing schemes, Maiyya et al.\cite{DBLP:journals/pacmmod/MaiyyaVAAK23} introduced $(\alpha,\beta)$-uniformity, where data written to the server is accessed before $\alpha$ other accesses, and cache data is not written back until after $\beta$ other accesses, and proposed Waffle to achieve this.}

\section{Zero-Leakage in Encrypted Database}
\label{sec:zero_leakage_sec}

Most existing methods against leakage primarily depend on oblivious access techniques and structured-encryption compilers. 
While a thorough discussion is postponed to Section \ref{sec:existing_zero_leakage}, we begin by formalizing the notion of “zero-leakage” in the setting of encrypted databases.

\subsection{Ideal and Restricted Zero-Leakage}
\label{sec:zero_leakage}

Kamara et al. define \textit{zero-leakage} encrypted databases as those whose leakage depends only on the security parameter or other public parameters~\cite{DBLP:conf/crypto/KamaraMO18}.
They further define fully zero-leakage for structured encryption and dynamic structured encryption, which informally allows database size leakage during setup and rebuild but leaks nothing during queries~\cite{DBLP:conf/crypto/KamaraMO18,DBLP:conf/eurocrypt/GeorgeKM21}.
Building on this, Bahi et al. introduce \textit{subliminal leakage}, which treats operation timing as leakage~\cite{DBLP:journals/iacr/BahiKMN24}, 
and propose subliminal encrypted multi-maps that mix real and dummy operations sent at fixed, independent time intervals.

Inspired by these works, we introduce \textit{ideal zero-leakage} as an ideal security baseline that rules out all forms of leakage from the encrypted database.
\\ 
\emph{\textbf{Ideal Zero-Leakage (Informal).}
An encrypted database exhibits \textit{ideal zero-leakage} if the execution of any sequence of database operations reveals no observable information beyond what is explicitly allowed by the public parameters.
}

\textbf{Public parameters} include the security parameter, cryptographic scheme descriptions, and system-wide configurations that are fixed at setup time and are independent of the database contents and operations.
The ideal zero-leakage does not allow any information about the plaintext to be leaked. 
Moreover, it excludes the operation metadata from the client's behavior and the database parameters.
\textbf{Operation metadata} includes database size, the types, timing, and number of all operations (e.g., search queries, update queries, and other interactions with the server).

Existing LAAs primarily exploit \textit{semantic} leakage, such as access, search, and volume patterns, rather than operation metadata, as it does not directly convey semantic information about the underlying plaintext. 
Operation metadata is largely driven by external and application-dependent factors and, on its own, does not provide a stable plaintext-dependent structure that enables query and data reconstruction.
Thus, hiding the metadata is unnecessary for defending against LAA. 
More discussions are in Section \ref{app:review_of_LBA}.

Also, suppressing operation metadata would require strong countermeasures that impose significant performance overhead and may fundamentally alter system behavior. 
For example, concealing operation timing would necessitate continuously executing dummy operations~\cite{DBLP:journals/iacr/BahiKMN24}. {Related works have explored hiding individual dimensions of operation metadata. For example, \cite{DBLP:conf/eurocrypt/GeorgeKM21} includes mechanisms that make searches and updates indistinguishable, while ObliviSync~\cite{DBLP:conf/ndss/AvivCMR17} hides the timing and occurrence of writes through fixed-interval, fixed-rate synchronization.}

Thus, we consider a more practical yet still strong security notion, which we refer to as \textit{restricted zero-leakage}.
\\
\emph{\textbf{Restricted Zero-Leakage (Informal).}
An encrypted database exhibits \textit{restricted zero-leakage} if the execution of any sequence of database operations reveals no observable information beyond what is explicitly allowed by the public parameters and operation metadata.
}

Restricted zero-leakage does not constrain an encrypted database scheme to eliminate specific leakage types, instead, it enforces that no semantic information about the plaintext is revealed. 
The prior works on zero-leakage are consistent with restricted zero-leakage~\cite{DBLP:conf/crypto/KamaraMO18,DBLP:conf/eurocrypt/GeorgeKM21}.

The restricted zero-leakage, therefore, provides a strong measure against LAAs.
Existing attacks on encrypted databases fundamentally rely on semantic information exposed through the leakage, which enables the attacker to correlate the observed leakage with prior knowledge of plaintext distributions. 
By eliminating all meaningful semantic clues about the plaintext, restricted zero-leakage rules out attacks exploiting such leakage (see Section \ref{app:review_of_LBA} for more discussion). 

To summarize, we distinguish between ideal zero-leakage, an ideal yet impractical notion, and restricted zero-leakage, a practical yet strong security bound against LAAs. 
The rest of this paper centers on the formalization and realization of restricted zero-leakage. 

\subsection{{Threat Model} and Restricted Zero-Leakage Security}
\label{sec:restricted_zero_leakage}
Based on the above discussions, we provide the formal definition of restricted zero-leakage security.
We consider the scenario of an encrypted database where a client wants to outsource its database (such as a set of key-value pairs or a set of files) to an untrusted server. 
The client interacts with the server to initialize the encrypted database and subsequently performs searches, insertions, deletions, updates, and other database operations.
The server is honest-but-curious; it follows the protocol correctly but observes all ciphertexts transmitted by the client\footnote{This work focuses on the confidentiality of the database and queries, rather than the correctness or availability of outsourced computation. Arbitrary deviations of malicious servers are outside the scope.}.
The server tries to infer information about the database's contents or the operations.

Restricted zero-leakage requires that any information observable to the server during system execution be non-semantic. 
We therefore define the notion through a real/ideal simulation experiment. In the real experiment, the adversary observes the actual server-side view of the protocol.
In the ideal experiment, a simulator is given only the restricted leakage and must generate an indistinguishable server-side view. If such a simulator exists, then the real execution reveals no plaintext-dependent semantic information beyond the allowed metadata.
{
\begin{definition}[Adaptive Restricted Zero-Leakage Security] Let $\Sigma$ be an encrypted database scheme.
Let $\mathsf{Op}_{\Sigma}$ denote the set of all operations supported by $\Sigma$ after setup.
Let $\mathcal L_{\mathsf{rz}}=(\mathcal L_{\mathsf{Setup}},\mathcal L_{\mathsf{Op}})$ be a restricted leakage function, where $\mathcal L_{\mathsf{Setup}}$ captures the leakage during setup and $\mathcal L_{\mathsf{Op}}$ captures the leakage of an arbitrary operation $op\in\mathsf{Op}_{\Sigma}$.
$\mathcal L_{\mathsf{rz}}$ outputs only non-semantic metadata, such as public parameters, operation types, operation timing, and the database size.
For a PPT adversary $\mathcal A$, we define the following two experiments.

\textbf{Real}$_{\Sigma,\mathcal A}(1^\lambda)$: The adversary $\mathcal A$ chooses an initial database $DB$\footnote{Allowing the adversary to choose the initial database strengthens the security definition by covering the plaintext instance most favorable to distinguishing.}. The challenger runs the real setup protocol of $\Sigma$ on $DB$ and gives $\mathcal A$ the complete server-side view of the setup. Then, for $i=1,\ldots,q(\lambda)$, {where $q$ is polynomial in $\lambda$}, $\mathcal A$ adaptively chooses an operation $op_i$ of $\Sigma$ {based on the previous transcripts}, such as a search, insertion, deletion, or update. The challenger executes the real protocol on the current database state and returns to $\mathcal A$ the complete server-side view of this operation. Finally, $\mathcal A$ outputs a bit $b$.

\textbf{Ideal}$_{\mathcal S,\mathcal A,\mathcal L_{\mathsf{rz}}}(1^\lambda)$: The adversary $\mathcal A$ chooses an initial database $DB$. The simulator $\mathcal S$ is given only $\mathcal L_{\mathsf{Setup}}(DB)$ and generates a simulated server-side setup view for $\mathcal A$. Then, for each adaptive operation $op_i$ chosen by $\mathcal A$, the simulator is given only the corresponding leakage $\mathcal L_{\mathsf{Op}}(op_i)$, and generates a simulated server-side view of the operation. Finally, $\mathcal A$ outputs a bit $b'$.

We say that $\Sigma$ satisfies adaptive restricted zero-leakage security with respect to $\mathcal L_{\mathsf{rz}}$ if, for every PPT adversary $\mathcal A$, there exists a PPT simulator $\mathcal S$ such that
\[
\left|
\Pr[\textbf{Real}_{\Sigma,\mathcal A}(1^\lambda)=1]
-
\Pr[\textbf{Ideal}_{\mathcal S,\mathcal A,\mathcal L_{\mathsf{rz}}}(1^\lambda)=1]
\right|
\leq \mathsf{negl}(\lambda).
\]
\label{def:rzl}
\end{definition}
The leakage function $\mathcal L_{\mathsf{rz}}$ is required to be non-semantic. It may depend on public parameters and operation metadata, but not on the plaintext contents or  queried keywords. Any server-side view that can be simulated from $\mathcal L_{\mathsf{rz}}$ contains no plaintext-dependent semantic information. {In a scheme, if the adversary can infer any information not included in the permitted leakage defined by the leakage function, the scheme violates the security definition.}
}

Restricted zero-leakage does not limit the types of concrete leakage but ensures no semantic leakage about the plaintext. As for different schemes, the same leakage type may or may not relate to the plaintext.
For example, the current database size generally does not contain semantic information about operations if each addition (or deletion) operation results in the same change in database size.
However, if a scheme adopts some form of compression and adding different content of the same size changes the database size differently, the current database size actually conveys semantic information about the added content relative to the original database. 
Additionally, the restricted zero-leakage does not require hiding when an operation is occurring or the number of operations.

The definition is intentionally independent of specific system interfaces or leakage representations. 
The following sections formalize the system model for encrypted key-value databases alongside concrete leakage functions, establishing that any scheme $\Sigma$ adhering to these functions satisfies the restricted zero-leakage. 

{
\subsection{Existing Zero-Leakage Techniques and Their Limitations}
\label{sec:existing_zero_leakage}
Oblivious access is the standard way to suppress leakage and achieve restricted zero-leakage. 
ORAM, first formalized by Goldreich and Ostrovsky~\cite{DBLP:journals/jacm/GoldreichO96}, hides a client’s access pattern to remote storage, and many works improve its efficiency~\cite{DBLP:journals/jacm/StefanovDSCFRYD18,DBLP:conf/uss/RenFKSSDD15}. 
Tree-based schemes, especially Path ORAM~\cite{DBLP:journals/jacm/StefanovDSCFRYD18}, are among the most practical, but each access still reads and writes a root-to-leaf path, incurring $O(\log n)$ bandwidth. 
Non-recursive ORAM also requires a client-side position map of size $\Theta(n)$, while recursive ORAM lowers this storage at the cost of extra recursion levels and sufficiently large blocks. 
Later schemes such as Ring ORAM~\cite{DBLP:conf/uss/RenFKSSDD15} and Circuit ORAM~\cite{DBLP:conf/ccs/WangCS15} further optimize bandwidth or circuit complexity, but still incur significant communication overhead. 
Thus, ORAM strongly suppresses leakage but remains expensive for large-scale or batched keyword queries.
OMAP generalizes ORAM from array-like storage to oblivious key-value maps and dynamic data structures~\cite{DBLP:journals/pvldb/CaoFLZFWXYR24}. 
A natural construction builds a balanced search tree over ORAM, incurring $O(\log n)$ logical node accesses per map operation and thus about $O(\log^2 n)$ communication and often $O(\log n)$ interaction rounds~\cite{DBLP:conf/ccs/WangNLCSSH14,DBLP:conf/sp/RocheAC16,DBLP:conf/uss/TinocoGS23}. 
Specialized OMAP schemes reduce this to roughly $O(\log n)$ by embedding position information into oblivious data structures~\cite{DBLP:journals/pvldb/CaoFLZFWXYR24}, but they still incur multiple rounds of interaction and have weak support for efficient batch keyword queries.
ORAM and OMAP can be viewed as specialized instances of restricted zero-leakage, as they may reveal non-semantic operation metadata such as the number and timing of accesses, but they hide the plaintext-dependent access pattern.

For structured encryption, leakage suppression is often formulated as a compiler or suppressor that transforms an existing scheme into another scheme with a reduced leakage profile~\cite{DBLP:conf/crypto/KamaraMO18,DBLP:conf/eurocrypt/GeorgeKM21,DBLP:conf/asiacrypt/GeorgeKME25}. 
These transformations are largely inspired by the Square-Root ORAM paradigm~\cite{DBLP:journals/jacm/GoldreichO96,DBLP:conf/crypto/KamaraMO18}. 
While such compilers can provide strong leakage-suppression guarantees and avoid using full ORAM as a black-box layer, they still inherit square-root-ORAM-style costs, including non-trivial storage, cache management, and expensive rebuilds~\cite{DBLP:conf/crypto/KamaraMO18,DBLP:conf/eurocrypt/GeorgeKM21}. 
Some of the constructions, such as FZL and Dynamic FZL\cite{DBLP:conf/crypto/KamaraMO18,DBLP:conf/eurocrypt/GeorgeKM21}, fall within restricted zero-leakage.

Some works further hide operation metadata to approach ideal zero-leakage. 
Bahi et al.~\cite{DBLP:journals/iacr/BahiKMN24} proposed subliminal encrypted multi-maps, which hide operation time, operation count, and even whether an operation occurs by continuously sending real and dummy operations. 
This stronger metadata hiding requires persistent cover traffic and client-side queue management, and in subliminal mode, it may sacrifice full correctness or incur high overhead and unbounded latency. 
Therefore, fully hiding operation metadata is often too expensive for practical dynamic databases.

}

\section{Encrypted Key-Value Database}
{
In this section, we present the restricted zero-leakage encrypted key-value database, Aspire.
Instead of using oblivious access mechanisms directly as complete database solutions, Aspire is designed as an encrypted key-value database system that treats them as building blocks and provides an upper-layer encrypted database system. 
Built upon oblivious access mechanisms (such as PIR and ORAM), Aspire preserves restricted zero-leakage while improving efficiency and functionality, especially for batch keyword queries and dynamic workloads.}

{Our system consists of a client accessing a database hosted by a server, where the client maintains the cryptographic keys and local state. This setting can also support multiple clients through a trusted proxy that collects their queries and updates and executes the Aspire protocol on their behalf\cite{DBLP:conf/uss/GrubbsKLBL0R20}. From the server's perspective, the proxy acts as a single logical client. A representative deployment is an enterprise gateway that privately accesses an outsourced database on behalf of multiple employees.}

\subsection{System model}
\label{sec:system_model}
An encrypted key-value database scheme $\Sigma$, over the keyword space $\mathcal{K}$ and value space $\mathcal{V}$, consists of the following four routines, all taking the computational security parameter $\lambda$ as an implicit input.
\begin{itemize}
    \item $(\mathsf{st_C};EDB)\gets\mathsf{Setup}(DB)$, which takes a key-value database $DB=\{(k_1,v_1),...,(k_n,v_n)\}$ as input and outputs an encrypted database $EDB$ for the server and the local state $\mathsf{st_C}$ for the client. 
    \item $V\gets \mathsf{Query}(\mathsf{st_C},K;EDB)$, which takes a keyword set $K$ of size {$|K|$} and the $\mathsf{st_C}$ from the client and the $EDB$ from the server. The client receives a value set $V$.
    \item $\mathsf{st_C}'\gets \mathsf{UpdateLocal}(\mathsf{st_C},op,k,v)$, which takes an operation $op$ that could be $ADD$ or $DEL$, a keyword $k$, a value $v$, and the $\mathsf{st_C}$ as input. It outputs a new state $\mathsf{st_C}'$. The input $v$ {is} $\bot$ if the $op$ is $DEL$ {and is not $\bot$ if the $op$ is $ADD$}.
    \item $(\mathsf{st_C}';EDB')\gets \mathsf{Update}(\mathsf{st_C};EDB)$, which takes the $\mathsf{st_C}$ from the client and the $EDB$ from the server as input. The client gets a new $\mathsf{st_C}'$ and the server takes a new $EDB'$.
\end{itemize}

The $\mathsf{Query}$ allows a batch query for a set of keywords $K$.
We also separate the updates to $\mathsf{UpdateLocal}$ and $\mathsf{Update}$ to accommodate batch updates. The $\mathsf{UpdateLocal}$ function takes a single update request from the client and updates only the client's local $\mathsf{st_C}$. 
And the $\mathsf{Update}$ will synchronize part of the updates with the server.
\subsection{Security Goal}
\label{sec:security_goal}
The restricted zero-leakage encrypted key-value database should satisfy correctness in query processing and ensure restricted zero-leakage security.

\noindent\textbf{Correctness}.
An encrypted key-value database scheme $\Sigma$ is said to be correct if, for all $i \in [|K|]$, $k_j=K[i]$, and $k_j\in DB$, $\Sigma.\mathsf{Query}$ returns the correct result $V$, i.e., $V[i]=v_j$, with a probability of $1-\mathsf{negl}(\lambda)$. 

Our encrypted key-value database aims to achieve the restricted zero-leakage security as defined in Section \ref{sec:restricted_zero_leakage}.

\noindent\textbf{The restricted zero-leakage security}.
{For the encrypted key-value database considered in this work, we instantiate the restricted leakage function as follows:}
\[
\mathcal L_{\mathsf{Setup}}(DB)=L'(|DB|), \quad
\mathcal L_{\mathsf{Query}}(K)=\bot,
\]
\[
\mathcal L_{\mathsf{UpdateLocal}}(op,k,v)=\bot,\quad
\mathcal L_{\mathsf{Update}}()=L''(|DB|).
\]
 %
  %
   %
    %
%
{The $\bot$ means that, beyond the fact that the corresponding operation is invoked, the server learns no additional information, {while $\mathcal L'$ and $\mathcal L''$ denote arbitrary functions that depend only on $|DB|$.}}

{We provide a formal description of the leakage functions and proof that an encrypted key-value database scheme with the above leakage functions achieves the restricted zero-leakage security in Appendix \ref{app:security_proof}.}



{
\subsection{Preliminaries}
Aspire builds on private information retrieval (PIR), probabilistic batch code (PBC), and minimal perfect hashing (MPH). We briefly describe their functionality below and provide formal definitions in Appendix \ref{app:notations}.

\noindent\textbf{Private information retrieval}. PIR\cite{DBLP:conf/focs/ChorGKS95} allows a client to retrieve an element at a specified index $i$ from a server-held database $DB=(x_1,\ldots,x_n)$ without revealing the index $i$ to the server.

\noindent\textbf{Probabilistic batch codes}. An $(n,m,s,b)$-PBC\cite{DBLP:conf/sp/AngelCLS18} encodes $n$ elements into $b$ buckets $B_1,\ldots,B_b$, containing a total of $m$ encoded elements. For any batch $I={i_1,\ldots,i_s}\subseteq[n]$, the retrieval algorithm assigns the requested elements to distinct buckets such that $x_{i_j}$ can be recovered by accessing at most one element from its assigned bucket. 

\noindent\textbf{Minimal perfect hashing}. For a keyword set $K$, an MPH function\cite{DBLP:conf/alenex/EspositoGV20} is a collision-free mapping in which each keyword is assigned a different number in $[|K|]$.
Thus, each keyword $k\in K$ is mapped to a unique position $h_K(k)$, without any unused positions in the range. MPH supports expected linear construction time and constant lookup time, using under 2 bits per key for storage (about 0.22 MB for $2^{20}$ arbitrarily long keywords in our experiments).

}

\subsection{Design Overview}

{Aspire} comprises three main parts: batch queries via PIR, keyword-to-index transformation, and asynchronous client updates.

\noindent \textbf{Batch queries via PIR.} We use PIR to access the database while hiding everything about the queried keywords and results (the PIR can also be replaced with ORAM).
Meanwhile, we adopt the PBC to amortize the costs of batch queries (detailed notations are in Appendix \ref{app:notations}).

The overview of \textit{Aspire} is in Figure \ref{fig:overview}.
{Aspire uses an IND-CPA-secure randomized symmetric encryption scheme with fresh randomness for each encryption and re-encryption.}
In the setup, the client distributes its database $DB=\{(k_1,v_1),...,(k_n,v_n)\}$ to $b$ buckets via PBC.
For each $(k_i,v_i)$, the PBC uses 3 hash functions to map $k_i$ to 3 buckets and store $(k_i,v_i)$ in those buckets. 
After mapping, the client encrypts each pair in the buckets and sends the encrypted buckets to the server as the encrypted database.
To query a set of keywords $K$, the client uses the PBC to find the bucket for each keyword in $K$.
The PBC uses cuckoo hashing with the same 3 hash functions as in the setup, and each bucket has at most one keyword mapped to it.
Finally, the client adopts multiple keyword PIRs or a batch keyword PIR to retrieve the key-value pair in each bucket. 
As shown in ~\cite{DBLP:conf/sp/AngelCLS18,DBLP:conf/sp/LiuLWR24}, the utilization of PBC can save a lot of computation and communication costs for batch PIR.


\begin{figure}[htbp]
    \centering
    \subfloat[Setup]
	{
 \label{fig:setup_overview}
		\begin{minipage}[b]{0.4\textwidth}
        \centering
        \includegraphics[width=\textwidth]{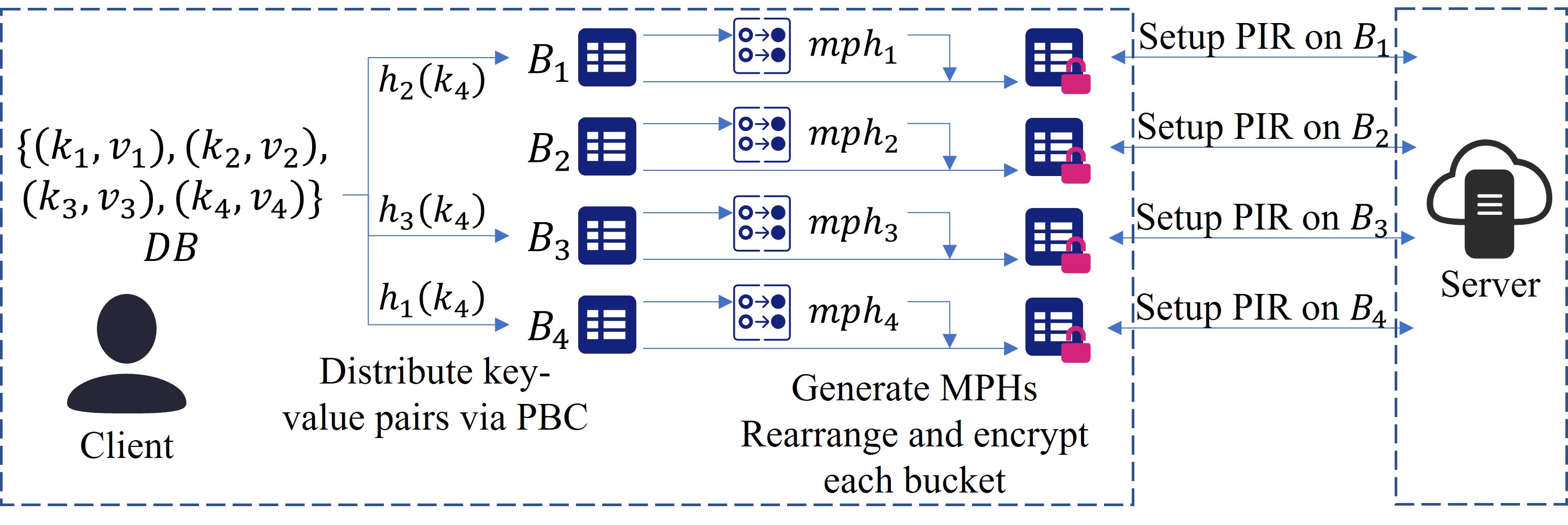}
    \end{minipage}
	}
    
    \subfloat[Query]
	{
 \label{fig:query_overvie}
		\begin{minipage}[b]{0.4\textwidth}
        \centering
        \includegraphics[width=\textwidth]{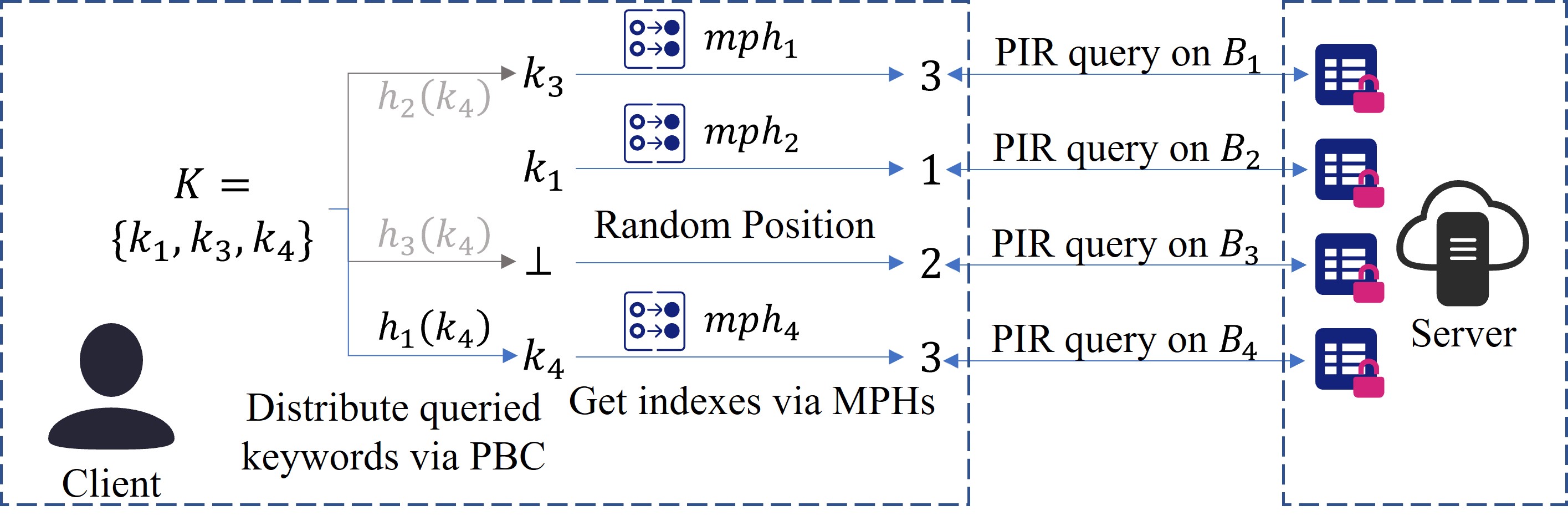}
    \end{minipage}
	}

    \subfloat[Update]
	{
 \label{fig:update_overvie}
		\begin{minipage}[b]{0.4\textwidth}
        \centering
        \includegraphics[width=\textwidth]{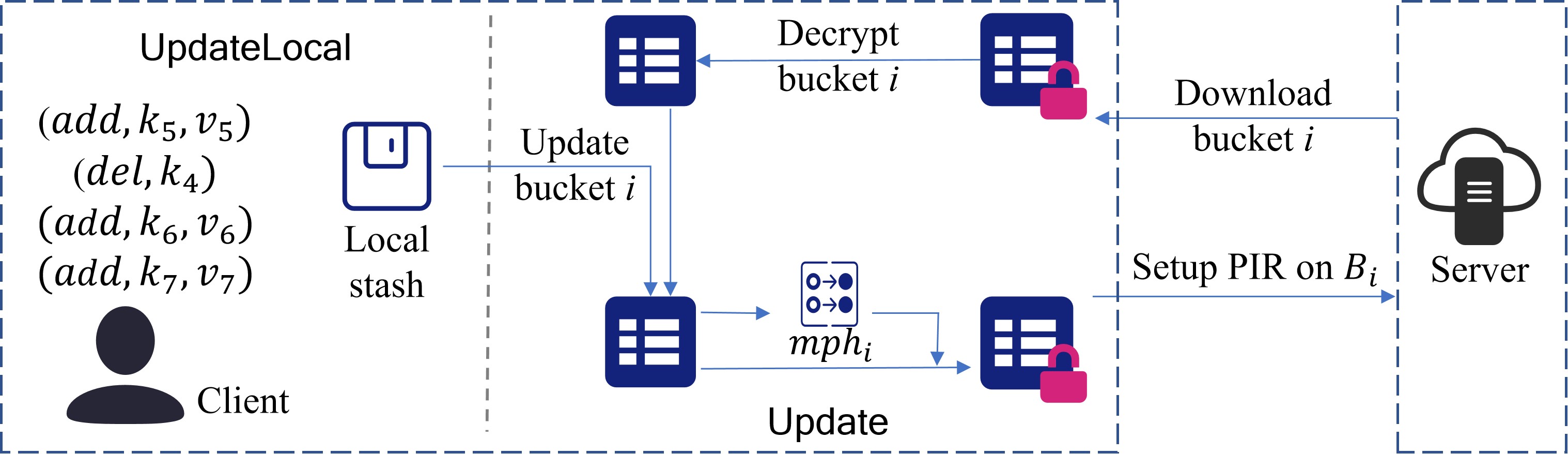}
    \end{minipage}
	}
    
    \caption{System overview of \textit{Aspire}.}
    \label{fig:overview}
\end{figure}

\noindent \textbf{Keyword search to index access.}
As detailed in Appendix \ref{app:review_of_KPIR}, 
existing keyword PIR constructions incur significantly higher overheads than standard index PIR. 
We circumvent these costs by using an MPH-based mapping that reduces keyword searches to index accesses.

The MPH maps a keyword set $K$ to the first $|K|$ natural numbers (see Appendix \ref{app:MPH_preliminary}). 
In the setup, the client collects keywords from each bucket, builds an MPH $mph_i$ from each bucket’s keyword set $B_i$, then realigns the bucket’s key-value pairs according to the MPH before encryption and upload.
In the batch query, the client uses the stored MPH functions to locate a keyword’s position in a bucket, then uses standard PIR or batch PIR to retrieve the key-value pairs.
{For each bucket $B_i$, the client determines the PIR query position $\mathit{pos}_i$. If a queried key is assigned to $B_i$, the client obtains $\mathit{pos}_i$ by evaluating the corresponding MPH. Otherwise, it selects $\mathit{pos}_i$ uniformly at random. The client then uses $\mathit{pos}_i$ as the index input to $\mathsf{PIR.Query}$.}
Appendix~\ref{app:keyword_PIR} details a keyword PIR construction leveraging an MPH function. 

\noindent \textbf{Independent client updates.}
The client's updates are independent of update interactions with the server.
For each addition and deletion issued by the client, we directly store the updates locally. 
Later, the client synchronizes the updates with the server by $\mathsf{Update}$.
In the $i$-th $\mathsf{Update}$, the client downloads and decrypts the $(i\%b)$-th bucket. 
The client identifies which additions and deletions correspond to the keywords mapping the $(i\%b)$-th bucket and thus updates the bucket accordingly.
The client then constructs a new MPH function according to the updated keywords, rearranges, re-encrypts, and uploads the bucket to the server.

To handle the changing size of the database, when updating the first bucket, the client sets a proper size $B_{size}$ for the buckets according to the current number of pairs in the database; thus, it does not leak the actual size of each bucket.
When updating a bucket, if the updated bucket size is less than $B_{size}$, the client pads it to $B_{size}$ . If the size is larger, the client randomly removes some key-value pairs from the bucket to the stash until the bucket size is $B_{size}$.
The client reinitializes $B_{size}$ at the time of the next update to the first bucket.
If the number of buckets is about $n^{1/2}$, and the client calls $\mathsf{Update}$ for each addition and deletion, the amortized communication cost of an update is $O(n^{1/2})$. Also, the storage cost of the stash is $O(n^{1/2})$.
We note that invocations of $\mathsf{Update}$ are decoupled from the client-side update, allowing them to be executed during idle periods. 
As a result, these synchronizations do not contribute to the latency of search queries or update requests.

\section{The Construction of \textit{Aspire}}
We first introduce the concrete construction of \textit{Aspire} and then provide a detailed overhead and security analysis.

\subsection{Modules}
\label{sec:constructionofstorage}

\textit{Aspire} includes $\mathsf{Setup}$, $\mathsf{Query}$, $\mathsf{UpdateLocal}$, and $\mathsf{Update}$, which are detailed in Algorithm \ref{alg1}, \ref{alg2}, \ref{alg3_v2}, and \ref{alg4_v2}, respectively.

\input{Alg_Setup}
\input{Alg_Query}
\subsubsection{Setup}
In the beginning of the setup, the client owns a database $DB=\{(k_1,v_1),...,(k_n,v_n)\}$. 
The client initializes an empty stash $S_1$ to store the local updates and a $Count=n$ to record the number of pairs in the database. The client and the server both initialize a $Rcount=0$ to record the times of $\mathsf{Update}$ being issued.

The client first extracts all the keywords $K=\{k_1,...,k_n\}$ from the database and calls $\mathsf{PBC.Encode}$ to distribute all the keywords in the $K$ into $b$ buckets as $\{K_0,...,K_{b-1}\}$ (line \ref{alg1:ekeywords}-\ref{alg1:PBCencode}). 
The number of buckets $b$ is a public parameter.
The $\mathsf{PBC.Encode}$ adopts three hash functions to map each keyword $k\in K$ into three buckets, where $k$ is replicated three times.
After the mapping, a bucket may contain duplicate keywords, as two hash functions may map a keyword to the same bucket. We directly retain the unique keywords in each bucket.

The client then sets the size of every bucket to be the same $B_{size}$. 
Finding a small $B_{size}$ while ensuring that the probability of a bucket exceeding the size of $B_{size}$ is relatively small is a $3n$-balls-into-$b$-bins problem. {Following ~\cite{DBLP:conf/random/RaabS98}, we choose $B_{size}$ as $3n/b+\sqrt{(6n\log b)/b}$ (line \ref{alg1:set_bsize}), ensuring that the probability of overflow remains acceptable while keeping the number of dummy items limited.
If a bucket has a size larger than $B_{size}$, the client will put random elements from the bucket into the stash without affecting correctness. 
}
For each bucket $i$, while $|K_i|>B_{size}$, the client removes a random keyword $k$ from $K_i$ and adds $(k,j,v,ADD)$ to $S_1[i]$ until $|K_i|=B_{size}$, where the $j$-th hash function in PBC maps $k$ to $i$, and $v$ is the corresponding value of $k$ (line \ref{alg1:bsize_b}-\ref{alg1:bsize_e}).

After that, for each bucket $i$, the client calls the $\mathsf{MPH.Setup}$ to generate a minimal perfect hash function $mph_i$ for $K_i$ (line \ref{alg1:setmph}). 
The client then rearranges $K_i$ according to $mph_i$, i.e., places the keyword $k\in K_i$ at the position of $\mathsf{MPH.Evaluate}(mph_i,k)$ (line \ref{alg1:rearrange}). 
After the rearrangement, $mph_i$ will map $K_i[j]$ to $j$. 

Then, for each keyword $k\in K_i$, the client encrypts $k||v||p$, where $v$ is the corresponding value of the keyword $k$, and the $p$-th hash function in PBC maps $k$ to the $i$-th bucket (line \ref{alg1:enc1}-\ref{alg1:enc2}). 
Thus, although the same key-value pairs appear in multiple buckets, they are indistinguishable. 
The encrypted key-value pairs of each bucket are then padded to the size of $B_{size}$ and sent to the server.

Finally, the server and the client run the setup of PIR on every bucket, and the client receives a hint for querying each bucket using PIR (line \ref{alg1:server1}-\ref{alg1:server2}). 
We note that PIR can be replaced by BatchPIR to save querying time, and if the adopted PIR scheme does not require the client to store a hint, the step of sending hints to the client is not needed.

\subsubsection{Query}
To query a set of keywords $\{q_1,...,q_s\}$, the client calls the $\mathsf{PBC.GenSchedule}$ to distribute the keywords to each bucket. 
$\mathsf{PBC.GenSchedule}$ adopts the 3-way cuckoo hash with the same three hash functions as in the setup (line \ref{alg2:pbcgen}).
Each bucket receives a keyword or nothing. 
Thus, the client can query each bucket for the value of a single keyword to obtain all the desired values.
If the number of queried keywords $s$ is smaller than $2b/3$, the cuckoo hash would fail with a probability of around $2^{-40}$~\cite{DBLP:conf/sp/AngelCLS18}. {Fewer hash functions make it difficult to achieve this failure rate under the same load, while more hash functions increase database replication and storage overhead.} Note that the client can rerun the cuckoo hash with fewer queried keywords if a failure occurs.

For each bucket $i$, if a keyword is distributed, the client uses the $mph_i$ to obtain the position of the stored key-value pair on the server; otherwise, the client just uses a random position (line \ref{alg2:mph1}-\ref{alg2:mph2}).
The client and the server then run a PIR query in each bucket to retrieve the corresponding values for the keywords (line \ref{alg2:pir1}-\ref{alg2:pir2}). 
We note that the PIR here can be replaced by a BatchPIR to save query costs.
Finally, the client checks the retrieved key-value pairs for correctness and the local stash $S_1$ for potential updates (line \ref{alg2:check1}-\ref{alg2:check2}).

\input{Alg_Update_v2}
\subsubsection{UpdateLocal}
\label{sec:updatelocal}
For every update request, i.e., to add a key-value pair or delete a keyword from the database, the client stashes the updates locally. 
When encountering an update $(k,v,op)$, where $k,v$ is the updated key-value pair and $op$ is the $ADD$ or $DEL$, the client first calculates the buckets to which the keyword has been hashed (line \ref{alg3:pbc}).
For each hash function $hash_i$, the client adds $(k,i,v,op)$ to $S_1[pos_i]$
and changes the number of key-value pairs in the database $Count$ accordingly (line \ref{alg3:up1}-\ref{alg3:up2}).

To save the storage of the client, the client can split the stash into two parts, $S_1$ and $S_2$. The $S_1$ stores the keywords, the hash function number, and the corresponding operation, while $S_2$ stores the key-value pairs from the addition. Thus, the client does not need to store repeated values.

\input{Alg_Reconstruct_v2}
\subsubsection{Update}
To synchronize the updates in the stash with the server, the client and server execute $\mathsf{Update}$.
We set that after at most $a$ executions of $\mathsf{UpdateLocal}$, an $\mathsf{Update}$ will occur, where $a$ is a constant value.
The client and server hold a $Rcount$, which records the times of $\mathsf{Update}$ and is initialized as $0$. 
In each $\mathsf{Update}$, the client and server reconstruct the $Rcount\%b$-th bucket.

As the database size may change over time, the bucket size will also change. 
When updating the first bucket, i.e., $Rcount\%b=0$, the client re-sets $B_{size}$ to handle the change in database size. 
Similar to the $\mathsf{Setup}$, the client re-sets $B_{size}$ to $3Count/b+$ $\sqrt{(6Count\log b)/b}$, where $Count$ is the current database size (line \ref{alg4:setBsize1}-\ref{alg4:setBsize2}). Until the next time the first bucket is updated, all the buckets will be reconstructed to the same size $B_{size}$.

The client downloads the encrypted bucket $\widetilde{KV}_{Rcount\%b}$ and decrypts it as $KV'$ (line \ref{alg4:download}). 
The client then updates the stashed updates of bucket $Rcount\%b$ by performing the updates in $S_1[Rcount\%b]$ to $KV'$ (line \ref{alg4:updates_1}-\ref{alg4:updates_2}).

If the updated $KV'$ is larger than $B_{size}$, random elements will be removed from $KV'$ to stash $S_1$ until $|KV'|=B_{size}$ (line \ref{alg4:remove1}-\ref{alg4:remove2}).
After which, the client re-generates the $mph_{Rcount\%b}$ for the bucket and rearranges the $KV'$ according to the generated MPH (line \ref{alg4:genmph1}-\ref{alg4:genmph2}). 
The client encrypts the $KV'$ as $\widetilde{KV}'$.
If $|\widetilde{KV}'|<B_{size}$, the client pads $\widetilde{KV}'$ to the size of $B_{size}$ with random numbers.
Finally, the client uploads $\widetilde{KV}'$ to the server and updates $Rcount$.
The server and client re-run the setup of PIR on bucket $Rcount\%b$ with $\widetilde{KV}'$ to replace the original PIR.

The $\mathsf{Update}$ can be executed independently of $\mathsf{Query}$ and $\mathsf{UpdateLocal}$, which provides the following three benefits.
\textcircled{1} Enables the client to flexibly choose the execution time of $\mathsf{Update}$, for example, during the time when the server is idle.
\textcircled{2} If a $\mathsf{Query}$ occurs during the $\mathsf{Update}$, the $\mathsf{Update}$ does not increase the latency of $\mathsf{Query}$, as the $\mathsf{Update}$ process can suspend until the $\mathsf{Query}$ is complete. 
\textcircled{3} $\mathsf{UpdateLocal}$ can be executed multiple times on average before an $\mathsf{Update}$ is executed once. 
Thus, an $\mathsf{Update}$ can upload many additions and deletions of key-value pairs.

\subsection{Performance Analysis}
We analyze the performance of \textit{Aspire} in client storage, query and update overheads.

\noindent\textbf{Client storage}.
We analyze the client-side storage of \textit{Aspire}, which consists of three parts: the MPH storage, the client-side storage of the underlying $\mathsf{PIR}$ scheme, and the local stash.
{
\begin{theorem}
\label{thm:client-storage}
Let $n$ be the number of key-value pairs and $b$ be the number of buckets. Assume that at most $a$ executions of $\mathsf{UpdateLocal}$ occur between two consecutive executions of $\mathsf{Update}$. Let $S_{\mathsf{MPH}}$ be the total storage of all MPH functions, $H_{\mathsf{PIR}}$ be the total client-side storage of the underlying $\mathsf{PIR}$ scheme, and $S_{\mathsf{ov}}$ be the size of the overflowed items. Then the client storage of \textit{Aspire} is
$$
S_{\mathsf{MPH}}+H_{\mathsf{PIR}}+O(ab)+S_{\mathsf{ov}}.
$$
In particular, if $ab=O(\sqrt n)$, then the client storage is
$$
S_{\mathsf{MPH}}+H_{\mathsf{PIR}}+O(\sqrt n)+S_{\mathsf{ov}}.
$$
\end{theorem}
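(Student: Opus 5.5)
The plan is to decompose the client state $\mathsf{st_C}$ produced by $\mathsf{Setup}$ and maintained by $\mathsf{UpdateLocal}$ and $\mathsf{Update}$ into its components. Each component is then bounded separately, and the bounds are summed. By inspection of Algorithms~\ref{alg1}--\ref{alg4_v2}, the client state consists of: the $b$ functions $mph_0,\ldots,mph_{b-1}$; the hints returned by the PIR setup on each bucket; the counters $Count$, $Rcount$ and the current $B_{size}$; and the stash $S_1$ (possibly split into $S_1,S_2$). The first group contributes exactly $S_{\mathsf{MPH}}$ by definition. The second contributes $H_{\mathsf{PIR}}$ by definition; if the PIR needs no hint, this is zero. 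The counters are $O(1)$ words. It therefore remains to bound the stash.

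The stash contains two kinds of entries, and I will separate them. \emph{Overflow entries} are tuples moved into $S_1$ because some bucket exceeded $B_{size}$, either in $\mathsf{Setup}$ or during an $\mathsf{Update}$. These are accounted for by $S_{\mathsf{ov}}$ by definition. \emph{Pending update entries} are tuples $(k,j,v,op)$ inserted by $\mathsf{UpdateLocal}$ that have not yet been applied to their bucket. Each $\mathsf{UpdateLocal}$ inserts at most three tuples, one per PBC hash function; with the $S_1/S_2$ split the value is stored once. Such a tuple stored in $S_1[i]$ is removed exactly when bucket $i$ is reconstructed, that is, at the next $\mathsf{Update}$ with $Rcount\%b=i$.

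The key step is the counting argument for pending entries. Since $\mathsf{Update}$ visits buckets in round-robin order, any bucket is reconstructed within $b$ consecutive $\mathsf{Update}$ calls. So a tuple inserted at some time is flushed after at most $b$ further $\mathsf{Update}$ invocations. Between consecutive $\mathsf{Update}$ calls there are at most $a$ executions of $\mathsf{UpdateLocal}$. Hence at any moment the pending tuples come from at most $a(b+1)$ $\mathsf{UpdateLocal}$ calls, giving at most $3a(b+1)=O(ab)$ entries, each of constant size in the word model. One subtlety is that an overflow removal during $\mathsf{Update}$ re-inserts items as $ADD$ tuples. I count these under $S_{\mathsf{ov}}$, not under the pending bound, so there is no double counting.

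Summing gives $S_{\mathsf{MPH}}+H_{\mathsf{PIR}}+O(1)+O(ab)+S_{\mathsf{ov}}$, which is the claimed bound. The special case follows by substituting $ab=O(\sqrt n)$. I expect the only delicate point to be the round-robin flushing argument: one must make sure a tuple inserted just after its bucket was processed waits at most $b$ rounds, and that tuples destined for three different buckets are each flushed independently. Both follow directly from the $Rcount\%b$ schedule.
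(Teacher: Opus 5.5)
Your proposal is correct and follows essentially the same argument as the paper. Both decompose the client state into the MPH functions, the PIR hints, the overflow items $S_{\mathsf{ov}}$, and the pending-update stash, and bound the stash by noting that each $\mathsf{UpdateLocal}$ adds at most three records, each flushed within $b$ round-robin $\mathsf{Update}$ calls; your count of $3a(b+1)$ records (versus the paper's $3ab$) and your explicit $O(1)$ term for the counters do not change the $O(ab)$ bound.
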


\begin{proof}
It remains to bound the local stash caused by pending updates. Each execution of $\mathsf{UpdateLocal}$ creates at most three update records, one for each hash function used by the PBC encoding. Since at most $a$ executions of $\mathsf{UpdateLocal}$ occur between two consecutive executions of $\mathsf{Update}$, and each modified item is synchronized with the server within the next $b$ executions of $\mathsf{Update}$, the number of pending update records is at most $3ab$.

Each update record stores a keyword, a value, and a constant-size tag, so the pending-update stash size is at most $3ab(|k|+|v|+3)$ bits, which is $O(ab)$ when $|k|$ and $|v|$ are fixed. If $ab=O(\sqrt n)$, this term becomes $O(\sqrt n)$. The overflowed items contribute additional $S_{\mathsf{ov}}$ bits. 
Combining the MPH and client-side $\mathsf{PIR}$ storage overheads establishes the claimed bound. 
\end{proof}

We now instantiate each term in Theorem~\ref{thm:client-storage}. First, the client stores one MPH function for each bucket. Since an MPH can be represented using fewer than $2$ bits per keyword~\cite{DBLP:conf/alenex/EspositoGV20}, and each key-value pair is mapped to at most three buckets, the total MPH storage is less than $6n$ bits.
Second, $H_{\mathsf{PIR}}$ depends on the underlying $\mathsf{PIR}$ or $\mathsf{BatchPIR}$ scheme. For PIR schemes without client-side hints, such as YPIR~\cite{DBLP:conf/uss/Menon024} and PIRANA~\cite{DBLP:conf/sp/LiuLWR24}, this term is $O(1)$ or zero.
Finally, the overflow term $S_{\mathsf{ov}}$ follows from the standard balls-into-bins analysis~\cite{DBLP:conf/random/RaabS98}. 
We set $B_{\mathsf{size}}= {3n}/{b}+\sqrt{{6n\log b}/{b}}$ to balance the stored pairs in the server and client storage. 
For example, when $n=2^{20}$ and $b=2^{10}$, the expected overflow is about $1.3$, indicating that the expected number of elements moved to the stash is very small under this parameter setting.
A detailed analysis of the expected overflow is in Appendix~\ref{app:overflow}.


}

\noindent\textbf{Query costs}.
For the $\mathsf{Query}$, the costs are mainly dominated by the costs of the $\mathsf{PIR}$ used.
Although the PIR requires a linear scan over the entire database to answer a $\mathsf{Query}$, the costs are amortized over the search for $2b/3$ keywords. Assuming the bucket number $b$ is $O(\sqrt{n})$, the amortized computation cost for a keyword is $O(\sqrt{n})$.  
As for the communication cost, since PIR can retrieve only a single encrypted element from a bucket, \textit{Aspire} needs amortized constant communication overhead for sufficiently large values.

\noindent\textbf{Update costs}.
$\mathsf{UpdateLocal}$ requires minimal computation costs as it only inserts items into the local stash.
The $\mathsf{Update}$ needs to download a bucket and upload the updated bucket, which has a communication cost of $O(n/b)$. 
If we assume that there are $a$ executions of $\mathsf{UpdateLocal}$ between two $\mathsf{Update}$, the communication cost is $O(n/(ab))$ amortized by $a$ $\mathsf{UpdateLocal}$. 
Assuming $ab$ has the size of $O(\sqrt{n})$, the amortized communication cost for each $\mathsf{UpdateLocal}$ is also $O(\sqrt{n})$.
Similar to the communication costs, $\mathsf{Update}$ needs to re-encrypt a bucket of size $O(n/b)$ and regenerate the MPH for the bucket; thus, its computation cost is $O(n/b)$.
In certain batch PIR schemes, the contents of multiple buckets may be encoded into a single codeword. Consequently, modifying the data in any bucket necessitates re-encoding the entire database. In this case, the computational complexity of the $\mathsf{Update}$ operation is $O(n)$.

\subsection{Security Analyses}
\textit{Aspire} achieves the security goals with the specified leakage functions in Section \ref{sec:security_goal}. 
We provide a proof sketch of the correctness and security of \textit{Aspire}. The formal security model and proof are in Appendix \ref{app:security_proof}. 

\noindent \textbf{Correctness.}  The correctness arises directly from the accurate mapping of keywords in the query set to buckets and the retrieval of the keyword in each bucket. 
The former relies on the correctness of $\mathsf{PBC.GenSchedule}$, which has an overwhelming success probability. 
For the retrieval of the keyword in each bucket, if the keyword is in the encrypted bucket on the server, with the correctness of $\mathsf{MPH}$ and $\mathsf{PIR}$, the value will be retrieved and recovered accurately. 
If it is in the local stash, the value will also be verified and returned. 

\noindent \textbf{Security.} 
In the $\mathsf{Setup}$, the client interacts with the server by uploading the padded encrypted buckets to the server. The server only sees the size of the buckets, which is determined by the number of key-value pairs in the database and the public parameter $b$.
In the $\mathsf{Query}$, the server only sees a PIR query for each bucket. 
As PIR hides any information about the queried keywords, the server of \textit{Aspire} also receives no information about which encrypted key-value pair has been accessed.
The $\mathsf{UpdateLocal}$ occurs locally; thus, the server knows nothing during $\mathsf{UpdateLocal}$.
As for the $\mathsf{Update}$, the server sees that the client downloads the $Rcount\%b$-th encrypted bucket and uploads a new encrypted bucket to replace it.
When updating the first bucket (i.e., $Rcount\%b=0$), the server knows nothing except for the current number of key-value pairs in the database based on the size of the new encrypted bucket.
When updating other buckets, the server knows no additional information.
Thus, $\mathsf{Setup}$, $\mathsf{Query}$, $\mathsf{UpdateLocal}$, and $\mathsf{Update}$ all satisfy the leakage function in the security goals.
We provide the formal security proof in Appendix \ref{app:security_proof}.

We note that the restricted zero-leakage security does not require hiding the time and number of operations, as discussed in Section \ref{sec:zero_leakage}.
The leakage functions also do not capture information about the timing and number of operations~\cite{DBLP:journals/iacr/BahiKMN24}.




\subsection{Discussion}
\label{sec:discussion}
Since ORAM and OMAP offer restricted zero-leakage and PIR-like functionality, we examine replacing PIR with ORAM in \textit{Aspire} and compare the \textit{Aspire} instances with OMAP. We then describe how to extend \textit{Aspire} to achieve ideal zero-leakage security and present a technique for \textit{Aspire} to support searchable symmetric encryption (SSE).

\noindent \textbf{Replacing PIR with ORAM.} The PIR in \textit{Aspire} can be replaced by ORAM while maintaining the same functionality and security.
In the $\mathsf{Setup}$ and $\mathsf{Update}$, instead of initializing PIRs, the client and server can run ORAM instances for each bucket.
Also, in the $\mathsf{Query}$, the retrieval of values by PIR can be replaced by ORAM accesses.
Since ORAM does not reveal any additional information beyond what is disclosed by PIR during data access, substituting PIR with ORAM in \textit{Aspire} yields the same leakage function.

However, the widely used tree-based ORAM scheme needs $O(\log n)$ rounds of communication, where $n$ is the size of the database, or it requires the client to store a position map, which needs $O(n)$ storage~\cite{DBLP:journals/jacm/StefanovDSCFRYD18,DBLP:journals/pvldb/CaoFLZFWXYR24}. 
As we show in Section \ref{sec:evaluation}, an \textit{Aspire} instance replacing PIR with PathORAM~\cite{DBLP:journals/jacm/StefanovDSCFRYD18} (denoted as \textit{Aspire-PO}), which includes a position map, dramatically increases the client's storage (104.4 MB when using PathORAM vs. 8.4 MB when using PIR for $2^{23}$ key-value pairs).
Additionally, the ORAM requires $O(\log n)$ additional communication.
In terms of computational complexity, an ORAM access avoids linear cost, while PIR requires a linear scan to answer a query.
We demonstrate that, in \textit{Aspire}, the computation of the PIR query can be parallelized between buckets; ultimately, the total time for utilizing ORAM and PIR is comparable. 

\noindent \textbf{Comparison to OMAP.} 
Oblivious map (OMAP) is designed to facilitate oblivious access to key-value stores~\cite{DBLP:journals/pvldb/CaoFLZFWXYR24}.
It also achieves restricted zero-leakage security. 
Comparing with \textit{Aspire}, the state-of-the-art OMAP~\cite{DBLP:journals/pvldb/CaoFLZFWXYR24} requires multiple rounds of communication and more bandwidth. 
Additionally, most OMAP schemes do not support batch queries~\cite{DBLP:journals/pvldb/CaoFLZFWXYR24}, while \textit{Aspire} can amortize the communication and computation with batch queries for a set of keywords. 
We provide detailed experimental results in Section \ref{sec:evaluation}.

\noindent\textbf{Ideal zero-leakage security for \textit{Aspire}.} 
Recall that \textit{Aspire} is designed for restricted zero-leakage security. 
It can be straightforwardly extended for ideal zero-leakage security by incorporating metadata-hiding techniques (see \textit{Aspire} with metadata-hiding in Figure \ref{fig:summary_of_notion}).
To conceal the true size of the database, a client may choose a fixed, predefined size parameter and maintain it as an invariant across all subsequent $\mathsf{Update}$ operations. 
This parameter thus becomes a public value; however, it is impractical once the database exceeds this limit. Then the client must store overflow items locally.
Likewise, if the database shrinks, the impracticality arises from having many unnecessary ``dummy'' items. 
To hide the metadata associated with $\mathsf{Query}$, the client predefines a query schedule during the setup that is independent of the actual workload. 
For instance, the client may issue either real or dummy $\mathsf{Query}$ at a fixed frequency. Similar to $\mathsf{Query}$, the client establishes an independent, fixed schedule for $\mathsf{Update}$.
A drawback of this approach is that it inevitably leads to a substantial increase in both communication overhead and latency.

We refer the reader to \cite{DBLP:journals/iacr/BahiKMN24} for detailed techniques related to the schedule.
Note that by executing $\mathsf{Query}$ and $\mathsf{Update}$ operations according to fixed, mutually independent schedules, we can prevent the server from inferring any additional information from the two operation types at each specific time interval. 
In this setting, the server’s visibility is restricted exclusively to the public parameters, thereby ensuring that \textit{Aspire} achieves ideal zero-leakage security.

\noindent \textbf{Extending to SSE.} SSE enables a client to search the encrypted database $\{(ind_1,W_1),...\}$ outsourced to the server \cite{DBLP:conf/ccs/Bost16,DBLP:conf/ccs/BostMO17,DBLP:conf/ccs/CashGPR15}, where $ind$ is the identity of a file and $W$ is the set of keywords related to the file.
The client can search for a keyword in the database and retrieve the corresponding file identities from the server.

One can extend \textit{Aspire} to SSE through the following steps:
\begin{itemize}
    \item The client extracts an inverted index $\{(k_1,(ind^1_1,...,ind_k^1)),$ $...,(k_x,(ind^x_1,...,$ $ind^x_l))\}$, where $ind^i_j$ is the identity of a file related to the keyword $k_i$.
    The client maintains a count table $\{(k_1,c_1),$ $...,$ $(k_x,c_x)\}$, where $c_i$ records how many files are related to the keyword $k_i$.
    \item In the setup, the client generates $c_i$ pairs $\{(k_i||1,$ $ind^i_1)$ $,...,$ $(k_i||c_i,$ $ind_i^{c_i})\}$ for every keyword $k_i$. The client collects all the pairs and uses the $\mathsf{Setup}$ of \textit{Aspire} to initialize.
    \item To search for a keyword $k_i$, the client uses \textit{Aspire}'s $\mathsf{Query}$ to retrieve all values for $\{k_i||1,\dots,k_i||c_i\}$.
    \item To add a $ind$ related to $k_i$, the client can simply increase the count $c_i$ and add the pair $(k_i||c_i,ind)$ via the $\mathsf{UpdateLocal}$. 
    Deleting an identity $ind$ related to a keyword $k_i$ is more complicated, as the client does not know the $j$ of $(k_i||j,ind)$ to be deleted.
    To delete an identity related to $k_i$, the client stashes the deletion until the next time searching for $k_i$, so that the client can adjust the key-value pair related to $k_i$ and delete the identity. 
    This kind of lazy deletion is widely used in SSE~\cite{DBLP:conf/ccs/ChamaniPPJ18}.
    
\end{itemize}

This extension achieves forward and the strongest backward security of SSE~\cite{DBLP:conf/ccs/BostMO17}. 
The setup and update will only reveal the number of key-identity pairs in the database.
If the maximum count of all keywords does not exceed the $2b/3$, i.e., if searching for any keyword requires a single $\mathsf{Query}$ in \textit{Aspire}, then the search will reveal nothing to the server.

\section{Experimental Evaluation}
We evaluate \textit{Aspire} based on PIR across multiple dimensions and compare it against ORAM~\cite{DBLP:journals/jacm/StefanovDSCFRYD18} and OMAP~\cite{DBLP:journals/pvldb/CaoFLZFWXYR24}. 
Furthermore, we simulate \textit{Aspire} based on ORAM instead of PIR to demonstrate the extensibility of our design.
The source code accompanying this work is publicly accessible at \href{https://doi.org/10.5281/zenodo.22813536}{https://doi.org/10.5281/zenodo.22813536}.
\subsection{Experimental Setup}

\noindent \textbf{Environment.}
Our implementation is developed in C++ and evaluated on an Ubuntu 22.04.5 system equipped with an AMD EPYC 9354 32-core processor and 1 TB of main memory. 
To emulate the network conditions between the client and the server, we configure a communication channel with a bandwidth of 30 Mbps and a latency of 20 ms.
{We implemented AES-128 using Crypto++ with a fresh random IV for each encryption, making the encryption non-deterministic.}

\noindent \textbf{Dataset and query.}
We use a synthetic dataset under varying configurations, including the number of key-value pairs, the bit length of values, and the number of queries per batch.
{Synthetic data suffices because Aspire’s performance depends on database size, value length, and batch size, rather than plaintext contents or keyword distributions.}
Queries are generated by randomly sampling keywords from the dataset.
Unless otherwise stated, the dataset is configured to contain $2^{22}$ key-value pairs, with 32-bit keywords and 2048-bit (256 B) values {(about 1 GB of raw data)}. The {default} number of buckets is set to 512 {(approximately 6 MB per bucket)}, which supports batches of up to 341 queries. {We note that batched queries are representative of practical encrypted key–value workloads. For example, RocksDB\cite{rocksdb} supports batched retrieval through MultiGet, and SSE may translate one keyword search into thousands of key–value lookups (Section \ref{app:kvsystems}).}

\noindent \textbf{Implementation and baselines.}
We implement \textit{Aspire} based on the single-query PIR scheme Spiral~\cite{DBLP:conf/sp/MenonW22} (denoted as \textit{Aspire-S}), as it compresses the query and response to achieve communication efficiency. 
To adopt the advantages of batch PIR, we also implement \textit{Aspire} based on PIRANA~\cite{DBLP:conf/sp/LiuLWR24} (denoted as \textit{Aspire-P}), which achieves fast batch PIR queries. 
For a better understanding of the difference between \textit{Aspire} based on ORAM and PIR, we also replace the PIR in \textit{Aspire} with PathORAM~\cite{DBLP:journals/jacm/StefanovDSCFRYD18} (denoted as \textit{Aspire-PO}). 
We note that \textit{Aspire-PO} still requires a position map for each bucket, even with an MPH to document the index of each keyword.
As the PIRANA server encodes the data in each batch together, updating a bucket will cause a re-encoding of all buckets. 
Thus, in $\mathsf{Update}$, the server re-runs the setup of PIRANA for all the buckets.
For Spiral and PIRANA, we utilize the recommended parameters from their original work. We use 3 hash functions for the PBC and set the bucket number to be no less than 1.5 times the number of queries in a batch.

For baselines, we implement a key-value database based on PathORAM~\cite{DBLP:journals/jacm/StefanovDSCFRYD18} (denoted as PathORAM), in which the client stores a keyword-to-index mapping and uses the PathORAM to retrieve the corresponding values. 
Since PathORAM~\cite{DBLP:journals/jacm/StefanovDSCFRYD18} does not specify a setup protocol, we let the client first construct the data structure locally and subsequently upload it to the server during the initialization of PathORAM and \textit{Aspire-PO}.
As OMAP supports keyword querying over key-value databases, we also experiment with the state-of-the-art OMAP scheme~\cite{DBLP:journals/pvldb/CaoFLZFWXYR24} (denoted as OMAP24).  

To mitigate the computational bottleneck of linear scanning in PIR querying, we parallelize {the servers of} \textit{Aspire-S}, \textit{Aspire-P}, and \textit{Aspire-PO} using 16 threads, denoted as \textit{Aspire-S} Parallel, \textit{Aspire-P} Parallel, and \textit{Aspire-PO} Parallel. 
This parallelization occurs naturally because our PBC-based batch query structure issues independent queries to each bucket, thereby supporting concurrent execution. 
Conversely, standard PathORAM and OMAP24 do not provide the mechanisms necessary to support parallelism during the query phase, as each individual access requires a subsequent reshuffling step, thereby preventing concurrent operations.

\subsection{Evaluation}
\label{sec:evaluation}

\begin{figure}[tp]
    \centering
    \includegraphics[width=0.6\linewidth]{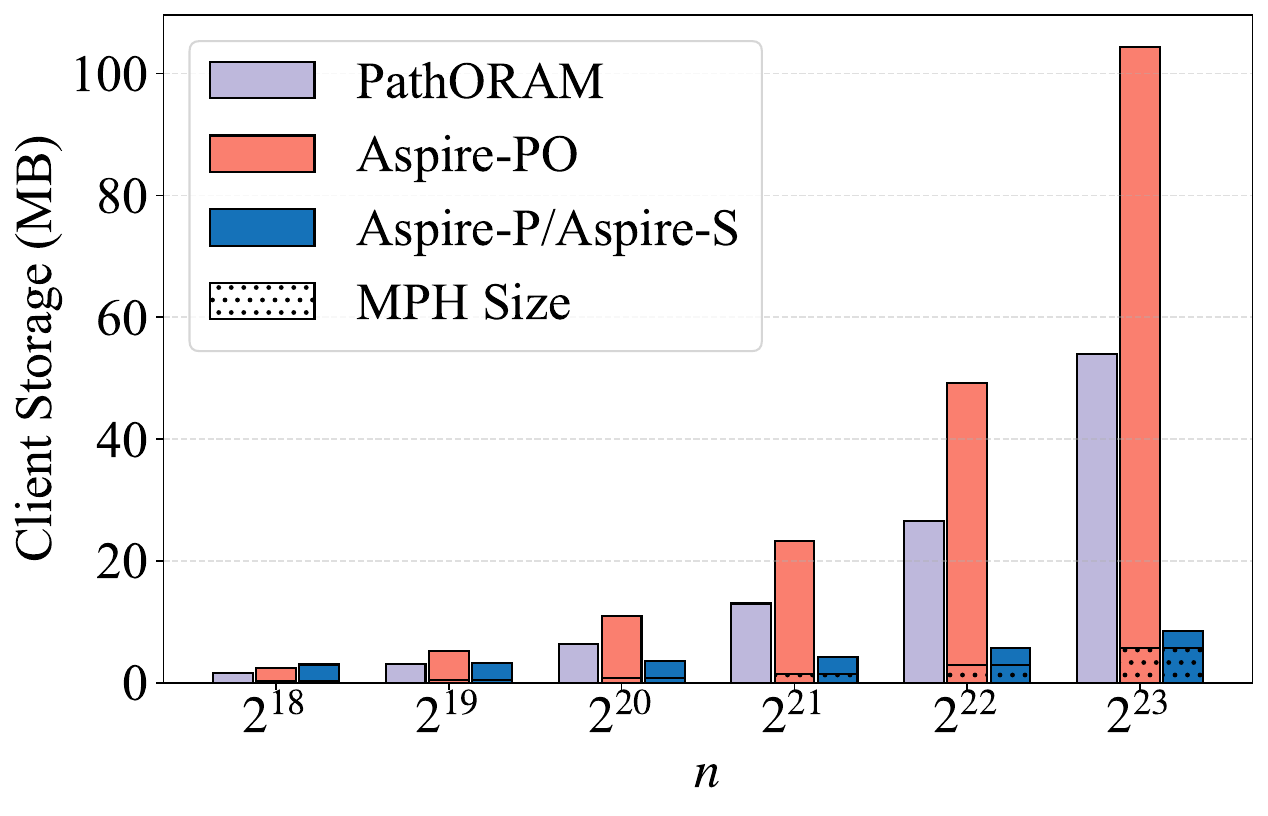}
    \caption{Local storage with different numbers of key-value pairs. }
    \label{fig:locol_storage_size}
\end{figure}

\begin{figure}[tbp]
    \centering
    \subfloat[Communication Cost]
	{
 \label{fig:setup_a}
		\begin{minipage}[b]{0.23\textwidth}
        \centering
        \includegraphics[width=\textwidth]{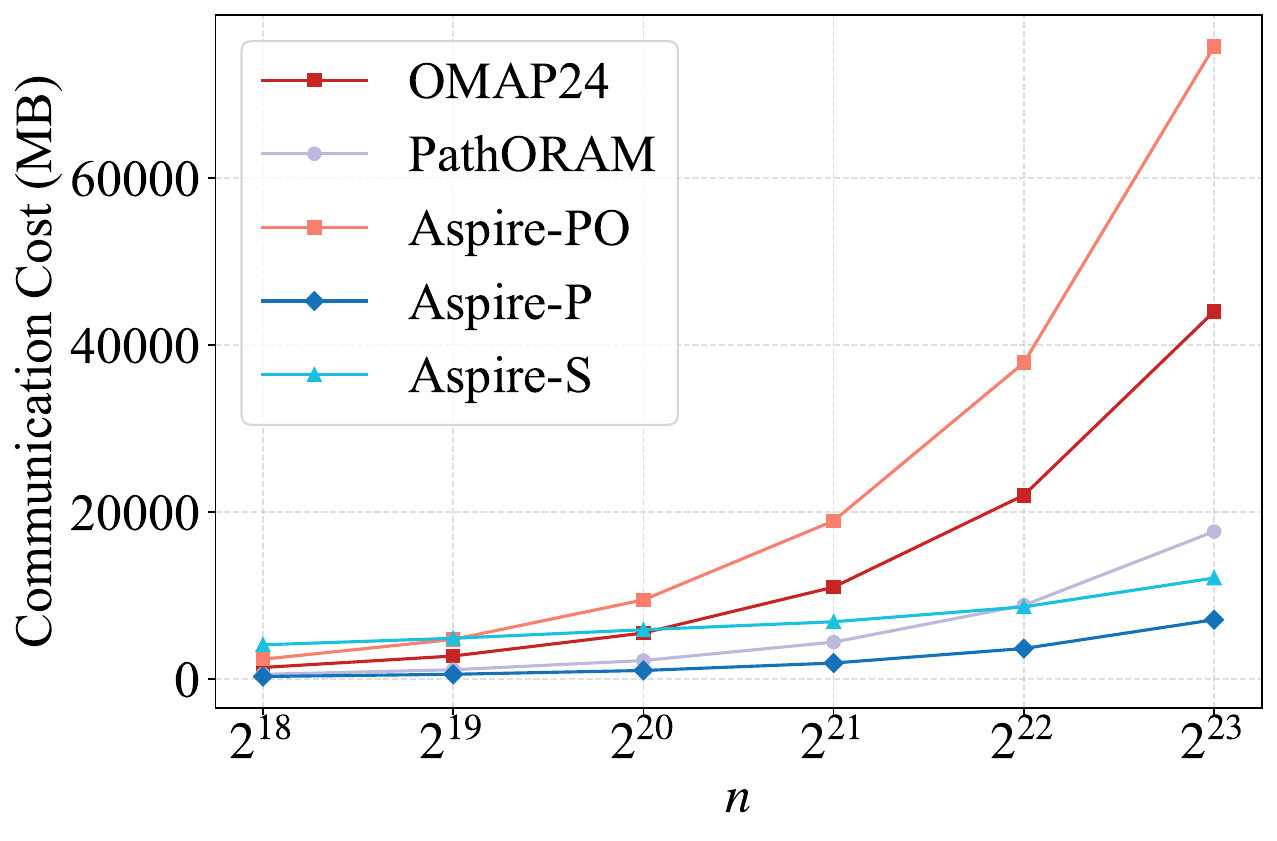}
    \end{minipage}
	}
    \subfloat[Computation Cost]
	{
 \label{fig:setup_b}
		\begin{minipage}[b]{0.23\textwidth}
        \centering
        \includegraphics[width=\textwidth]{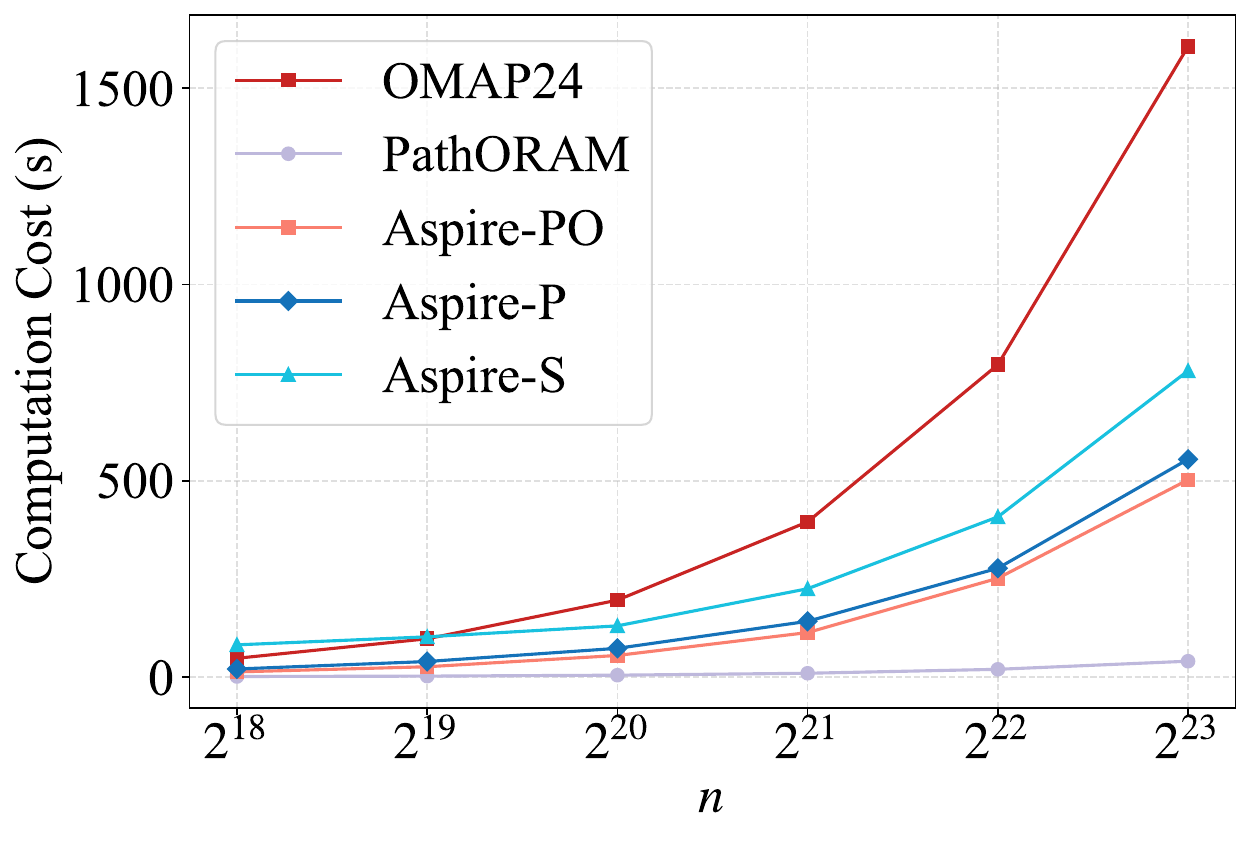}
    \end{minipage}
	}
    
    \caption{Setup performance with different numbers of key-value pairs.}
    \label{fig:setup}
\end{figure}

\begin{figure*}[tbp]
    \centering
     \subfloat[Communication Cost]
	{
 \label{fig:data_size_a}
		\begin{minipage}[b]{0.23\textwidth}
        \centering
        \includegraphics[width=\textwidth]{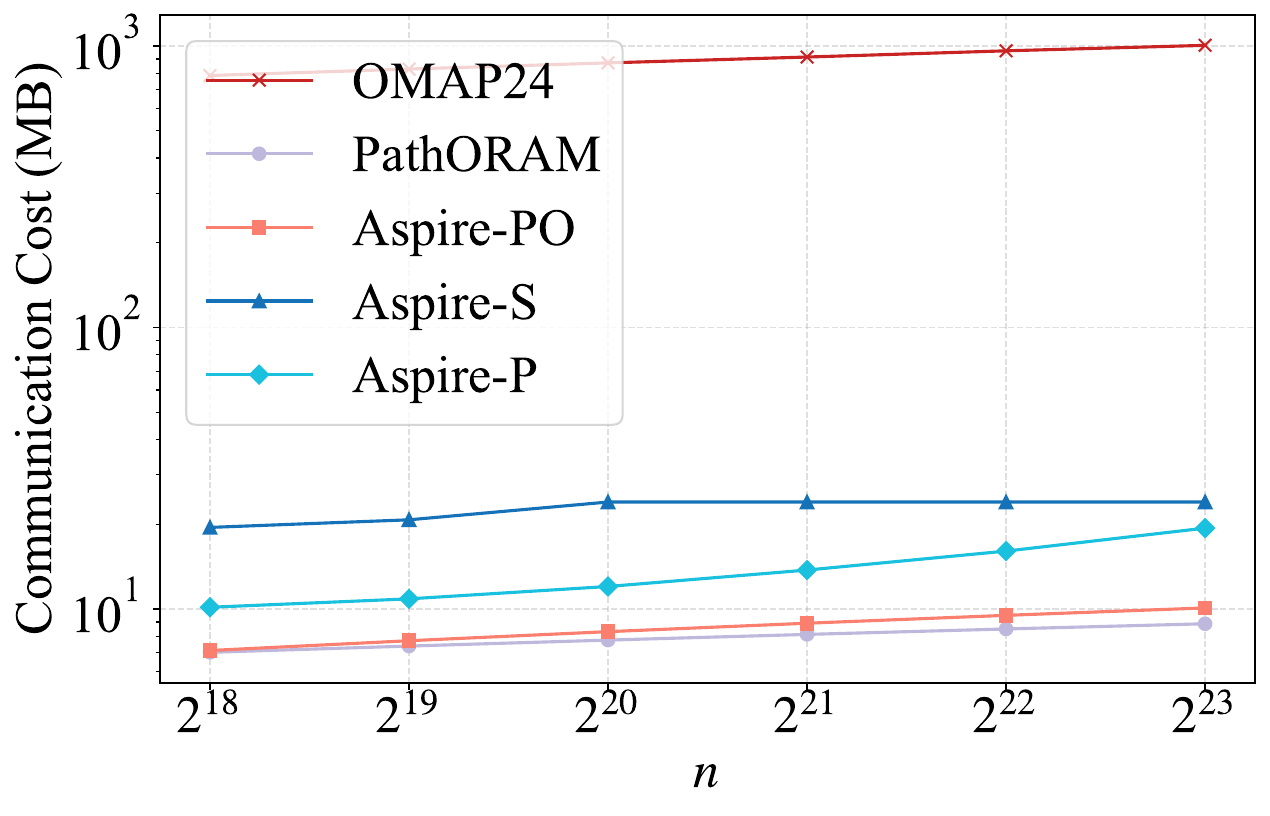}
    \end{minipage}
	}
    \subfloat[Computation Cost]
	{
 \label{fig:data_size_b}
		\begin{minipage}[b]{0.23\textwidth}
        \centering
        \includegraphics[width=\textwidth]{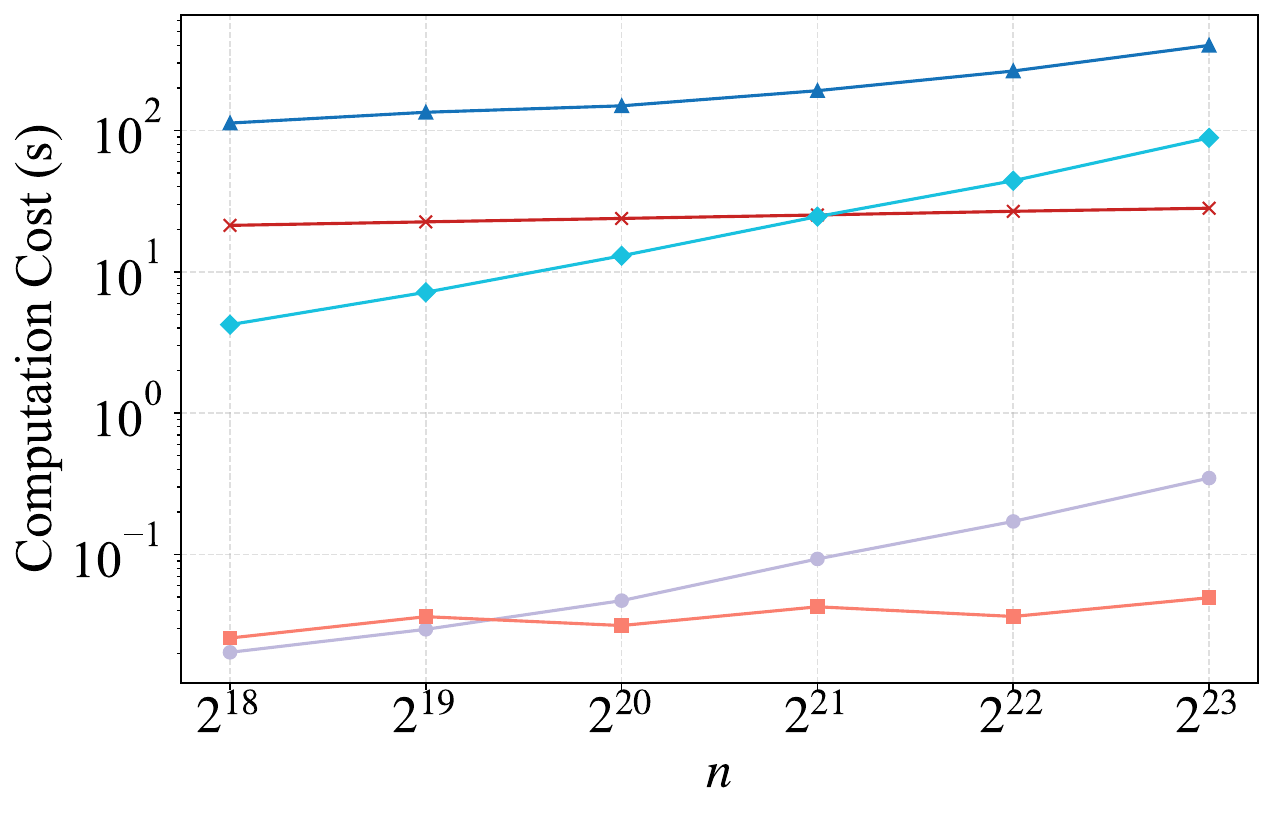}
    \end{minipage}
	}
    \subfloat[Total Time]
	{
 \label{fig:data_size_c}
		\begin{minipage}[b]{0.23\textwidth}
        \centering
        \includegraphics[width=\textwidth]{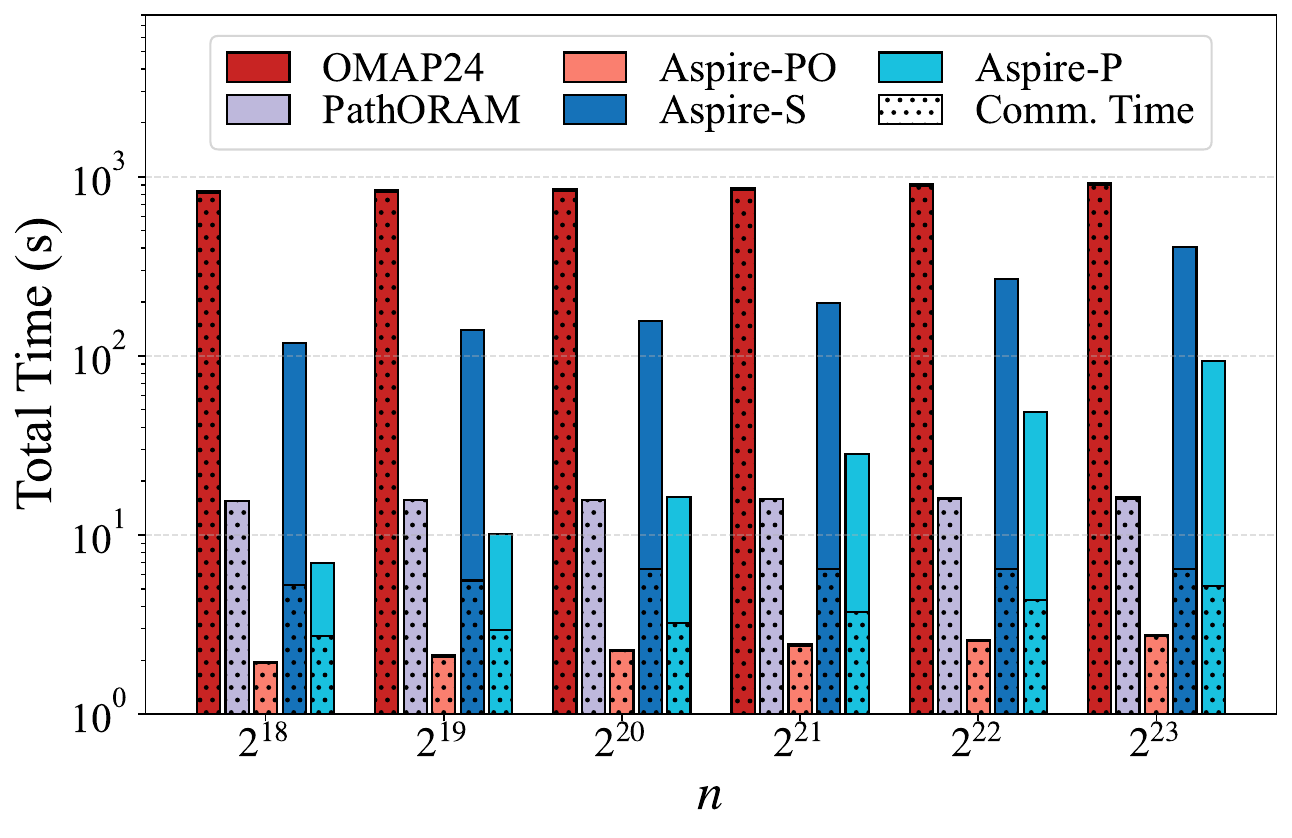}
    \end{minipage}
	}
    \subfloat[Total Time (Parallel)]
	{
 \label{fig:data_size_d}
		\begin{minipage}[b]{0.23\textwidth}
        \centering
        \includegraphics[width=\textwidth]{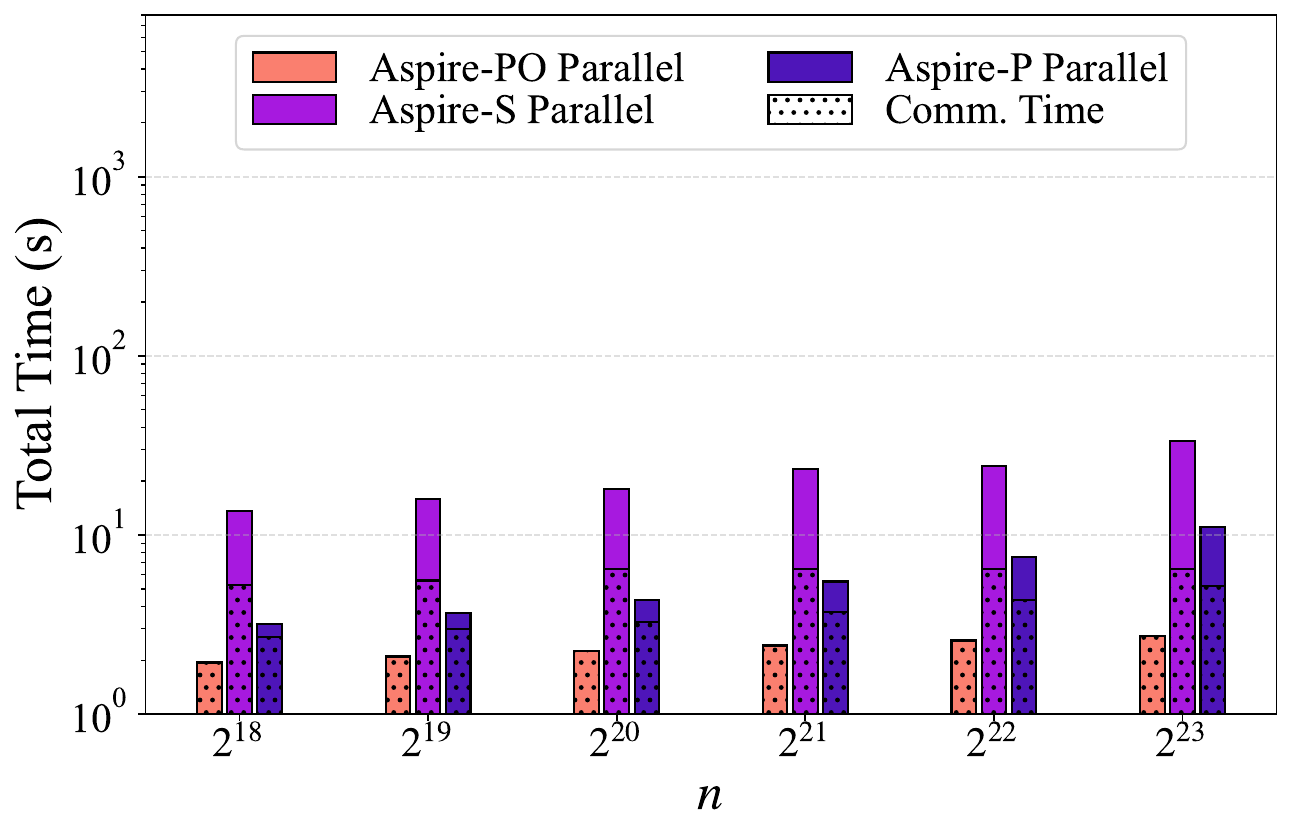}
    \end{minipage}
	}
    
    \caption{Query performance with different numbers of key-value pairs.}
    \label{fig:data_size}
\end{figure*}

\subsubsection{Local storage}
\label{sec:local_storage}
We vary the database size from $2^{18}$ to $2^{23}$ key-value pairs to evaluate and compare the local storage requirements of PathORAM, \textit{Aspire-PO}, \textit{Aspire-S}, and \textit{Aspire-P}.
The OMAP24 construction is characterized by minimal local storage requirements of $O(\log n)$. 
Although its local storage is not depicted in the figure, it exhibits substantial communication overhead and necessitates a large number of communication rounds to support queries, as we show later.
We assume $a=10$ for \textit{Aspire}, i.e., 10 $\mathsf{UpdateLocal}$ per $\mathsf{Update}$ for \textit{Aspire-S}, \textit{Aspire-P}, and \textit{Aspire-PO}.
We note that both PIRANA and Spiral require only constant local storage for the secret key, and \textit{Aspire-P} and \textit{Aspire-S} have the same local storage for the MPH and the stash. 
The results are shown in Figure \ref{fig:locol_storage_size}. 
For \textit{Aspire}, the MPH size increases as the database size grows but remains small. 
Even with $2^{23}$ key-value pairs, the MPH size is still less than 6 MB. 
The stash size of \textit{Aspire-P} and \textit{Aspire-S} changes little and remains small as the database scales. 
This is consistent with the theoretical analysis of the stash size as $O(ab)$, which is not relevant to the database size.
PathORAM needs to store a mapping between the keywords and the path as the position map, which has a size of $(|k|+\log n)n$. 
The \textit{Aspire-PO} also needs to store an MPH and a position map for each bucket.
The local storage size of PathORAM and \textit{Aspire-PO} grows rapidly as the database size increases, reaching over 50 MB and 100 MB, respectively, when the database size is $2^{23}$, while \textit{Aspire-S} and \textit{Aspire-P} take less than 8.5 MB of local storage.

\subsubsection{Setup performance} We assess the setup overhead for a range of database sizes, specifically from $2^{18}$ to $2^{23}$ key–value pairs.
The results are presented in Figure \ref{fig:setup}. 
When $n = 2^{23}$, the communication costs for \textit{Aspire-S} and \textit{Aspire-P} are approximately 12.1 GB and 7.1 GB, respectively.
In contrast, the communication costs for PathORAM and \textit{Aspire-PO} are higher. This is because PathORAM requires a substantial number of dummy items to reduce the probability of stash overflow. Moreover, \textit{Aspire-PO} replicates each key–value pair three times, and all buckets are padded to the same size, resulting in communication overheads for \textit{Aspire-PO} that are more than three times larger than those of PathORAM. The communication cost of OMAP24 is also considerable, amounting to approximately 44 GB when $n = 2^{23}$.

The computation times of \textit{Aspire-PO}, \textit{Aspire-P}, and \textit{Aspire-S} range from 500 s to 780 s when $n = 2^{23}$.  PathORAM requires only 40 s, as it does not replicate items or construct the MPH. The computation cost for OMAP24 is also substantial, exceeding 1,600 s.

\begin{figure*}[tbp]
    \centering

    {
        \scriptsize 
        \itshape      
        \color{gray}   
    Note: Aspire-S results are omitted for large numbers of buckets ($b \ge 2^{12}$) because the Spiral framework lacks corresponding parameters.
    }
    \vspace{-3mm}

    \subfloat[Communication Cost]
	{
 \label{fig:batch_size_a}
		\begin{minipage}[b]{0.23\textwidth}
        \centering
        \includegraphics[width=\textwidth]{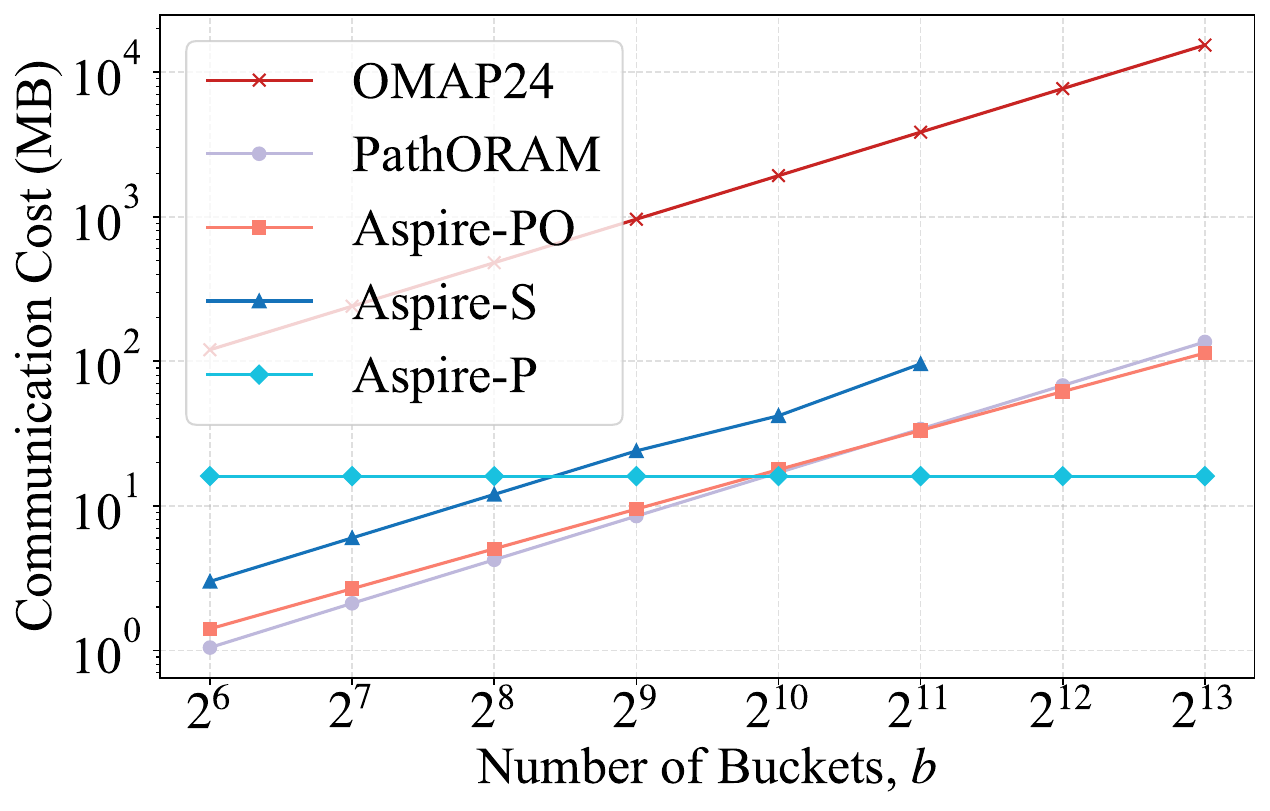}
    \end{minipage}
	}
    \subfloat[Computation Cost]
	{
 \label{fig:batch_size_b}
		\begin{minipage}[b]{0.23\textwidth}
        \centering
        \includegraphics[width=\textwidth]{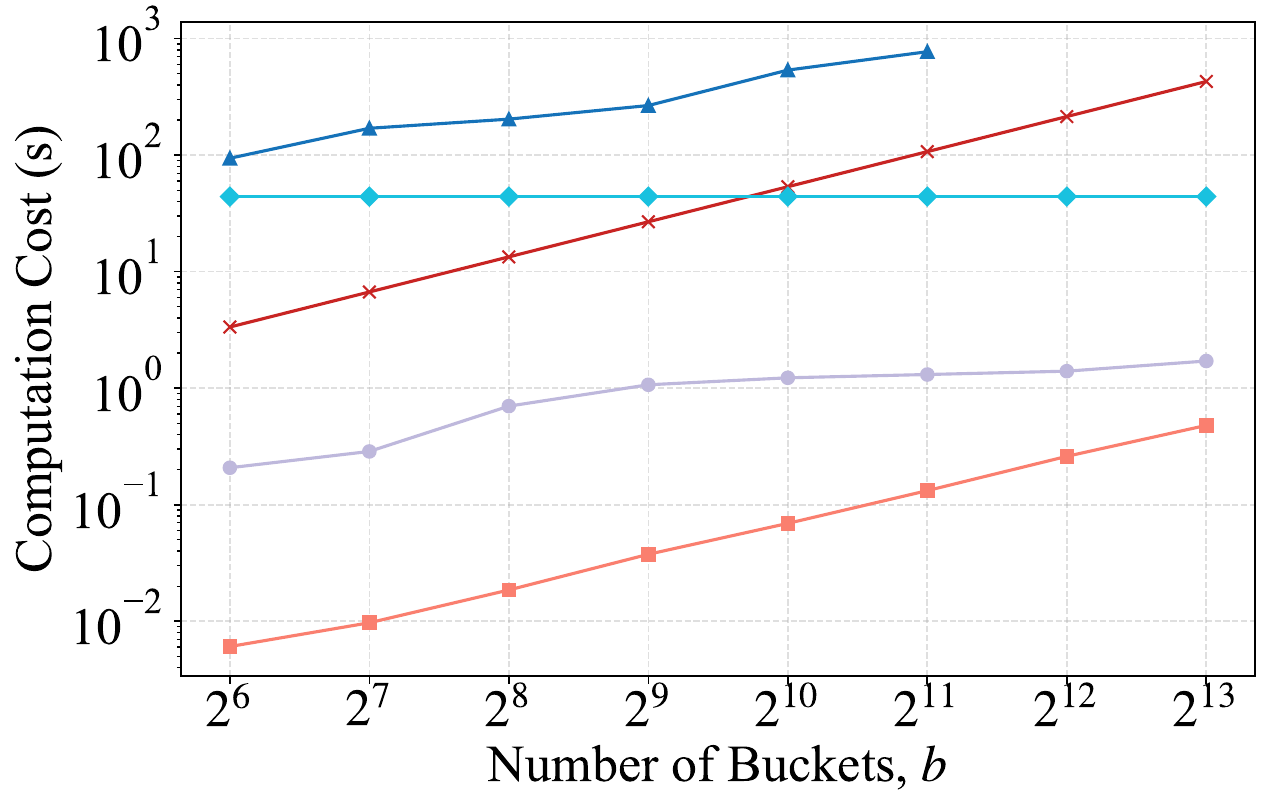}
    \end{minipage}
	}
    \subfloat[Total Time]
	{
 \label{fig:batch_size_c}
		\begin{minipage}[b]{0.23\textwidth}
        \centering
        \includegraphics[width=\textwidth]{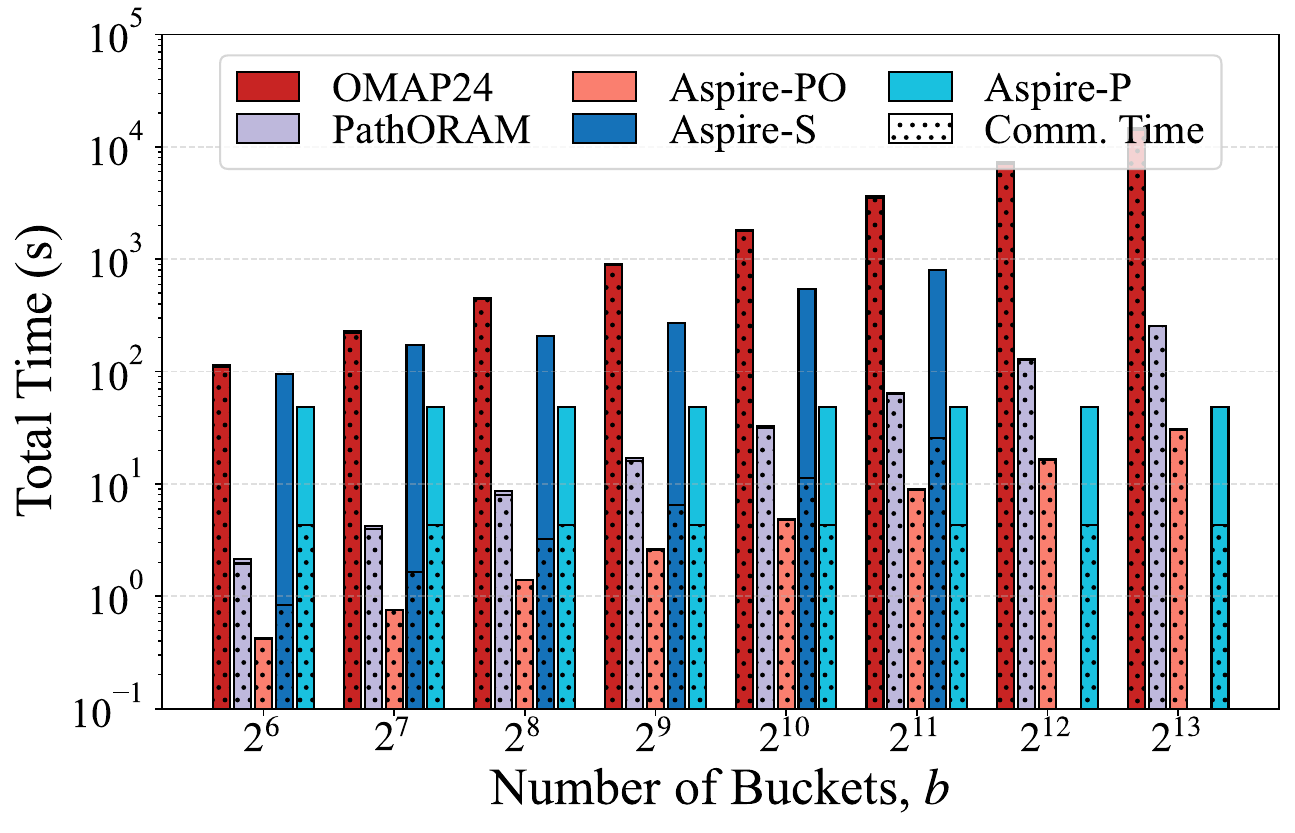}
    \end{minipage}
	}
    \subfloat[Total Time (Parallel)]
	{
 \label{fig:batch_size_d}
		\begin{minipage}[b]{0.23\textwidth}
        \centering
        \includegraphics[width=\textwidth]{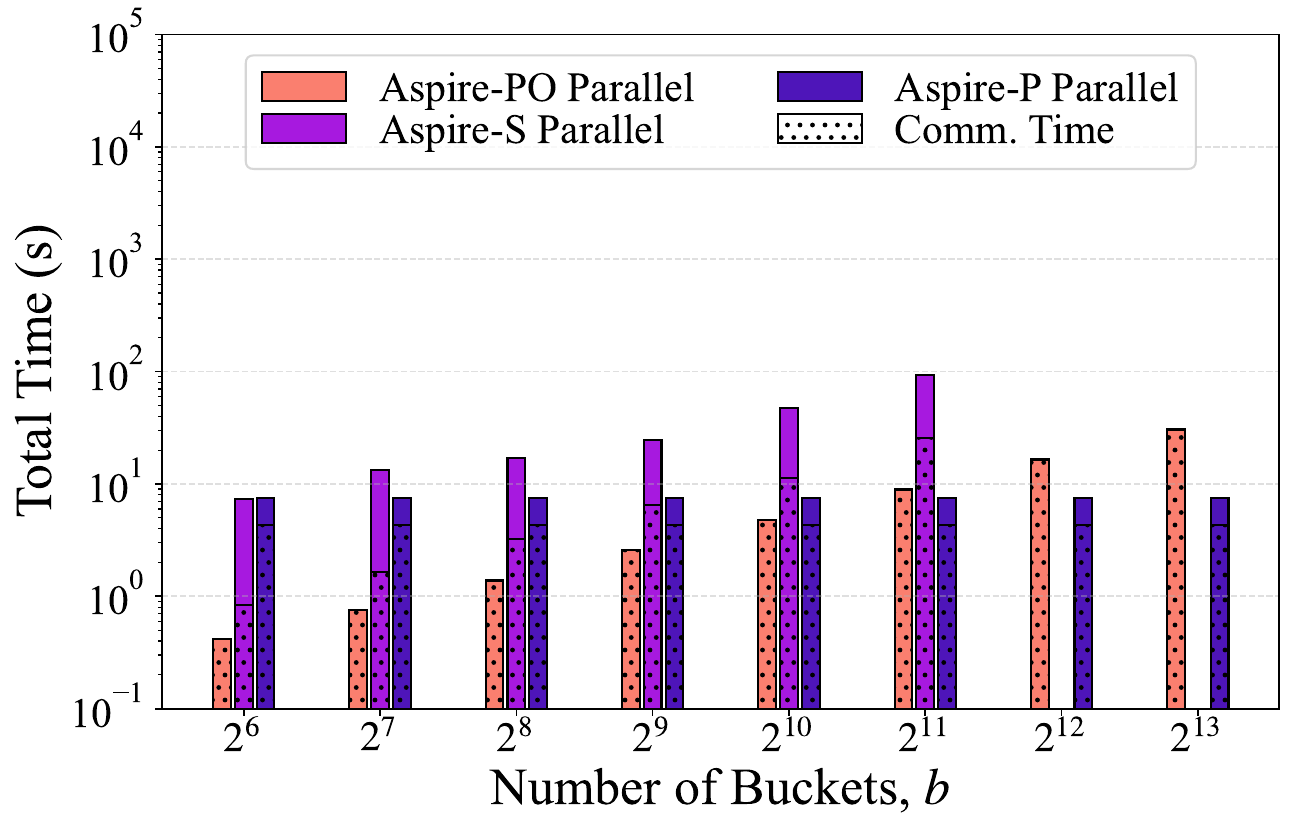}
    \end{minipage}
	}
    \caption{Query performance with different numbers of buckets.}
    \label{fig:batch_size}
\end{figure*}

\subsubsection{Query performance} To provide a comprehensive assessment of query efficiency, we evaluate performance along three dimensions: communication overhead, computational overhead, and end-to-end query latency. 
We conduct a series of experiments in which we systematically vary system parameters in three ways: (1) scaling the database size $n$ from $2^{18}$ to $2^{23}$ key-value pairs to assess scalability; (2) increasing the number of buckets $b$ from $2^6$ to $2^{13}$ (corresponding to batch queries over 42 to 5461 keywords); and (3) varying the value length $|v|$ from $2^10$ bits (128 bytes) to $2^{16}$ bits (8192 bytes). When studying the impact of one parameter, all remaining parameters are fixed at their default values.
To enable a fair comparison, for baseline schemes that do not employ PBC (namely, PathORAM and OMAP24), we perform $2b/3$ independent queries so that the total number of keywords queried matches the batch-query capacity of \textit{Aspire} with $b$ buckets.

\textbf{Database size.} We first analyze the impact of the number of key-value pairs $n$ in the database. 
As illustrated in Figure \ref{fig:data_size}, when $n$ increases from $2^{18}$ to $2^{23}$ (with 512 buckets supporting a batch query of 341 keywords), the communication costs of \textit{Aspire-P} and \textit{Aspire-S} remain relatively low, at about 20 MB and 24 MB when $n=2^{23}$. 
This indicates that the communication overhead of \textit{Aspire-S} and \textit{Aspire-P} is only mildly affected by the database size, as each batch PIR query in \textit{Aspire} incurs a nearly fixed communication cost for downloading the answer, while only the query request will slowly increase as $n$ grows. 
Additionally, PathORAM and \textit{Aspire-PO} also have a low communication cost of about 10 MB at $n=2^{23}$, which is consistent with their logarithmic access overhead in $n$.
In contrast, the OMAP24 exhibits explosive growth in communication cost, consuming over 1000 MB at $n=2^{23}$, as it restricts the local storage of the client and takes about 90 communication rounds for each keyword query.  

Regarding computation cost, PathORAM and \textit{Aspire-PO} are faster than PIR-based schemes, whose computation cost grows linearly with $n$.
This linear growth stems from the underlying PIR protocols in \textit{Aspire-P} and \textit{Aspire-S}, which require scanning all database elements to ensure zero-leakage security.
Nevertheless, the computational overhead of \textit{Aspire-P} and \textit{Aspire-S} is predominantly incurred on the server side, while the client incurs only a comparatively lightweight computational cost (see Appendix \ref{app:local_computation_cost}).

The total execution time in Figure \ref{fig:data_size_c} reveals that communication time is the dominant factor for PathORAM, \textit{Aspire-PO}, and OMAP24, which need to transmit multiple items in the database, while computation time accounts for at least half of the total latency for \textit{Aspire-S} and \textit{Aspire-P}. 
We further evaluate the total time of \textit{Aspire-S} Parallel, \textit{Aspire-P} Parallel, and \textit{Aspire-PO} Parallel under 16-thread parallelization. 
The computation time for \textit{Aspire-P} Parallel is reduced by approximately 15 times to 5.89 seconds at $n = 2^{23}$, compared to the 88.64 seconds required for \textit{Aspire-P}. Consequently, the amortized total time per query for \textit{Aspire-P} Parallel is less than 33 ms at $n = 2^{23}$.

\noindent \textbf{Number of buckets.} 
The number of buckets $b$ controls the number of keywords ($2b/3$) for each batch query. 
We further investigate the scalability of \textit{Aspire} by varying the number of buckets $b$ from $2^6$ to $2^{13}$, while fixing the database size at $n=2^{22}$.
The results of varying $b$ are shown in Figure \ref{fig:batch_size}.
The results show that the performance of \textit{Aspire-P} and \textit{Aspire-P} Parallel changes little, supporting up to $2^{13}$ buckets with about 16 MB of communication and 3.2 s of computation.
This is because \textit{Aspire-P} inherits the characteristics of the underlying PIRANA protocol, thereby enabling support for up to 8192 buckets without incurring any degradation in performance.
However, the cost of \textit{Aspire-S} grows as $b$ increases.
Additionally, Spiral does not provide parameters for small databases. As $b$ increases, each bucket contains fewer elements and exceeds the recommended parameters of Spiral. 

Other schemes' overheads also increase as $b$ increases.
\textit{Aspire-PO} needs to download and shuffle more blocks as $b$ increases, causing higher communication and computation costs. 
The communication cost for PathORAM and  \textit{Aspire-PO} increases linearly from approximately 1 MB to over 113 MB as $b$ grows.
OMAP24 exhibits explosive growth from about 120 MB at $b=2^6$ to about 15,000 MB at $b=2^{13}$ buckets. 
Even with much higher local storage, the total time of \textit{Aspire-PO} Parallel is higher than that of \textit{Aspire-P} Parallel when the bucket number is $2^{11}$. 
In this case, we note that the amortized total time for \textit{Aspire-P} Parallel to answer a query is less than 6 ms.

\begin{figure*}[tbp]
    \centering
     {
        \scriptsize 
        \itshape      
        \color{gray}   
        Note: Aspire-S results are omitted for large values ($|v| \ge 2^{15}$), as these experiments exceed available memory.
    }
    \vspace{-3mm} 
    
    \subfloat[Communication Cost]
	{
 \label{fig:value_size_a}
		\begin{minipage}[b]{0.23\textwidth}
        \centering
        \includegraphics[width=\textwidth]{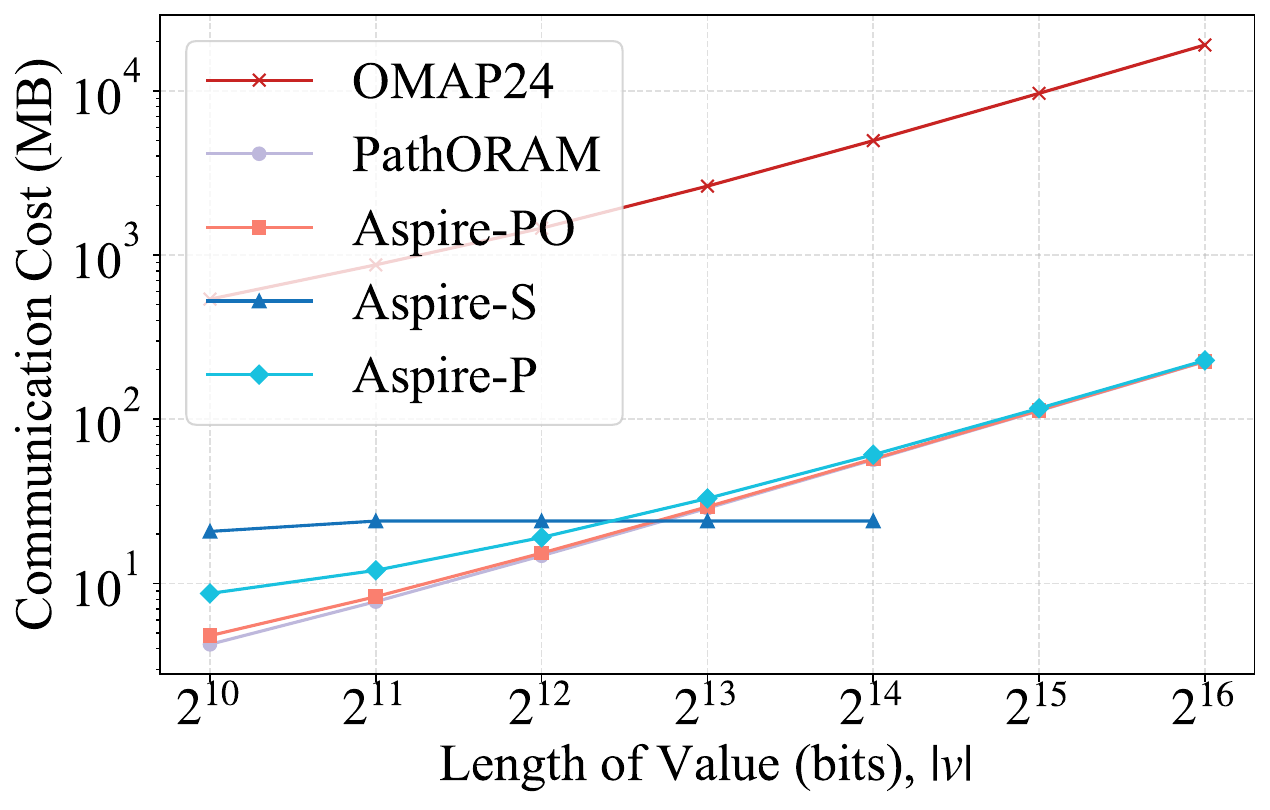}
    \end{minipage}
	}
    \subfloat[Computation Cost]
	{
 \label{fig:value_size_b}
		\begin{minipage}[b]{0.23\textwidth}
        \centering
        \includegraphics[width=\textwidth]{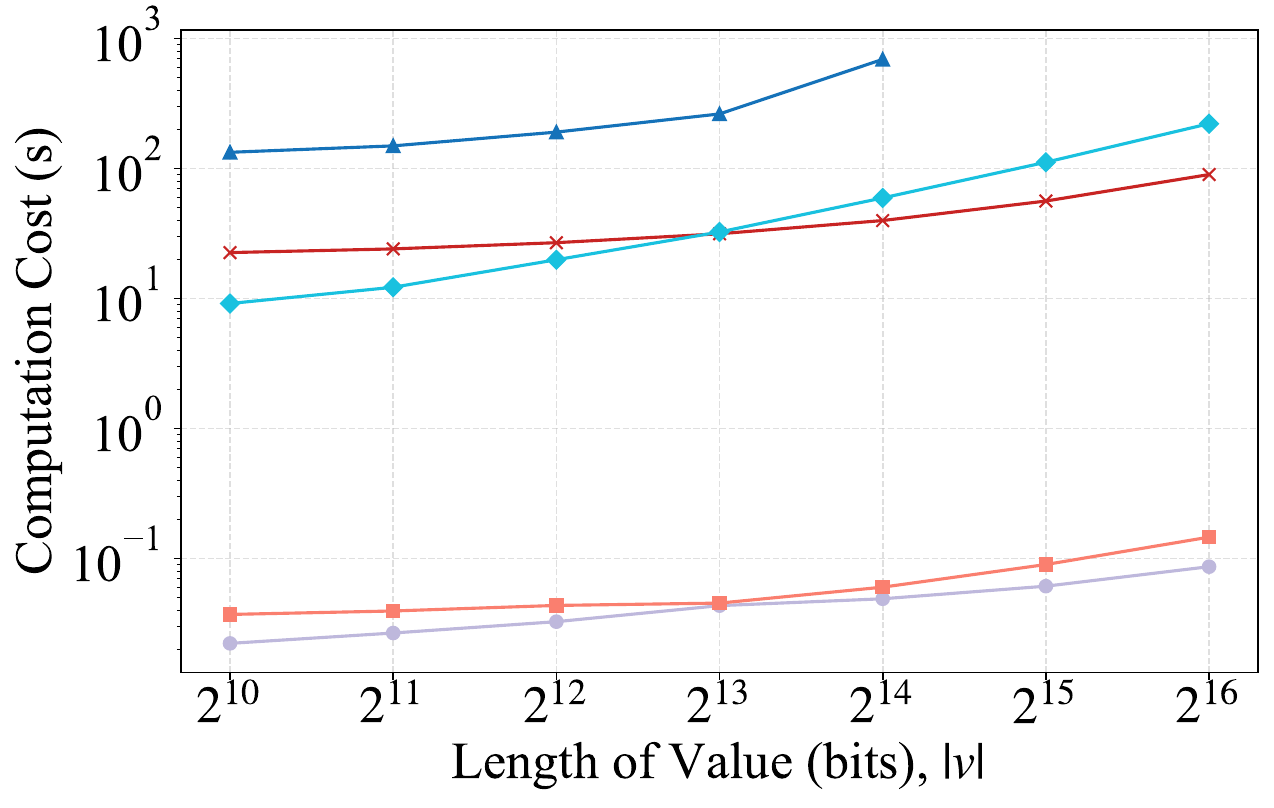}
    \end{minipage}
	}
    \subfloat[Total Time]
	{
 \label{fig:value_size_c}
		\begin{minipage}[b]{0.23\textwidth}
        \centering
        \includegraphics[width=\textwidth]{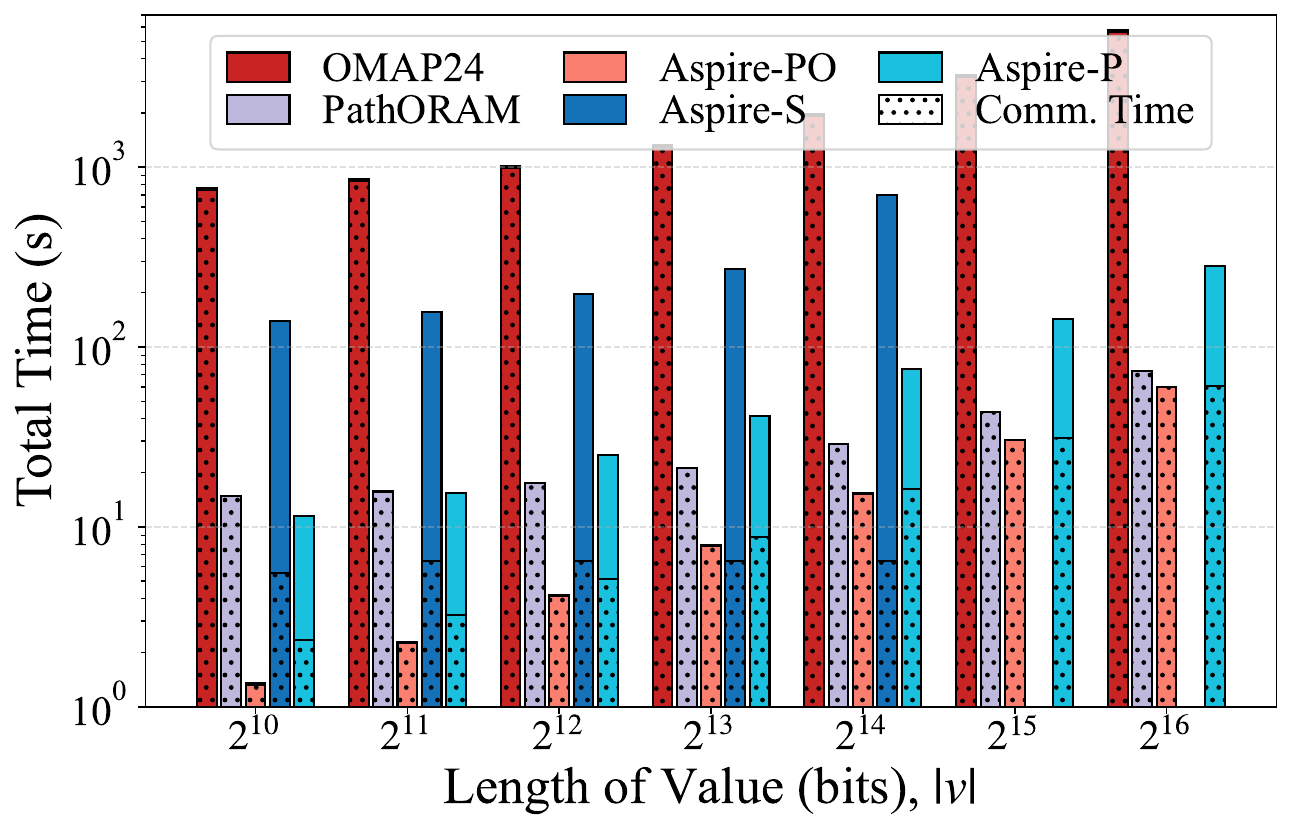}
    \end{minipage}
	}
    \subfloat[Total Time (Parallel)]
	{
 \label{fig:value_size_d}
		\begin{minipage}[b]{0.23\textwidth}
        \centering
        \includegraphics[width=\textwidth]{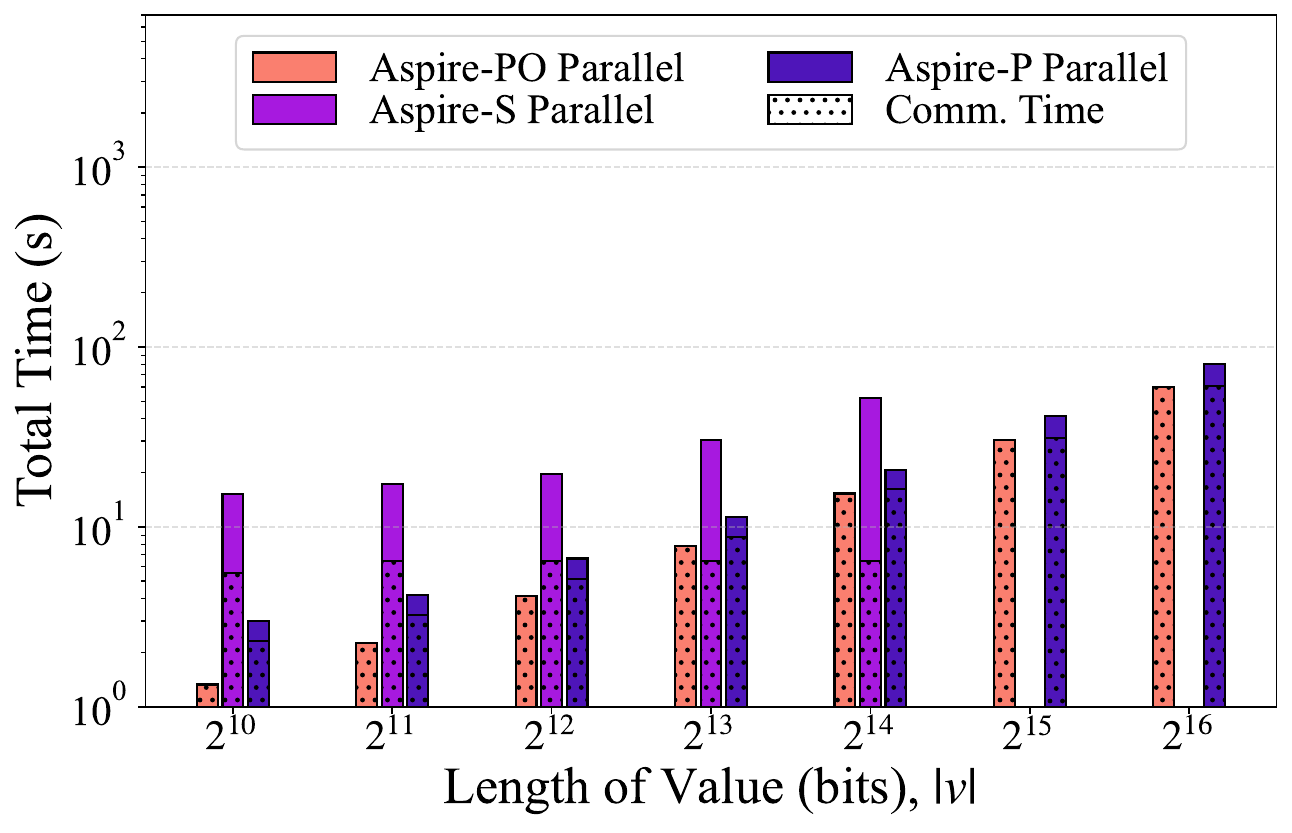}
    \end{minipage}
	}
    \caption{Query performance with different value lengths.}
    \label{fig:value_size}
\end{figure*}

\begin{figure}[tbp]
    \centering
    \subfloat[Communication Cost]
	{
 \label{fig:update_a}
		\begin{minipage}[b]{0.23\textwidth}
        \centering
        \includegraphics[width=\textwidth]{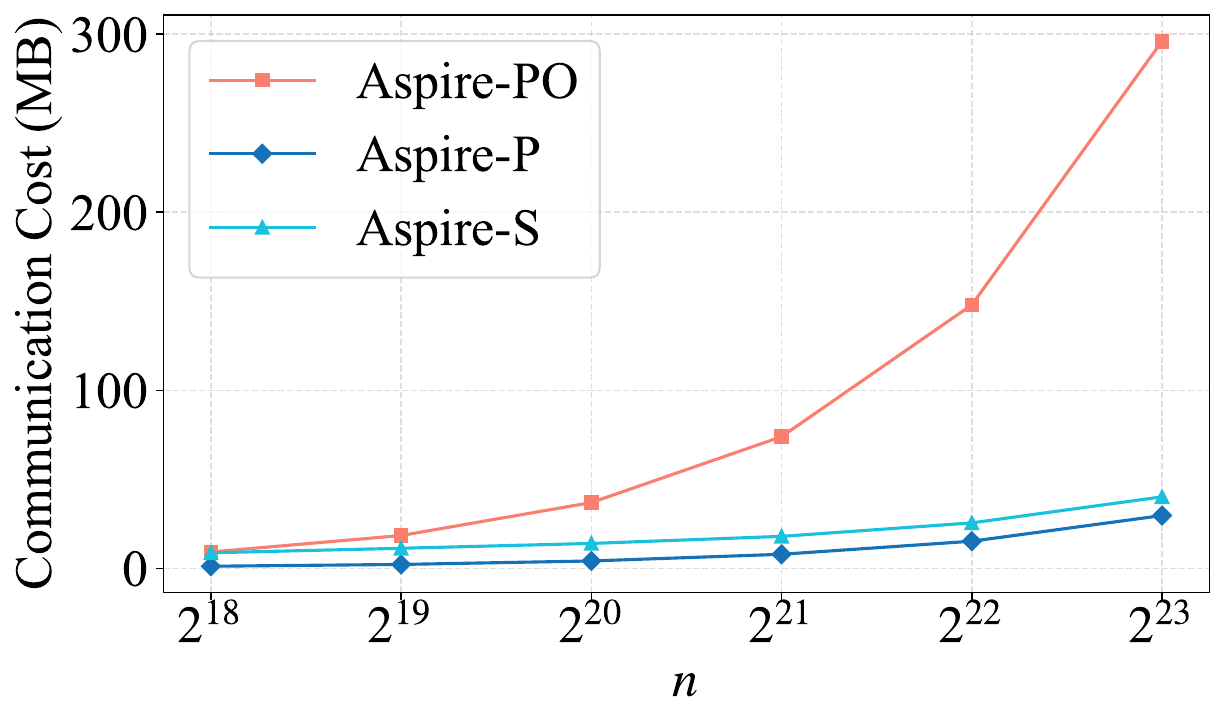}
    \end{minipage}
	}
    \subfloat[Computation Cost]
	{
 \label{fig:update_b}
		\begin{minipage}[b]{0.23\textwidth}
        \centering
        \includegraphics[width=\textwidth]{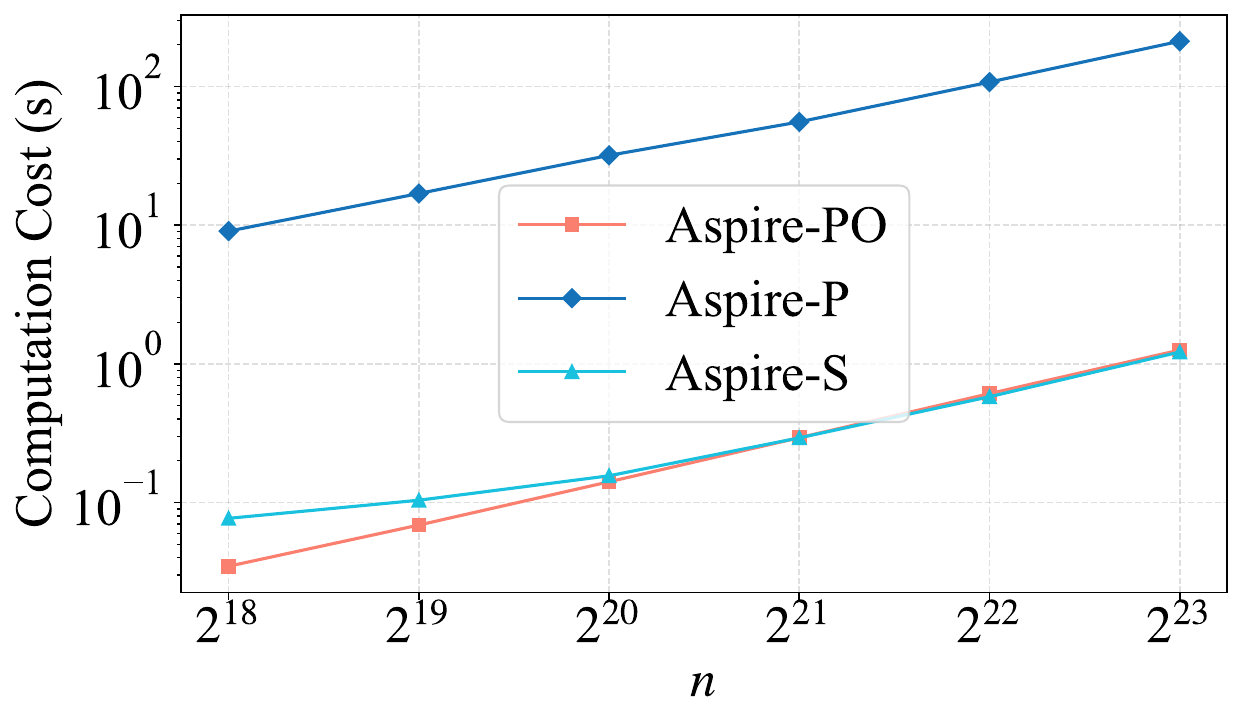}
    \end{minipage}
	}
    
    \caption{Update performance with different numbers of key-value pairs.}
    \label{fig:update}
\end{figure}

\noindent \textbf{Length of value.}
We then set $n=2^{20}$ and the number of buckets to $2^9$, and vary the value length $|v|$ (from $2^{10}$ bits to $2^{16}$ bits). 
The results are shown in Figure \ref{fig:value_size}. 
The \textit{Aspire-S} exhibits nearly constant communication overhead, maintaining a stable cost of approximately 24 MB for value lengths between $2^{11}$ and $2^{14}$ bits. 
This is because the underlying Spiral protocol encapsulates the query result within a ciphertext polynomial that has a fixed capacity of 8192 bytes and does not consider compressing the results from multiple buckets. 
Consequently, as long as the value size does not exceed this 8192-byte limit, the scheme returns the same number of ciphertexts with identical sizes, keeping the communication bandwidth invariant. 
Note that we only extended the evaluation of \textit{Aspire-S} up to a value length of $2^{14}$ bits, as testing larger values requires more working memory, which exceeds the RAM capacity. 
Except for \textit{Aspire-S}, the communication costs of all the schemes increase as the length of the value increases. The \textit{Aspire-P} shows communication costs similar to those of PathORAM and \textit{Aspire-PO}, at about 220 MB at $|v|=2^{16}$.

Also, the computation cost of all the schemes increases as $|v|$ increases. The \textit{Aspire-S} and \textit{Aspire-P} remain efficient even with a large value length. When $|v|=2^{13}$, the amortized total time of \textit{Aspire-S} and \textit{Aspire-P} is about 2050 ms and 222 ms, which decreases in parallel to approximately 153 ms and 61 ms.

\subsubsection{Update performance} We evaluate the update performance of \textit{Aspire-S}, \textit{Aspire-P}, and \textit{Aspire-PO} by varying the database size. 
Whenever the client performs an update, \textit{Aspire} temporarily stores it locally and subsequently issues $\mathsf{Update}$ operations after a predefined time interval or after a specified number of $\mathsf{UpdateLocal}$ to synchronize the local changes with the server. 
In contrast, PathORAM and OMAP24 do not employ update mechanisms; instead, they operate over data structures of fixed sizes.
The $\mathsf{UpdateLocal}$ of \textit{Aspire} only needs to store the updates locally; thus, it takes negligible computational time (the storage costs are presented in Section \ref{sec:local_storage}). 
We present the cost of $\mathsf{Update}$ in Figure \ref{fig:update}. 
For \textit{Aspire-P}, the client is required to download one bucket and upload a new bucket, which results in relatively low communication overhead (approximately 29.7 MB for $n = 2^{23}$). In contrast, \textit{Aspire-S} further necessitates the upload of additional evaluation keys, thereby increasing the total communication by approximately 10 MB compared to \textit{Aspire-P}.
Similar to the scenario described in $\mathsf{Setup}$, \textit{Aspire-PO} incurs substantially higher communication overhead, amounting to approximately 296 MB.
For the computation costs, \textit{Aspire-S} needs only a small amount of time to rebuild a PIR instance on a bucket (about 1.2 seconds at $n=2^{23}$).
While \textit{Aspire-P} requires re-encoding the whole database (about 216 seconds at $n=2^{23}$), since PIRANA encodes the contents from multiple buckets into a vector, changing a bucket results in re-encoding the entire vector.
Since the execution of $\mathsf{Update}$ is independent of $\mathsf{UpdateLocal}$, the computation and communication costs of $\mathsf{Update}$ can be amortized over the updates of multiple clients.

\subsection{Summary and Discussion} 

\textit{Aspire-PO}, \textit{Aspire-P}, and \textit{Aspire-S} each exhibit distinct performance characteristics and associated trade-offs. 
\textit{Aspire-PO} requires increased client-side local storage (over 100 MB when $n=2^{23}$), as well as higher communication and computational overhead during the setup phase; however, it achieves lower query latency in most of the evaluated scenarios.
In contrast, \textit{Aspire-P} and \textit{Aspire-S} incur reduced client storage requirements (about 8.4 MB) at the cost of higher computational overhead during query processing.
Overall, \textit{Aspire-P} attains a lower end-to-end execution time than \textit{Aspire-S}, particularly when the number of buckets is large. 
With parallelization enabled, Aspire attains an amortized query latency of 1.38 ms (7.53 s for a batch of 5,461 queries). 
Furthermore, the computational burden is predominantly placed on the server side, while the client incurs only a minor computational load.
Regarding update operations, \textit{Aspire-S} and \textit{Aspire-PO} support lightweight updates, whereas \textit{Aspire-P} exhibits higher update costs due to the underlying PIRANA mechanism, which encodes content from multiple buckets into a single vector.

{
The performance of \textit{Aspire} reflects a fundamental trade-off between efficiency and privacy. Whereas a plaintext key-value store can directly retrieve indexed records, \textit{Aspire} additionally performs cryptographic query processing and either a PIR scan or oblivious path accesses. This additional computation and communication enable \textit{Aspire} to hide the queried keys and plaintext-dependent access patterns, thereby achieving restricted zero-leakage security.

Compared with prior work, PathORAM is competitive and achieves lower computation time and, in some settings, lower query latency, whereas OMAP24 incurs substantially higher communication overhead and total latency because each lookup requires many interaction rounds. 
The $O(\log n)$ bandwidth complexity of PathORAM is attractive for individual accesses and makes it competitive in several scenarios. 
For a large batch of independent lookups, however, this cost is incurred for each lookup and therefore accumulates with the batch size. 
By amortizing the work across a batch, \textit{Aspire-P} maintains nearly constant communication as the batch size increases and, with parallelization, outperforms the ORAM-based alternatives for sufficiently large batches. Meanwhile, \textit{Aspire-PO} retains the computational advantages of ORAM at the cost of increased client storage and communication. \textit{Aspire} flexibly supports both PIR and ORAM instantiations, allowing users to choose the best fit for their network and computation constraints.
}

Consequently, \textit{Aspire-PO} is preferred when the client can provide large local storage and a high-bandwidth network. 
When client storage is constrained, \textit{Aspire-P} is recommended for workloads involving large batch queries over many keywords with infrequent updates, whereas \textit{Aspire-S} is more suitable for scenarios requiring frequent updates.

\section{Conclusion}
In this paper, we propose restricted zero-leakage security for encrypted databases, which allows no semantic information of the plaintext to be revealed.
The restricted zero-leakage security defends against LAA and can be practically achieved.
The PIR techniques are adopted to construct an encrypted key-value database with restricted zero-leakage security.
By transferring keyword search to index access with MPH and utilizing PBC for batch queries, the proposed \textit{Aspire} supports batch keyword search and dynamic updates. 
Experiments demonstrate that \textit{Aspire} achieves an amortized query latency of less than 1.38 ms (7.53 s for a batch of 5461 queries) on a database with $2^{22}$ key-value pairs.

\section*{Ethical Considerations}

This work does not involve human subjects, user data, or real-world systems; all experiments use randomly generated synthetic data for efficiency tests. The paper proposes only a defensive mechanism and does not introduce new attacks or analyze real-world vulnerabilities. Therefore, the potential risks are minimal, and the work contributes positively to improving system security and privacy.

\bibliographystyle{IEEEtran}
\bibliography{sample-base}
%



\appendices

\section{Notations and Definitions}
\label{app:notations}
\subsection{Notations}
We use $\{v_1,...,v_n\}$ for a list that contains $n$ elements. In particular, the $[n]$ denotes $\{1,2,...,n\}$. The $|K|$ represents the number of elements in a set $K$. Frequently used notations are summarized in Table \ref{tab:summary_of_notations}.

\begin{table}[t]
    \centering
    \begin{threeparttable}
    \caption{Summary of notations.}
    \label{tab:summary_of_notations}
    \begin{tabular}{c@{\hspace{3pt}}c}
    \hline
        Notation & Description \\
        \hline\hline
        $n$ & The (initial) number of pairs in the database.\\\hline
        $b$ & The number of buckets.\\\hline
        $B_{size}$ & The number of pairs in every encrypted bucket.\\\hline
        $Count$ & The current number of pairs in the database.\\\hline
        $Rcount$ & The number of times $\mathsf{Update}$ is executed.\\\hline
        $S_1,S_2$ & The local stash of the client.\\\hline
        $\widetilde{KV}_i$ & The $i$-th encrypted bucket.\\
        \hline\hline
        AP & \makecell{Access pattern, the contents accessed during search\\ or
update operations in the storage system.}\\\hline
        SP & \makecell{Search pattern, whether two queries are \\for the
same keyword.}\\\hline
        QC & \makecell{Query correlation, the probability of the keyword\\ chosen for the next query after the current query.}\\\hline
        VP & \makecell{Volume pattern, the number of files\\ in the response.}\\\hline
        FVP & \makecell{File volume pattern, the size of each file \\in the response.} \\\hline
        RSP &  \makecell{Response size pattern, the total size of files\\ in the response.}\\\hline
        UL & \makecell{Update leakages.}\\
        \hline

    \end{tabular}
    
    \end{threeparttable}
\end{table}

\subsection{Private Information Retrieval}
\label{app:PIR_preliminary}
PIR enables a client to retrieve elements from a \textit{public} database owned by a server without leaking the accessed indexes to the server. 
We consider PIR schemes with a setup phase in which the server and client cooperate to pre-process hints that can be used in the retrieval. 
In some PIR schemes, there is no setup phase, or the client does not participate in the setup phase; then the client takes $\bot$ as a hint.
A PIR scheme contains the following algorithms or protocols, which all take the computational security parameter $\lambda$ as an implicit input:
\begin{itemize}
    \item $(\mathsf{hint}_C,\mathsf{hint}_S) \gets \mathsf{PIR}.\mathsf{Setup}(DB)$, which takes the database $DB=\{v_1,...,v_n\}\in \mathcal{V}^n$ as input and outputs the $\mathsf{hint}_C$ for the client and the $\mathsf{hint}_S$ for the server.
    \item $(\mathsf{st}_C,\mathsf{qu})\gets \mathsf{PIR}.\mathsf{Query}(\mathsf{hint}_C,index)$, which takes an $index\in [n]$ and the $\mathsf{hint}_C$ from the client and outputs the state of the client $\mathsf{st}_C$ and the query $\mathsf{qu}$.
    \item $\mathsf{ans}\gets \mathsf{PIR}.\mathsf{Answer}(\mathsf{hint}_S,\mathsf{qu})$, which takes a query $\mathsf{qu}$ from the client and the $\mathsf{hint}_S$ of the server as input and outputs a response $\mathsf{ans}$ for the query.
    \item $v\gets \mathsf{PIR}.\mathsf{Recover}(\mathsf{st}_C,\mathsf{hint}_C,\mathsf{ans})$, which takes the response $\mathsf{ans}$, the client's state $\mathsf{st}_C$, and the hint from the client $\mathsf{hint}_C$ as input and outputs a value $v\in\mathcal{V}$.
\end{itemize}

\noindent\textbf{Correctness}. A PIR scheme $\mathsf{PIR}$ is \textit{correct} if, for any $DB=$$\{v_1,...,$ $v_n$ $\}$ and any $index\in[n]$, 
\begin{equation}
    \mathsf{PIR}.\mathsf{Recover}(\mathsf{st}_C,\mathsf{hint}_C,\mathsf{PIR}.\mathsf{Answer}(\mathsf{hint}_S,\mathsf{qu}))=v_{index},
\end{equation}
where $(\mathsf{hint}_C,\mathsf{hint}_S) $ $\gets \mathsf{PIR}.$ $\mathsf{Setup}(DB)$ and $(\mathsf{st}_C,\mathsf{qu})$ $\gets\mathsf{PIR}.$ $\mathsf{Query}$ $(\mathsf{hint}_C,index)$. 

\noindent\textbf{Security}. A PIR scheme $\mathsf{PIR}$ is $(T,\varepsilon)$-secure if, for any adversary $\mathcal{A}$ with a running time of $T$, and for all $i,j\in [n]$, it holds
\begin{equation}
	\begin{aligned}
		|\textbf{Pr}(\mathcal{A}(\mathsf{qu})=1|(\mathsf{st}_C,\mathsf{qu}\gets\mathsf{PIR}.\mathsf{Query}(\mathsf{hint}_C,i)))-
         \\
		\textbf{Pr}(\mathcal{A}(\mathsf{qu})=1|(\mathsf{st}_C,\mathsf{qu}\gets\mathsf{PIR}.\mathsf{Query}(\mathsf{hint}_C,j)))|\leq \varepsilon
	\end{aligned}
\end{equation}

Sometimes, the client wants to retrieve the value related to a given keyword without leaking the keyword, which is referred to as keyword PIR. A keyword PIR scheme $\mathsf{KPIR}$ contains the same algorithms as the PIR schemes, except that keyword PIR operates on a key-value database and queries for keywords instead of indexes. 
The security and correctness of $\mathsf{KPIR}$ are similar to $\mathsf{PIR}$. If the queried keyword is not in the database, the output of $\mathsf{Recover}$ is $\bot$ with a probability of $1-\mathsf{negl}(\lambda)$.

\subsection{Batch Code and Batch PIR}

A $(n,m,s,b)$-batch code $\mathsf{BC}$ takes $n$ elements and produces $b$ buckets with a total of $m$ codewords. 
The batch code ensures that any $s$ elements from the $n$ elements could be recovered from the $b$ buckets by retrieving at most one codeword from each bucket. Also, the batch code ensures that $m<s\cdot n$.
Batch PIR uses batch code for efficient retrieval of a batch of elements. 
For any $s$ elements to be retrieved, the client can retrieve one element from each of the $b$ buckets. The server only needs to go through the $m$ codewords once, which is smaller than calling PIR over $n$ elements $s$ times. 
In \cite{DBLP:conf/sp/AngelCLS18}, Angel et al. proposed the probabilistic batch code (PBC) based on 3-way cuckoo hashing, where the server uses 3 hash functions to map the elements in the database to $b$ buckets (each element is copied 3 times, $m=3n$), and the client uses cuckoo hashing with the same 3 hash functions to map the wanted elements to each bucket. 
The PBC contains the following functions:
\begin{itemize}
    \item $\{B_1,...,B_b\}\gets\mathsf{PBC}.\mathsf{Encode}(DB)$, which takes a database $DB=\{pl_1,...,pl_n\}$ and outputs $b$ buckets $\{B_1,...,B_b\}$, where $B_i$ is a vector of codewords in the $i$-th bucket.
    \item $\{\sigma,\bot\}\gets \mathsf{PBC}.\mathsf{GenSchedule}(I)$, which takes a set of $s$ elements $I=\{i^*_1,...,i_s^*\}$, where $i^*_j\in DB$, as input and outputs a schedule $\sigma$ that maps each element in $I$ to a bucket, with each bucket containing only one element mapped to it. It outputs $\bot$ if it cannot produce a schedule.
\end{itemize}
\noindent\textbf{Correctness}. A probabilistic batch code scheme $\mathsf{PBC}$ is said to be correct if, for any database $DB=\{pl_1,\dots,pl_n\}$, any element set $I=\{i_1^*,\dots,i_s^*\}$, and any schedule $\sigma$ outputted by the algorithm $\mathsf{PBC.GenSchedule}$ on input $I$, it holds that for every $i^* \in I$, the index $i^*$ is contained in the corresponding block, i.e., $i^* \in B_{\sigma[i^*]}$.

By using 3 hash functions and setting $b=1.5s$, the PBC fails to recover the set of $s$ elements with a probability of $p=2^{-40}$ while providing $m=3n$~\cite{DBLP:conf/sp/AngelCLS18}. 

\subsection{Minimal Perfect Hash}
\label{app:MPH_preliminary}
Minimal Perfect Hash (MPH) functions are static data structures that map a set $K\subset \mathcal{K}$ of keywords to the first $|K|$ natural numbers without collisions. MPH functions are allowed to return any value if the queried keyword is not in the original set $K$. 
MPH constructions achieve $O(n)$ bits of space, regardless of the length of the keywords. 
Recent works~\cite{DBLP:conf/alenex/EspositoGV20} show that the MPH could be stored with less than 2 bits per keyword and achieve expected linear construction time and constant lookup time. The minimal perfect hashing contains the following functions:
\begin{itemize}
    \item $mph\gets \mathsf{MPH}.\mathsf{Setup}(K)$, which takes a set $K\subset \mathcal{K}$ of $n$ keywords and outputs a minimal perfect hash function $mph$.
    \item $i\gets \mathsf{MPH}.\mathsf{Evaluate}(mph,k)$ takes a $mph$ and a keyword $k\in \mathcal{K}$ as input and outputs an integer $i$ in $[n]$ if $k$ is in $K$, where $n=|K|$.
\end{itemize}

\noindent\textbf{Correctness}. A minimal perfect hash scheme $\mathsf{MPH}$ is correct if, for any set $K\subset \mathcal{K}$, for any $k_1,k_2\in K$ and $k_1\neq k_2$, $i_1=\mathsf{MPH}.\mathsf{Evaluate}(mph,k_1)$ and $i_2=\mathsf{MPH}.\mathsf{Evaluate}(mph,k_2)$, the following holds:1) $i_1,i_2\in [n]$. 2) $i_1\neq i_2$.

\section{Analysis of Expected Bucket Overflow}
\label{app:overflow}
{
The number of elements in each bucket, denoted by $X$, follows a binomial distribution
$X \sim \operatorname{Binomial}(3n, 1/b)$. When the expected bucket load $\mu = 3n/b$ is sufficiently large and the number of buckets $b$ is not small, as in our case, $X$ can be approximated by a normal distribution. Its mean and variance are $\mu = 3n/b$ and $\sigma^2 = (3n/b)(1 - 1/b)$. For large $b$ in our case, we may further approximate it as $\sigma^2 \approx 3n/b$.

Let the capacity threshold of each bucket be
$B_{size} = {3n}/{b} + \sqrt{{6n\ln b}/{b}}$.
Under the normal approximation, the expected overflow of a single bucket can be expressed using the standard normal loss function. Therefore, the expected total overflow over all $b$ buckets is approximately
\[
E_{\mathrm{overflow}}
\approx
\sqrt{3nb}
\left[
\phi\!\left(\sqrt{2\ln b}\right)
-
\sqrt{2\ln b}
\left(1-\Phi\!\left(\sqrt{2\ln b}\right)\right)
\right],
\]
where $\phi(\cdot)$ and $\Phi(\cdot)$ denote the density function and cumulative distribution function of the standard normal distribution, respectively.
}
\section{Restricted Zero-Leakage with Leakage Functions}
 
\label{app:security_proof}
This appendix instantiates the restricted leakage function of Definition \ref{def:rzl} for encrypted key-value databases and proves that Aspire satisfies this leakage profile.

For the encrypted key-value database in Section \ref{sec:system_model}, we define
\[
\mathcal L_{\mathsf{Setup}}(DB)=L'(|DB|),
\qquad
\mathcal L_{\mathsf{Query}}(K)=\bot,
\]
\[
\mathcal L_{\mathsf{UpdateLocal}}(op,k,v)=\bot,
\qquad
\mathcal L_{\mathsf{Update}}()=L''(|DB|).
\]
Additionally, if $\mathcal L_{\mathsf{Update}}()=L''(|DB|)$ occurs only once every $\alpha$ updates, otherwise, $\mathcal L_{\mathsf{Update}}()=L''(\bot)$, we say that the encrypted key-value database additionally achieves $\alpha$-periodic size exposure.

\begin{lemma}
    The above leakage function is non-semantic and therefore a valid restricted leakage function for Definition \ref{def:rzl}.
\end{lemma}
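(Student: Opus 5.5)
The plan is to check, one component at a time, that the leakage profile $\mathcal L_{\mathsf{rz}}=(\mathcal L_{\mathsf{Setup}},\mathcal L_{\mathsf{Query}},\mathcal L_{\mathsf{UpdateLocal}},\mathcal L_{\mathsf{Update}})$ meets the requirement of Definition~\ref{def:rzl}. That requirement is that its output depends only on public parameters and operation metadata (operation type, timing, database size), and never on plaintext keywords or values. To make ``non-semantic'' precise, I will use a factorization criterion. A leakage function is non-semantic if there is a function $g$ such that $\mathcal L(\cdot)=g(\mathsf{meta})$. Here $\mathsf{meta}$ is the tuple of public parameters (including $\lambda$ and $b$), the operation type, its position in the operation sequence, and the current $|DB|$. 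Equivalently, any two inputs with the same metadata but arbitrary differing plaintext contents produce identical leakage.

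The four components are then handled in turn.

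\textbf{Query and UpdateLocal.} For $\mathcal L_{\mathsf{Query}}(K)=\bot$ and $\mathcal L_{\mathsf{UpdateLocal}}(op,k,v)=\bot$, the output is constant. It is independent of $K$, $op$, $k$ and $v$, so it factors trivially through the metadata; the only thing revealed is the fact of invocation, which is operation type and timing.

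\textbf{Setup.} For $\mathcal L_{\mathsf{Setup}}(DB)=L'(|DB|)$, the output is a function of $|DB|$ alone. So for any two databases $DB_0,DB_1$ with $|DB_0|=|DB_1|$, the leakage coincides regardless of which keywords or values they contain.

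\textbf{Update.} For $\mathcal L_{\mathsf{Update}}()=L''(|DB|)$, the output again depends only on the current size. In the $\alpha$-periodic variant, the output is either $L''(|DB|)$ or $L''(\bot)$, and the choice between them is determined by the update counter. That counter is operation-count metadata, so this case also factors through $\mathsf{meta}$.

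The main obstacle is the \textbf{Update} case, and more broadly arguing that the database size $|DB|$ is itself non-semantic. The paper's own example warns that size can carry semantic information under compression. I will address this by noting that in the key-value model of Section~\ref{sec:system_model}, every $\mathsf{UpdateLocal}$ with $op=ADD$ increments $Count$ by exactly one and every $DEL$ decrements it by exactly one, independent of $k$ and $v$. The only exception is a delete of an absent key, which the client can canonicalize or treat as part of the operation-type metadata. Consequently, the trajectory of $|DB|$ is determined by the initial size together with the sequence of operation types, and this is exactly the metadata that restricted zero-leakage permits.

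Finally, I conclude that every component of $\mathcal L_{\mathsf{rz}}$ factors through $\mathsf{meta}$. By the definition, it is therefore a valid restricted leakage function.
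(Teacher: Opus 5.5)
Your proposal is correct and takes essentially the paper's approach. The paper likewise argues that any two executions agreeing on public parameters, operation types, timing, and database-size evolution produce identical Setup, Query, UpdateLocal, and Update leakage, which is your factorization-through-metadata criterion checked component by component.

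Your extra argument that the $|DB|$ trajectory is determined by operation types is unnecessary, because Definition~\ref{def:rzl} already lists database size as permitted metadata. Its fallback of treating absent-key deletes as operation-type metadata would also be dubious if it were needed, since whether a key is present depends on the plaintext.
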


\begin{proof}
{
Consider any two executions having the same public parameters, operation types, timing, and database-size evolution. Their setup leakages are identical because $L_{\mathsf{Setup}}$ depends only on the initial database size. Their $\mathsf{Query}$ and $\mathsf{UpdateLocal}$ leakages are both $\bot$, independent of the queried keywords and updated records. Finally, their Update leakages are identical because $L_{\mathsf{Update}}$ depends only on the current database size. Therefore, the complete leakage transcripts of the two executions are identical and contain no information about the plaintext keywords or values beyond the permitted metadata.
}
\end{proof}

\begin{theorem}
    Assuming the underlying $\mathsf{PIR}$ provides security {and the underlying encryption scheme is IND-CPA secure}, \textit{Aspire} offers restricted zero-leakage security and achieves $\alpha$-periodic size exposure with $\alpha=b$.
    \label{the:security_of_aspire}
\end{theorem}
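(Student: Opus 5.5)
We prove Theorem~\ref{the:security_of_aspire} by the simulation paradigm of Definition~\ref{def:rzl}: we build a simulator $\mathcal S$ that sees only the leakage above, and we link the real and ideal experiments through a sequence of hybrid games. The simulator works as follows.

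\textbf{Setup.} Given $L'(|DB|)=n$, $\mathcal S$ computes $B_{size}=3n/b+\sqrt{(6n\log b)/b}$ from the public $b$. It then produces $b$ buckets, each holding $B_{size}$ fresh encryptions of a fixed dummy string of length $|k|+|v|+\lceil\log 3\rceil$. It runs $\mathsf{PIR.Setup}$ honestly on each bucket, playing the client role itself.

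\textbf{Query.} For each $\mathsf{Query}$, $\mathcal S$ issues one $\mathsf{PIR.Query}$ per bucket at index $1$ (or uniformly random indices). The real client also queries every bucket exactly once, so the shape of the transcript matches.

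\textbf{UpdateLocal.} $\mathcal S$ outputs nothing.

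\textbf{Update.} $\mathcal S$ keeps its own counter $Rcount$. When $Rcount\%b=0$, it receives $L''(|DB|)=Count$ and recomputes $B_{size}$; otherwise it reuses the current $B_{size}$. It then lets the adversary view the download of bucket $Rcount\%b$ and uploads $B_{size}$ fresh encryptions of dummies, followed by a fresh PIR setup on that bucket.

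This immediately gives $\alpha$-periodic size exposure with $\alpha=b$. In the real scheme, the bucket size changes only when bucket $0$ is rebuilt. For the other $b-1$ updates in each period, the transcript depends only on $Rcount$, which the server already knows, so $\mathcal L_{\mathsf{Update}}$ can be taken to be $L''(\bot)$ there.

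The indistinguishability argument runs through the following hybrids.

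\textbf{$H_0$.} This is the real experiment.

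\textbf{$H_1$.} Every PIR query index is replaced by the simulator's fixed index. We handle one bucket per query at a time, for a polynomial number $q(\lambda)\cdot b$ of steps. Each step is bounded by PIR security: the reduction knows the whole database state, so it can simulate everything else and embed the challenge query. A detail to check is that the real index chosen via $mph_i$ depends on encrypted contents. That is fine, because the reduction picks the two indices $i$ and $j$ itself.

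\textbf{$H_2$.} All ciphertexts are replaced by encryptions of dummies, bucket by bucket, via IND-CPA. Padding entries, which the real scheme writes as random strings, are likewise argued indistinguishable from ciphertexts, or we assume that padding uses encryptions of dummies.

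\textbf{$H_3$.} This is the ideal experiment. The remaining observable quantities coincide exactly with those of $H_2$:
\begin{itemize}
\item the number of buckets, which is the public $b$;
\item the bucket sizes, which are a deterministic function of $Count$ at the last bucket-$0$ rebuild, a value given to $\mathcal S$;
\item which bucket is downloaded at each update, namely $Rcount\%b$;
\item the existence and timing of each operation.
\end{itemize}
Correctness of overflow handling matters only on the client side: moving items to the stash never changes the uploaded size.

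The main obstacle is the order of the two hybrid steps. The PIR server-side setup and answers are computed over the ciphertexts, so the PIR hybrid should come first, while the real plaintext database is still in place. Only then do we apply IND-CPA; the PIR setup run on dummy ciphertexts is then a deterministic, publicly computable function of those ciphertexts and preserves the reduction. For PIR schemes where the server encodes several buckets jointly, as in PIRANA, the same argument applies with the PIR database being the concatenation of the buckets. Summing the $\mathrm{poly}(\lambda)$ negligible hybrid gaps completes the proof.
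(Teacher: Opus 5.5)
Your proposal is correct and takes essentially the same route as the paper. It uses the same simulator (equal-length dummy encryptions padded to the $B_{size}$ fixed by $|DB|$, PIR queries at content-independent positions, an empty transcript for $\mathsf{UpdateLocal}$, and $B_{size}$ refreshed only when $Rcount\%b=0$, giving $\alpha=b$), justified by PIR security for the queries and IND-CPA for every uploaded bucket. The only difference is that you group the hybrids by primitive (all PIR indices, then all ciphertexts, with each reduction tracking the plaintext state itself), which spells out what the paper compresses into its per-operation ``conditioned on the preceding transcript'' summation; the order you insist on is convenient rather than forced, since an IND-CPA reduction that knows the plaintext could just as well compute the real MPH positions.
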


\begin{proof}
For the security of our protocol, we construct a PPT simulator $\mathcal{S}$.

{Let $\mathbf{T}=(t_0,t_1,\ldots,t_q)$ denote the complete server-side transcript of the real execution, where $t_0$ is the setup transcript and $t_i$ is the transcript of the $i$-th operation. Let $\mathbf{T}'=(t'_0,t'_1,\ldots,t'_q)$ denote the corresponding simulated transcript. We construct a PPT simulator $\mathcal{S}$ that uses only the permitted leakage.}

{
\noindent \textbf{Setup.}
From the setup leakage, $\mathcal{S}$ obtains the padded bucket size determined by $|DB|$. It generates $\{\widetilde{KV}'_0,\ldots,\widetilde{KV}'_{b-1}\}$ by encrypting equal-length all-zero records under $sk'$ and pads each bucket to the same size as in the real setup. It then executes
lines \ref{alg1:sends_buckets}-\ref{alg1:server2} in Algorithm \ref{alg1} on these simulated buckets.
By IND-CPA security, the encryptions of the real records are computationally indistinguishable from the encryptions of equal-length all-zero records, i.e., $t'_0\approx_c t_0$.}

\noindent \textbf{Query.}
For any $t'_i$ where $o'_i$ is a $\mathsf{Query}$, the simulator generates PIR queries using line \ref{alg2:pir1} of Algorithm \ref{alg2} with uniformly random positions for each bucket and then runs lines \ref{alg2:sends_query}-\ref{alg2:sends_answer} of Algorithm \ref{alg2} to generate $t'_i$.
By the security of $\mathsf{PIR}$, queries for random positions are computationally indistinguishable from queries for real positions, and thus $t'_i \approx_c t_i$.

{
\noindent\textbf{UpdateLocal.}
For $op_i=\mathsf{UpdateLocal}$, no message is sent to the server.
Thus, $\mathcal{S}$ outputs an empty transcript, which is identical to
the real server-side transcript.
}

\noindent \textbf{Update.}
For any $t'_i$ where $o'_i$ is an $\mathsf{Update}$, the simulator maintains a counter $Rcount$ and a variable $B_{size}$.
If $Rcount\% b = 0$, the simulator sets $B_{size}$ according to $|DB|$ as line \ref{alg4:set_Bsize} of Algorithm \ref{alg4_v2}.
The simulator then generates a fresh bucket $\widetilde{KV}'$ of random ciphertexts with size $B_{size}$ and runs the line \ref{alg4:reset_PIR} of Algorithm \ref{alg4_v2}.
Since all uploaded buckets in $\mathsf{Update}$ are freshly re-encrypted, {by IND-CPA security,} $t'_i$ is computationally indistinguishable from the real transcript $t_i$.
As the simulator only needs to reset $B_{size}$ when $Rcount\%b=0$, \textit{Aspire} achieves $\alpha$-periodic size exposure with $\alpha=b$.

{
For every operation, the above indistinguishability holds conditioned on any preceding transcript and consistent protocol state. Since the execution contains only polynomially many operations, the total distinguishing advantage is bounded by the sum of polynomially many negligible advantages and is therefore negligible. Consequently, $\mathbf{T}'\approx_c\mathbf{T}$, even when each operation is chosen adaptively based on the preceding transcripts.}
Therefore, the simulated transcripts generated by $\mathcal{S}$ are computationally indistinguishable from the real execution transcripts, and \textit{Aspire} achieves adaptive restricted zero-leakage security.


\end{proof}

{
\section{Computation and Communication Breakdown of Aspire}
\label{app:breakdown}
To understand the sources of \textit{Aspire}'s overhead, we break down the total runtime of \textit{Aspire-P} into two components, the PBC plus MPH processing and the PIRANA execution within Aspire-P. As shown in Figure~\ref{fig:breakdown}, PBC and MPH processing take less than $1$ ms for all database sizes, whereas PIRANA execution, including query generation, takes $3.674$--$85.109$ s and dominates the overall runtime.

We further compare \textit{Aspire-P} with standalone PIRANA to quantify the overhead introduced beyond a basic PIR query. 
We note that standalone PIRANA provides private indexed retrieval but does not support \textit{Aspire}’s dynamic key-value functionality or its restricted zero-leakage security. Therefore, this comparison is intended only to quantify the additional computation and communication introduced by \textit{Aspire-P}.
We use the same parameters for PIRANA as for \textit{Aspire-P}.
Standalone PIRANA takes $3.472$--$80.067$ s, and \textit{Aspire-P} increases the runtime by only $4.84\%$--$7.74\%$. \textit{Aspire-P} also incurs only $0.79$--$1.00$ MB of additional communication, corresponding to an increase of $5.17\%$--$9.47\%$. Thus, the PBC and MPH operations introduce little computational cost, and incorporating PIRANA into \textit{Aspire-P} causes only limited overhead compared with standalone PIR queries.

\begin{figure}[tbp]
    \centering
    \subfloat[Computation Cost]
	{
 \label{fig:breakdown_a}
		\begin{minipage}[b]{0.23\textwidth}
        \centering
        \includegraphics[width=\textwidth]{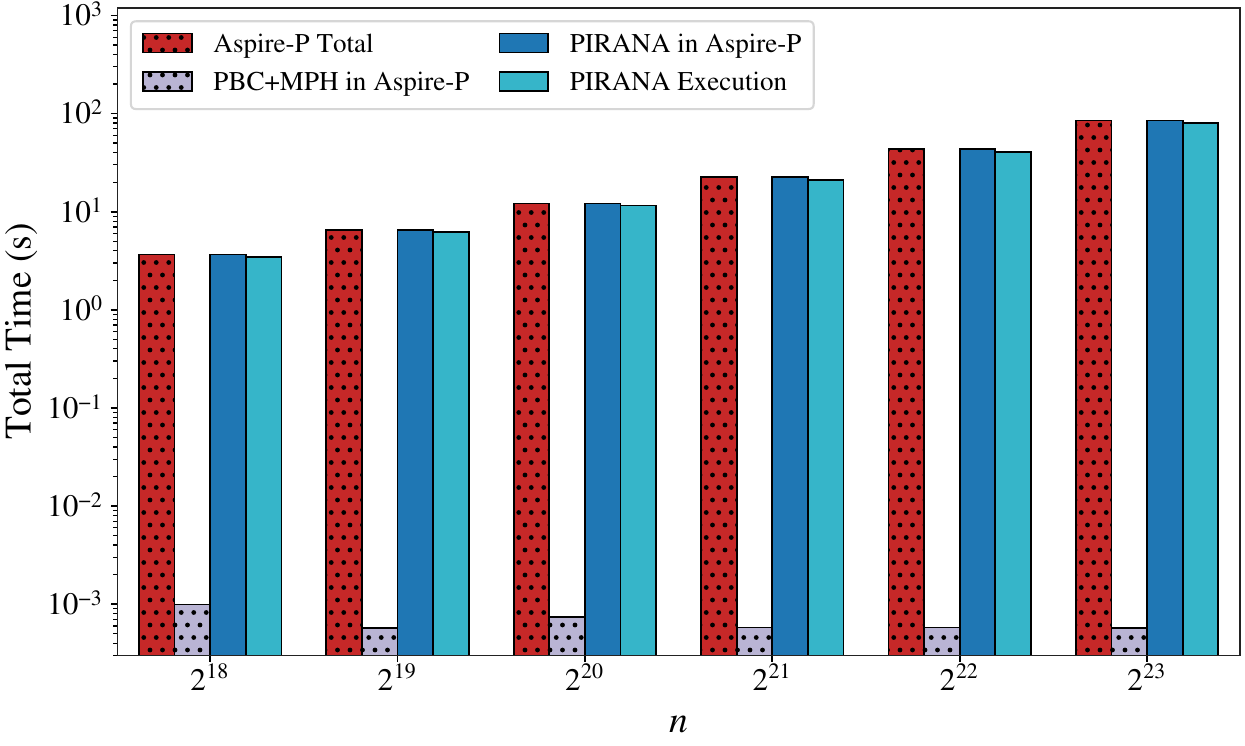}
    \end{minipage}
	}
    \subfloat[Communication Cost]
	{
 \label{fig:breakdown_b}
		\begin{minipage}[b]{0.23\textwidth}
        \centering
        \includegraphics[width=\textwidth]{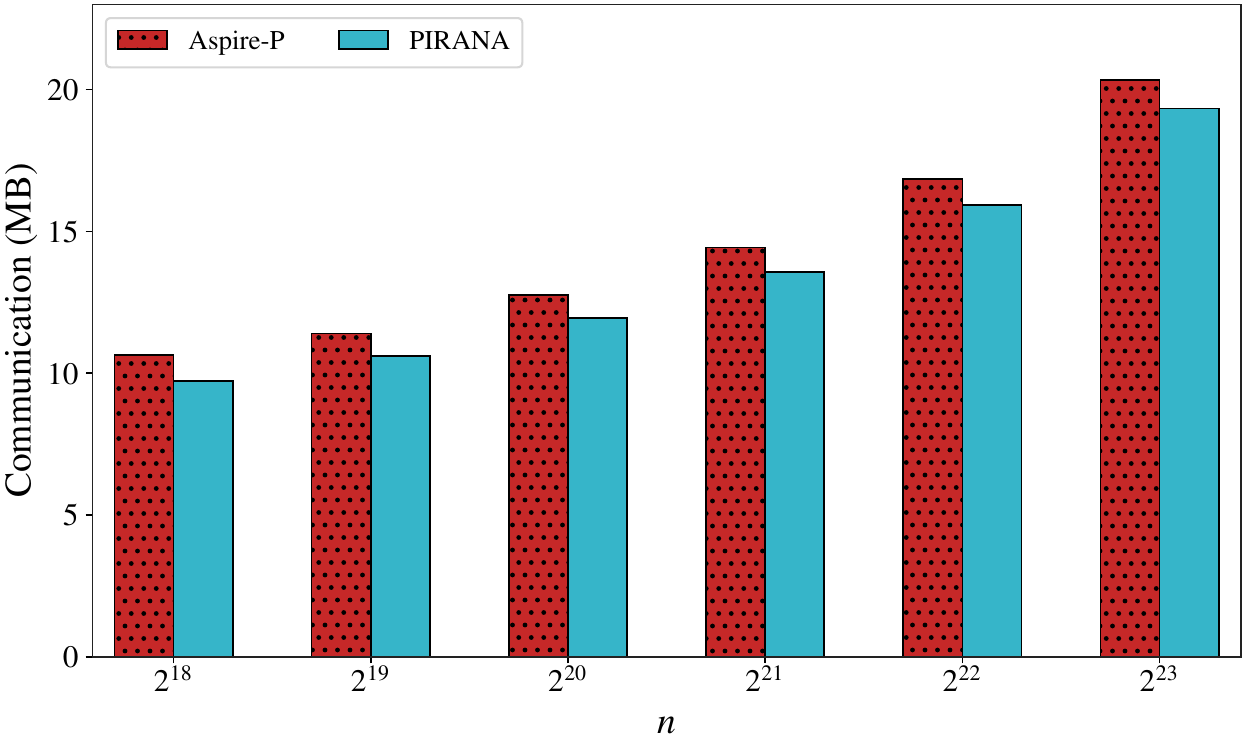}
    \end{minipage}
	}
    
    \caption{Runtime breakdown and communication comparison of \textit{Aspire-P}. The four bars in (a) represent the total runtime of \textit{Aspire-P}, the PBC plus MPH processing time in \textit{Aspire-P}, the PIRANA execution time in \textit{Aspire-P}, and the runtime of standalone PIRANA on a database of the same size, respectively.}
    \label{fig:breakdown}
\end{figure}

}

\section{Instantiations of Keyword PIR}
\subsection{Existing Keyword PIR}
\label{app:review_of_KPIR}
Keyword PIR was first proposed by Chor et al.~\cite{DBLP:journals/iacr/ChorGN98}. A naive solution is to download the entire keyword set, find the index of the querying keyword, and invoke a standard PIR. 
However, the keyword set may be prohibitively large to download. 
Chor et al.~\cite{DBLP:journals/iacr/ChorGN98} propose a solution that does not require downloading the keyword set; instead, it requires $O(\log n)$ rounds of standard PIR.

Recent works are based on hashing schemes to solve the above problems. Some works~\cite{DBLP:conf/uss/AliLP0SSY21,DBLP:conf/uss/0001LPZ000D25} utilize a public hashing function to distribute the entries of the database to a bucket. The client then retrieves the entire bucket corresponding to the querying keyword using a standard PIR. Ali et al.~\cite{DBLP:conf/uss/AliLP0SSY21} also use cuckoo hashing to ensure that each bucket contains only one entry. However, this solution requires multiple invocations of standard PIR and an increased number of buckets.

There are also works relying on homomorphically evaluating the equality operator~\cite{DBLP:conf/uss/MahdaviK22,DBLP:journals/pvldb/AhmadAAG22}. 
Mahdavi et al.~\cite{DBLP:conf/uss/MahdaviK22} use the constant-weight code to encode all the keywords into a representation with the same number of 1's, which reduces the multiplicative depth of equality circuits. 
Ahmad et al.~\cite{DBLP:journals/pvldb/AhmadAAG22} use Fermat's little theorem to determine whether two numbers are equal, where computation is performed in an integer space rather than requiring the evaluation of a bitwise boolean circuit. 
Nevertheless, the complicated homomorphic operations in the equality operator incur significantly higher computation and communication costs than existing standard PIR~\cite{DBLP:conf/uss/0001LPZ000D25}.

Other works encode the database as a linear combination, which can then be queried using standard PIR. 
For example, Patel et al.~\cite{DBLP:conf/uss/PatelSY23} proposed SparsePIR, where the client can follow the same encoding to retrieve the sum of entries after encoding and reconstruct the value corresponding to the keyword. SparsePIR is also compatible with important optimizations, such as recursion and batching. Similarly, ChalametPIR~\cite{DBLP:conf/ccs/CeliD24} uses binary fuse filters (BFF) to encode the database and an inner product between the BFF output and the query vector for decoding. 
Hao et al.~\cite{DBLP:conf/uss/0001LPZ000D25} utilize the sparse property of the BFF, and the client only needs to access a small constant number $\alpha$ ($\alpha=3$ in their work) of entries. 

Although instantiating the keyword PIR by directly sending the keyword-to-index mapping is communication expensive, there are several ways to decrease the size of the mapping.
In ~\cite{DBLP:journals/iacr/ChorGN98}, Chor et al. use perfect hashing to construct a keyword PIR based scheme. However, the proposed scheme requires two rounds of PIR and enlarges the database by three times.
Hao et al.~\cite{DBLP:conf/uss/0001LPZ000D25} also propose a keyword PIR that utilizes the approximate key-to-index mapping. 
In their experiments, the mapping costs 324 KB of storage for $2^{20}$ entries with $\varepsilon=4$. 
To retrieve the corresponding value of a keyword, Hao et al. propose a Row-KOPIR scheme to retrieve all the entries in a row of the database.

\subsection{New Keyword PIR based on MPH}
\label{app:keyword_PIR}
\input{Alg_KeywordPIR}
Keyword PIR can be instantiated with standard PIR by sending the keyword-to-index mapping to the client. We find that the mapping can be replaced by a minimal perfect hashing function, which incurs only minimal communication and storage costs for the client.

Our keyword PIR scheme is presented in Algorithm \ref{algkeywordPIR}. In the setup, the server extracts all the keywords from the database and calls the $\mathsf{MPH}.\mathsf{Setup}(DB.keyword)$ to generate a minimal perfect hash $mph$, which maps all the keywords to $[n]$ without collisions. 
The server then re-arranges the database as $DB'$, where $DB'[\mathsf{MPH}$ $.\mathsf{Evaluate}$ $(mph,k_i)]=k_i||v_i$ for all $i\in [n]$. The server then calls the standard PIR scheme $\mathsf{PIR}$ to set up on $DB'$. The $mph$ is then sent to the client as the client's hint.

When the client wants to query a keyword $k$, it uses the $\mathsf{MPH}$ $.\mathsf{Evaluate}$ $(mph,k)$ to get the index $ind$ of the queried keyword in $DB'$ and calls the standard PIR for the query. Also, the $\mathsf{Answer}$ of our keyword PIR protocol is the same as the standard PIR. 
To recover the queried content, the client calls $\mathsf{PIR}.$ $\mathsf{Recover}$ to get $k'||v$. If the $k'=k$, then the client gets the queried content $v$. Otherwise, the client has queried a keyword that is not in the database, and the client receives $\bot$.

Most of the steps in our keyword PIR are exactly the same as those in the standard PIR, except for the steps related to minimal perfect hashing. This means that the optimizations in the standard PIR could also be used in our keyword PIR.
Additionally, the minimal perfect hash function could be efficient and storage-friendly~\cite{DBLP:conf/alenex/EspositoGV20}.
The $mph$ could be constructed with less than 2 bits per keyword (approximately 0.22 MB for $2^{20}$ arbitrarily long keywords in our experiments) and achieves expected linear construction time and constant lookup time. Thus, our keyword PIR has computational efficiency comparable to that of standard PIR.

\begin{figure}[tp]
    \centering
    \includegraphics[width=0.8\linewidth]{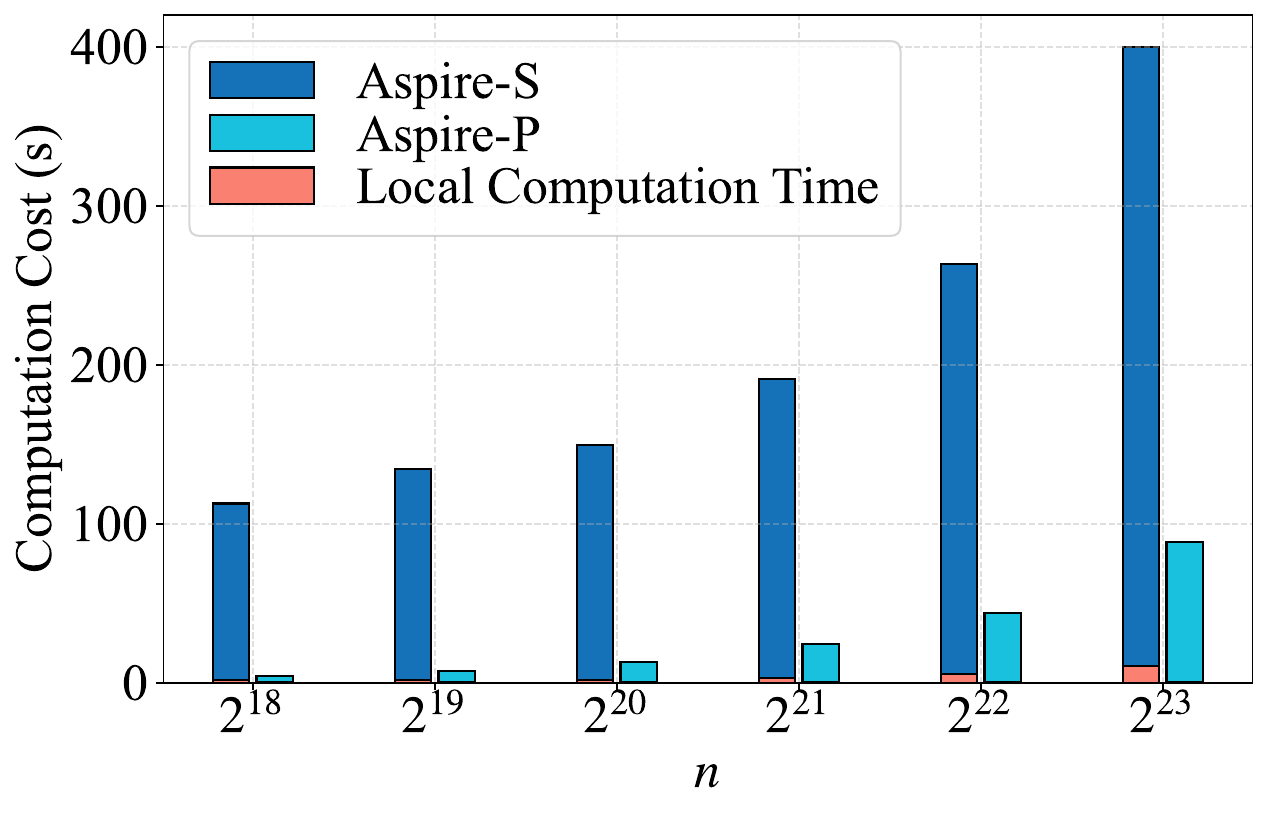}
    \caption{The local computation cost of \textit{Aspire-S} and \textit{Aspire-P} in search query.}
    \label{fig:local_computation_cost}
\end{figure}
\section{Local Query Costs}
\label{app:local_computation_cost}
This section demonstrates that the local computation overhead of \textit{Aspire} when employing PIR (i.e., \textit{Aspire-S} and \textit{Aspire-P}) constitutes only a small fraction of the overall computational cost.
We vary the number of key-value pairs from $2^{18}$ to $2^{23}$ (using 512 buckets to support batch queries of 341 keywords), and the corresponding results are presented in Figure \ref{fig:local_computation_cost}.
For \textit{Aspire-S}, the local computation time is approximately 1.5 s for $n = 2^{18}$ and 10.3 s for $n = 2^{23}$, which is less than 2.6\% of the total computation time in both cases.
For \textit{Aspire-P}, the local computation time is approximately 0.10 s for $n = 2^{18}$ and 0.25 s for $n = 2^{23}$, again accounting for less than 2.6\% of the total computation time.
These results indicate that the client in both \textit{Aspire-S} and \textit{Aspire-P} incurs only a modest computational burden when issuing search queries.

\section{Artifact Appendix}
\label{app:artifact}

\subsection{Description \& Requirements}

\subsubsection{How to access}
We have open-sourced the implementation of \textit{Aspire} at
\url{https://doi.org/10.5281/zenodo.22813536}.
The artifact contains the implementations of the three
instantiations evaluated in the paper:
PIRANA-based \textit{Aspire} (\textit{Aspire-P}),
Spiral-based \textit{Aspire} (\textit{Aspire-S}), and
PathORAM-based \textit{Aspire} (\textit{Aspire-PO}).

\subsubsection{Hardware dependencies}
Our experiments were conducted on a machine equipped with an
AMD EPYC 9354 32-core processor and 1\,TB of main memory.
The large amount of memory is particularly important when evaluating \textit{Aspire-S} as the Spiral-based configuration used by our benchmark requires approximately 1\,TB of RAM for the largest evaluated parameter sets.
The PIRANA and PathORAM-based implementations can be evaluated with substantially less memory.

\subsubsection{Software dependencies}
Our implementation is written in C++ and was evaluated on
Ubuntu~22.04.5.

The main software dependencies are:

\begin{itemize}
    \item Microsoft SEAL, version $\geq 4.0.0$;
    \item Microsoft Kuku, version 2.1.0;
    \item OpenSSL, version $\geq 3.0.2$;
    \item Crypto++, version 8.7.0;
    \item Intel HEXL, version 1.2.6; and
    \item Sux, version 1.0.3.
\end{itemize}

Detailed installation instructions are provided in the README file.
Git LFS, CMake, a C++ compiler, GNU Make, and Python~3 are also required to obtain, build, and execute the artifact.



\subsection{Artifact Installation \& Configuration}
The artifact uses Python scripts to compile the implementation and automate the experiments.

\subsubsection{Building the artifact}
The artifact provides a Python script for compiling the benchmark.

\begin{verbatim}
python3 compile.py
\end{verbatim}

After compilation, the main benchmark executable has the following interface:

\begin{verbatim}
./build/main <bucket_num> <query_num> \
  <log2(N)> <key_bits> <val_bits> \
  <dim0> <dim1> <threads> <pir_type>
\end{verbatim}

The parameters are defined as follows:

\begin{itemize}
    \item \texttt{bucket\_num}: the number of PBC buckets $b$.

    \item \texttt{query\_num}: the number of keywords queried in one batch. 

    \item \texttt{log2(N)}: the base-2 logarithm of the number of key--value pairs $n$ in the database. For example, \texttt{log2(N)=18} corresponds to $n=2^{18}$.

    \item \texttt{key\_bits}: the keyword length $|k|$ in bits.
    The default experiments use 32-bit keywords.

    \item \texttt{val\_bits}: the value length $|v|$ in bits.
    For example, \texttt{val\_bits=2048} corresponds to
    a 256-byte value.

    \item \texttt{dim0} and \texttt{dim1}: internal parameters of the
    Spiral PIR implementation. They are only relevant to
    \texttt{SpiralPIR} and should not be modified arbitrarily.

    \item \texttt{threads}: the number of threads used for parallel query processing. 

    \item \texttt{pir\_type}: the underlying oblivious-access
    mechanism. \texttt{PIRANA}, \texttt{SpiralPIR}, and
    \texttt{ORAM} correspond to \textit{Aspire-P},
    \textit{Aspire-S}, and \textit{Aspire-PO}, respectively.
\end{itemize}

\subsubsection{Quick functionality test}
A small-scale experiment can be executed as follows:

\begin{verbatim}
./build/main 512 341 18 32 2048 \ 
6 2 32 ORAM
./build/main 512 341 18 32 2048 \ 
6 2 32 PIRANA
./build/main 512 341 18 32 2048 \ 
6 2 32 SpiralPIR
\end{verbatim}


For comprehensive evaluation, the artifact provides the
\texttt{run.py} benchmark driver.

\begin{verbatim}
python3 run.py --type <TYPE> \
  --threads <THREADS> --testcase <ID>
\end{verbatim}

where \texttt{TYPE} is one of
\texttt{PIRANA}, \texttt{SpiralPIR}, and \texttt{ORAM}. Note that \texttt{SpiralPIR} requires substantial memory and may consume up to approximately 1\,TB of RAM for large parameter settings.
The argument \texttt{testcase} controls the parameter sweep:

\begin{itemize}
    \item \texttt{testcase=1}: vary the number of elements in
    the database (Figures 4, 5, 6, and 9);
    \item \texttt{testcase=2}: vary the value length (Figure. 8);
    \item \texttt{testcase=3}: vary the number of buckets (Figure. 7); and
    \item \texttt{testcase=0}: execute all of the above tests.
\end{itemize}

The default automated evaluation of the query uses 16 threads.  



\subsection{Major Claims}

The artifact supports the following major experimental claims:

\begin{itemize}
    \item \textbf{(C1) \textit{Aspire-P}:}
    \textit{Aspire-P} achieves efficient large-batch queries with low
    communication overhead and benefits significantly from parallel execution.

    \item \textbf{(C2) \textit{Aspire-S}:}
    \textit{Aspire-S} achieves communication-efficient queries and supports
    lightweight updates, consistent with the results reported in the paper.

    \item \textbf{(C3) \textit{Aspire-PO}:}
    \textit{Aspire-PO} achieves low query computation and latency by using PathORAM, at the cost of higher client storage and communication overhead.
\end{itemize}

\subsection{Evaluation}

\subsubsection{Experiment (E1): \textit{Aspire-P}
[5 human-minutes + 5 hours compute time]}

This experiment evaluates the PIRANA-based \textit{Aspire-P}
and validates Claim C1.

\textit{[Preparation]}
The artifact varies the database size, value length, and number of buckets.

\textit{[Execution]}
Execute:
\begin{verbatim}
python3 run.py --type PIRANA \
  --threads 1 --testcase 0
\end{verbatim}
\begin{verbatim}
python3 run.py --type PIRANA \
  --threads 16 --testcase 0
\end{verbatim}

\textit{[Results]}
The results should follow the \textit{Aspire-P} and  \textit{Aspire-P} Parallel curves in
Figures 4-9. In particular, \textit{Aspire-P} maintains low
communication overhead for large batches. Absolute runtimes may vary with hardware.

\subsubsection{Experiment (E2): \textit{Aspire-S}
[5 human-minutes + 3 hours compute time]}

This experiment evaluates the Spiral-based \textit{Aspire-S}
and validates Claim C2.

\textit{[Preparation]}
Use the same experimental settings as E1.
Note that \texttt{SpiralPIR} requires substantial memory and may consume approximately 1\,TB of RAM for large parameter settings.

\textit{[Execution]}
Execute:
\begin{verbatim}
python3 run.py --type SpiralPIR \
  --threads 1 --testcase 0
\end{verbatim}
\begin{verbatim}
python3 run.py --type SpiralPIR \
  --threads 16 --testcase 0
\end{verbatim}

\textit{[Results]}
The results should follow the \textit{Aspire-S} curves in
Figures 4-9. 

\subsubsection{Experiment (E3): \textit{Aspire-PO}
[5 human-minutes + 1 hours compute time]}

This experiment evaluates the PathORAM-based \textit{Aspire-PO}
and validates Claim C3.

\textit{[Preparation]}
Use the same database, query, and parameter settings as E1.

\textit{[Execution]}
Execute:
\begin{verbatim}
python3 run.py --type ORAM \
  --threads 1 --testcase 0
\end{verbatim}
\begin{verbatim}
python3 run.py --type ORAM \
  --threads 16 --testcase 0
\end{verbatim}

\textit{[Results]}
The results should follow the \textit{Aspire-PO} curves in
Figures 4-9.

%
%
%
%

\subsection{Acknowledgement}

We thank the artifact evaluators for their time and effort in
evaluating our implementation and reproducing the experimental
results.
\end{document}



\appendices
\section{Artifact Appendix}

The artifact appendix is meant to be a self-contained document presenting a roadmap for setting up and evaluating your artifact. It should provide the following elements:
\begin{enumerate}
\item a list of the hardware, software, and configuration \textbf{requirements} for running the artifact;
\item a clear description of how, and in what respects, the artifact \textbf{supports} the research presented in the paper;
\item a guide for how others can \textbf{execute and validate} the artifact for its functional and usability aspects;
\item the \textbf{major claims} made by your paper and a clear description of how to obtain data for each claim through your supplied artifact.
\end{enumerate}

Points 3-4 are not required if only the \textit{Available} badge is requested. Linking the paper claims to the artifact is a necessary step that allows artifact evaluators to functionally test and reproduce your results. Towards that end, explicitly list down items (e.g., results, plots, tables) from the paper and cross-reference them with the experiments to be reproduced.
 
Unless otherwise stated, filling every section below with the requested contents is mandatory for participating in the artifact evaluation process. Section titles cannot be changed.

\subsection{Description \& Requirements}

This section should list all the information necessary to recreate the experimental setup you used to run your artifact.

Provide also a link to an archival repository where all the artifact's main components (software, datasets, documentation, etc.) can be accessed and, where this applies, the minimal hardware and software requirements to run your artifact.

It is also very good practice to list and describe in this section benchmarks where those are part of your artifacts or simply have been used to produce results with it.

\subsubsection{How to access}
Describe here how to access your artifact. During the artifact evaluation, in case of a private repository, you should provide instructions on how to access it. For the camera-ready version of the appendix, you must provide a DOI link to the AEC-approved artifact version that you must upload by then to permanent storage.

\subsubsection{Hardware dependencies}
Simply write ``None." where this does not apply to your artifact.

\subsubsection{Software dependencies} 
Simply write ``None." where this does not apply to your artifact.

\subsubsection{Benchmarks} 
Describe here any data (e.g., datasets, models, workloads, etc.) required by the experiments with this artifact reported in your paper. Simply write ``None." where this does not apply to your artifact.

\subsection{Artifact Installation \& Configuration}

This section should include all the high-level installation and configuration steps required to prepare the environment to be used for the evaluation of your artifact.

\subsection{Experiment Workflow}

This section should provide a high-level view of the experimental workflow and how it is implemented, invoked, and (if needed) customized. The section is optional if the experiment workflow can be easily embedded in the Evaluation section.

\textit{Detailed information for submission, reviewing, and badging process followed for the evaluation of NDSS 2024 artifacts can be found at:} \url{https://secartifacts.github.io/ndss2024/}.

\subsection{Major Claims}
Enumerate here the major claims (Cx) made in your paper. For this purpose, we ask you to use the this and the following section and cross-reference the items therein, as explained next.

Follows an example:

\begin{itemize}
    \item (C1): \textsc{System} achieves the same accuracy of the state-of-the-art systems for a task X while saving 2x storage resources. This is proven by the experiment (E1) whose results are illustrated/reported in [refer to your paper's plots, tables, sections, etc.].
    \item (C2): \textsc{System} has been used to uncover new bugs in the Y software. This is proven by the experiments (E2) and (E3) in [ibid].
\end{itemize}

\subsection{Evaluation}

This section include all the operational steps and experiments which must be performed to evaluate if your artifact is functional and validate the proposed/presented experiments. For this purpose, we ask you to use the this and the preceding section and cross-reference the items therein.

If the \textit{Reproduced} badge is requested, the section should reference your paper's key results: describe the expected numbers and, where applicable, the maximum variation expected (particularly important for performance numbers). If an experiment is a scaled-down version of the one run for the paper's evaluation, provide a clear indication of it with a justification of the scaling-down being meaningful.

We also highly encourage you to provide your estimates of human- and compute-time for each of the listed experiments.

Follows an exemplary structure for one experiment (Ey):

\subsubsection{Experiment (E1)}
[Optional Name] [30 human-minutes + 1 compute-hour]: provide a short explanation of the experiment and expected results.

\textit{[How to]} Describe thoroughly the steps to perform the experiment and to collect and organize the results as expected from your paper. We encourage you to use the following structure with three blocks for the description of an experiment (and to provide precise indications about the expected outcome for each of the steps from the blocks):

\textit{[Preparation]}
Describe the steps required to prepare and configure the environment for this experiment.

\textit{[Execution]}
Describe the steps to run this experiment.

\textit{[Results]}
Describe the steps required to collect and interpret the results for this experiment.

\subsection{Customization}
Provide here notes on how to customize your experiments, when applicable. The section is optional.

\subsection{Notes}
This section is meant to allow authors to include any further important notes that may not fall within any of the previous categories. We kindly encourage you to remove this section where this sort of content may not be strictly needed (rather than filling it with unnecessary or redundant information).


%% file: Alg_Setup.tex
\begin{algorithm}[!t]
 
    \caption{Setup.}
    \label{alg1}
    \SetKwFunction{FMain}{Setup}
    The client holds a $S_1=\{s_0,...,s_{b-1}\}$, where $s_i$ is initialized as empty, a $Count=n$, which records the number of key-value pairs in database. The server and the client both hold a $Rcount$, which records the reconstruction times and is initialized as 0\;
    \SetKwProg{Bn}{procedure}{}{end}
    \Bn{\FMain{$DB={(k_1,v_1),\dots,(k_n,v_n)},\lambda$}}{
        \textbf{Client}:\\
        Sets $K$ as $\{k_1,...,k_n\}$\label{alg1:ekeywords}\;
        $\{K_0,K_1,\dots,K_{b-1}\}\gets\mathsf{PBC.Encode}(K)$\label{alg1:PBCencode}\;
        Sets the bucket size $B_{size}$ to $3n/b+\sqrt{(6n\log b)/b}$\label{alg1:set_bsize}\;
        \For {$i\in \{0,...,b-1\}$}{
            \While{$|K_i|>B_{size}$}{\label{alg1:bsize_b}
                Removes a random keyword $k$ from $K_i$ and adds $(k,j,v,ADD)$ to $S_1[i]$, where the $j$-th hash function in PBC maps $k$ to $i$ and $v$ is the corresponding value of $k$\;
            }\label{alg1:bsize_e}
            $mph_i\gets \mathsf{MPH.Setup}(K_i)$\label{alg1:setmph}\;
            Rearranges $K_i$ according to $mph_i$\label{alg1:rearrange}\;
            \For{$j \in [|K_i|]$}{\label{alg1:enc1}
                Encrypts $K_i[j]||v||p$ as $\widetilde{KV}_{i,j}$, where $v$ is the corresponding value of keyword $K_i[j]$ and the $p$-th hash function in PBC maps $K_i[j]$ to the $i$-th bucket\;
            }\label{alg1:enc2}
            Pads $\widetilde{KV}_i$ to the size of $B_{size}$ with {encryptions of all-zero records} if $|\widetilde{KV}_i|<B_{size}$\label{alg1:pad}\;
        }

        Sends $\{\widetilde{KV}_0,...,\widetilde{KV}_{b-1}\}$ to the server\label{alg1:sends_buckets}\;
        \textbf{Server}:\\
        \For{$i\in \{0,...,b-1\}$}{\label{alg1:server1}
            $(\mathsf{hintc}_i,\mathsf{hints}_i) \gets {\mathsf{PIR.Setup}(\widetilde{KV}_i)}$\;
        }
        Sends $\{\mathsf{hintc}_0,...,\mathsf{hintc}_{b-1}\}$ to the client\label{alg1:server2}\;
    }

\end{algorithm}

%% file: Alg_Query.tex
\begin{algorithm}[!t]
    
    \caption{Query.}
    \label{alg2}
    \SetKwFunction{FMain}{Query}
    \SetKwProg{Bn}{procedure}{}{end}
    \Bn{\FMain{$Q=\{q_1,...,q_{s}\}$}}{
        \textbf{Client}:\\

        $\sigma \gets \mathsf{PBC.GenSchedule}(q_1,...,q_{s})$, where $\sigma$ maps $q_1,...,q_{s}$ to different buckets\label{alg2:pbcgen}\;
        \For{$i\in \{0,...,b-1\}$}{
            \eIf{$\sigma^{-1}[i]\neq \bot$}{\label{alg2:mph1}
            $pos_i\gets mph_i(\sigma^{-1}[i])$\;}{
            $pos_i\gets r$\;
            }\label{alg2:mph2}
            $(\mathsf{stc}_i,\mathsf{qu}_i)\gets \mathsf{PIR.Query}(\mathsf{hintc}_i,pos_i)$\label{alg2:pir1}\;
        }
        Sends $\{\mathsf{qu}_0,...,\mathsf{qu}_{b-1}\}$ to the server\label{alg2:sends_query}\;
        \textbf{Server}:\\
        \For{$i\in \{0,...,b-1\}$}{
            $\mathsf{ans}_i\gets {{\mathsf{PIR}}.\mathsf{Answer}(\mathsf{hints}_i,\mathsf{qu}_i)}$\;
        }
        Sends $\{\mathsf{ans}_0,...,\mathsf{ans}_{b-1}\}$ to the client\label{alg2:sends_answer}\;
        \textbf{Client}:\\
        Initializes $V$ as the retrieved value for $q_1,...,q_{s}$\;
        \For{$i\in \{0,...,b-1\}$}{
            $k_i||v_i||p_i\gets {{\mathsf{PIR}}.\mathsf{Recover}(\mathsf{stc}_i,\mathsf{hintc}_i,\mathsf{ans}_i)}$\;
        }\label{alg2:pir2}
        \For{$q_i\in \{q_1,...,q_s\}$}{\label{alg2:check1}
            $index_i\gets \sigma(q_i)$\;
            Sets $V[i]$ as $v_{index_i}$ if $k_{index_i}=q_i$\; 
        }
        Checks the local stash $S_1$ for updates on $Q$ and updates the $V$ correspondingly\label{alg2:check2}\;
    }

\end{algorithm}

%% file: Alg_Update_v2.tex
\begin{algorithm}[!t]
    
    \caption{UpdateLocal.}
    \label{alg3_v2}
    \SetKwFunction{FMain}{UpdateLocal}
    \SetKwProg{Bn}{procedure}{}{end}
    \Bn{\FMain{$k,v,op$}}{
        \textbf{Client}:\\
        Computes $Pos={pos_1,...,pos_3}$, where $pos_i$ is $hash_i(k)$\label{alg3:pbc}\;\Comment{$hash_i$ is the $i$-th hash function in PBC}\;

            \eIf{$op$ is ADD}{\label{alg3:up1}
                $Count\gets Count+1$\;
                Adds $(k,i,v,ADD)$ to $S_1[pos_i]$ for every $i\in \{1,2,3\}$.
            }{
                $Count\gets Count-1$\;
                Adds $(k,i,\bot,DEL)$ to $S_1[pos_i]$ for every $i\in \{1,2,3\}$.
            }   \label{alg3:up2}
        
    }
    
\end{algorithm}

%% file: Alg_Reconstruct_v2.tex
\begin{algorithm}[!t]
  
    \caption{Update.}
    \label{alg4_v2}
    \SetKwFunction{FMain}{Update}
    \SetKwProg{Bn}{procedure}{}{end}
    \Bn{\FMain{}}{
        \textbf{Client}:\\
        \If{$Rcount\%b=0$}{\label{alg4:setBsize1}
            $B_{size}=3Count/b+\sqrt{(6Count\log b)/b}$\label{alg4:set_Bsize}\;
        }\label{alg4:setBsize2}
        
        Downloads the $\widetilde{KV}_{Rcount\%b}$ from the server and decrypts it as $KV'$\label{alg4:download}\;
        \For{$(k,i,v,op)\in S_1[Rcount\%b]$}{\label{alg4:updates_1}
            \eIf{$op$ is ADD}{
                Adds $k||v||i$ to $KV'$\;
            }{
                Deletes the entry starting with $k$ from $KV'$\;
            }
        }
        $S_1[Rcount\%b]\gets \emptyset$\label{alg4:updates_2}\;

             
        \While{$|KV'|>B_{size}$}{\label{alg4:remove1}
            Removes a random element $k||v||i$ from $KV'$ and adds $(k,i,v,ADD)$ to $S_1[Rcount\%b]$\;
            }\label{alg4:remove2}
            
        Sets $K$ as the keywords in $KV'$\label{alg4:genmph1}\;
        $mph_{Rcount\%b}\gets \mathsf{MPH.Setup}(K)$\;
        Rearranges the $KV'$ according to $mph_{Rcount\%b}$\label{alg4:genmph2}\;
        Encrypts $KV'$ as $\widetilde{KV}'$ and pads $\widetilde{KV}'$ to the size of $B_{size}$ with {encryptions of all-zero records} if $|\widetilde{KV}'|<B_{size}$\; 
        Uploads $\widetilde{KV}'$ to the server\;
        
        \textbf{Server and Client}:\\
        Setup a PIR on $\widetilde{KV}'$ to replace the original PIR on bucket $Rcount\% b$\label{alg4:reset_PIR}\;

        $Rcount=Rcount+1$\label{alg4:set_Rcount}\;
    }
    
\end{algorithm}

%% file: Alg_KeywordPIR.tex
\begin{algorithm}[!t]
   
    \caption{Keyword PIR via MPH.}
    \label{algkeywordPIR}
    \SetKwFunction{FMain}{Setup}
    
    \SetKwProg{Bn}{procedure}{}{end}
    \Bn{\FMain{$DB={(k_1,v_1),\dots,(k_n,v_n)}$}}{
        $mph\gets\mathsf{MPH}.\mathsf{Setup}(DB.keyword)$\;
        Constructs $DB'$, where $DB'[\mathsf{MPH}.\mathsf{Evaluate}(mph,k_i)]=k_i||v_i$ for $i\in [n]$\;
        $\mathsf{hint_{PIR_C}},\mathsf{hint_{PIR_S}}\gets \mathsf{PIR}.\mathsf{Setup}(DB')$\;
        Return $(\mathsf{hint_C}:=(\mathsf{hint_{PIR_C}},mph),\mathsf{hint_S}:=\mathsf{hint_{PIR_S}})$\;
    }
    \SetKwFunction{FMain}{Query}
    \SetKwProg{Bn}{procedure}{}{end}
    \Bn{\FMain{$\mathsf{hint_C},k$}}{
        Parse $\mathsf{hint_C}$ as $(\mathsf{hint_{PIR_C}},mph)$\;
        $ind\gets \mathsf{MPH}.\mathsf{Evaluate}(mph,k)$\;
        $(\mathsf{st_{PIR_C}},\mathsf{qu})\gets \mathsf{PIR}.\mathsf{Query}(\mathsf{hint_{PIR_C}},ind)$\;
        Return $(\mathsf{st_C}:=(\mathsf{st_{PIR_C}},k),\mathsf{qu})$\;
    }
    \SetKwFunction{FMain}{Answer}

    \SetKwProg{Bn}{procedure}{}{end}
    \Bn{\FMain{$\mathsf{qu},\mathsf{hint_S}$}}{
        $\mathsf{ans}\gets \mathsf{PIR}.\mathsf{Answer}(\mathsf{qu},\mathsf{hint_S})$\;
        Return $\mathsf{ans}$\;
    }

    \SetKwFunction{FMain}{Recover}
    
    \SetKwProg{Bn}{procedure}{}{end}
    \Bn{\FMain{$\mathsf{st_C},\mathsf{hint_C},\mathsf{ans}$}}{
        Parse $\mathsf{st_C}$ as $(\mathsf{st_{PIR_C}},k)$ and parse $\mathsf{hint_C}$ as $(\mathsf{hint_{PIR_C}},mph)$\;
        $k'||v\gets \mathsf{PIR}.\mathsf{Recover}(\mathsf{st_{PIR_C}},\mathsf{hint_{PIR_C}},\mathsf{ans})$\;
        If $k'=k$, return $v$, else return $\bot$\;
    }
    
\end{algorithm}